\documentclass[a4paper,10pt,reqno]{amsart}

\usepackage{amssymb,amsmath,amscd,amsthm,hyperref}
\usepackage{mathtools}


\title[Bounded Rationality from Elementary Computations]{Bounded rational decision-making from elementary computations that reduce uncertainty}
\author{Sebastian Gottwald, Daniel A. Braun}

\newtheorem{Theorem}{Theorem}[section]

\newtheorem{Proposition}[Theorem]{Proposition}
\theoremstyle{definition}
\newtheorem{Definition}[Theorem]{Definition}
\theoremstyle{remark}

\newtheorem*{Example}{Example}
\numberwithin{equation}{section}

\bibliographystyle{plain}

\begin{document}


\keywords{Uncertainty; entropy; divergence; majorization; decision-making; bounded rationality; limited resources; Bayesian inference}


\begin{abstract}
In its most basic form, decision-making can be viewed as a computational process that progressively eliminates alternatives, thereby reducing uncertainty. Such processes are generally costly, meaning that the amount of uncertainty that can be reduced is limited by the amount of available computational resources. Here, we introduce the notion of elementary computation based on a fundamental principle for probability transfers that reduce uncertainty. Elementary computations can be considered as the inverse of Pigou-Dalton transfers applied to probability distributions, closely related to the concepts of majorization,  T-transforms, and generalized entropies that induce a preorder on the space of probability distributions. As a consequence we can define resource cost functions that are order-preserving and therefore monotonic with respect to the uncertainty reduction. This leads to a comprehensive notion of decision-making processes with limited resources. Along the way, we prove several new results on majorization theory, as well as on entropy and divergence measures.
\end{abstract}

\thanks{This is a pre-print of an article published in Entropy 2019, 21(4), 375. The final authenticated version is available online at: https://www.mdpi.com/1099-4300/21/4/375}

\maketitle

\section{Introduction}

In rational decision theory, uncertainty may have multiple sources that ultimately share the commonality that they reflect a lack of knowledge on the part of the decision-maker about the environment. A paramount example is the perfectly rational decision-maker \cite{Neumann1944} that has a probabilistic model of the environment and chooses its actions so as to maximize the expected utility entailed by the different choices. When we consider bounded rational decision-makers \cite{Simon1955}, we may add another source of uncertainty arising from the decision-maker's limited processing capabilities, since the decision-maker will not only accept a single best choice, but will accept any satisficing option.
Today, bounded rationality is an active research topic that crosses multiple scientific fields such as economics, political science, decision theory, game theory, computer science, and neuroscience \cite{Russel1995,Ochs1995,Lipman1995,Aumann1997,Gigerenzer2001,Mattsson2002,Jones2003,Sims2003,Wolpert2006,Howes2009,Still2009,Tishby2011,Spiegler2011,Kappen2012,Burns2013,Ortega2013,Lewis2014,Acerbi2014,Gershman2015},
where uncertainty is one of the most important common denominators.

Uncertainty is often equated with Shannon entropy in information theory \cite{Shannon1948}, measuring the average number of yes/no-questions that have to be answered to resolve the uncertainty. Even though Shannon entropy has many desirable properties, there are plenty of alternative suggestions for entropy measures in the literature, known as generalized entropies, such as R\'{e}nyi entropy \cite{Renyi1961} or Tsallis entropy \cite{Tsallis1988}. Closely related to entropies are divergence measures, which express how probability distributions differ from a given reference distribution. If the reference distribution is uniform then divergence measures can be expressed in terms of entropy measures, which is why divergences can be viewed as generalizations of entropy, for example the Kullback-Leibler divergence \cite{Kullback1951} generalizing Shannon entropy. 

Here, we introduce the concept of elementary computation based on a slightly stronger notion of uncertainty than is expressed by Shannon entropy, or any other generalized entropy alone, but is equivalent to all of them combined. Equating decision-making with uncertainty reduction, this leads to a new comprehensive view of decision-making with limited resources. Our main contributions can be summarized as follows:

\begin{enumerate}
\item[$(i)$] Based on a fundamental concept of probability transfers related to the Pigou-Dalton principle of welfare economics \cite{Dalton1920}, we promote a generalized notion of uncertainty reduction of a probability distribution that we call \textit{elementary computation}. This leads to a natural definition of \textit{cost functions} that quantify the resource costs for uncertainty reduction necessary for decision-making. We generalize these concepts to arbitrary reference distributions. In particular, we define Pigou-Dalton-type transfers for probability distributions relative to a reference or prior distribution, which induce a preorder that is slightly stronger than Kullback-Leibler divergence, but is equivalent to the notion of divergence given by all $f$-divergences combined. We prove several new characterizations of the underlying concept, known as \textit{relative majorization}. 

\item[$(ii)$] An interesting property of cost functions is their behavior under \textit{coarse-graining}, which plays an important role in decision-making and formalizes the general notion of making abstractions. More precisely, if a decision in a set $\Omega$ is split up into two steps by partitioning $\Omega = \bigcup_i A_i$ and first deciding in the set of (coarse-grained) partitions $\{A_i\}_{i}$ and secondly choosing a fine-grained option inside the selected partition $A_i$, then it is an important question how the cost for the total decision-making process differs from the sum of the costs in each step. We show that $f$-divergences are superadditive with respect to coarse-graining, which means that decision-making costs can potentially be reduced by splitting up the decision into multiple steps. In this regard, we find evidence that the well-known property of Kullback-Leibler divergence of being additive under coarse-graining might be viewed as describing the minimal amount of processing cost that cannot be reduced by a more intelligent decision-making strategy.

\item[$(iii)$] We define \textit{bounded rational} decision-makers as decision-making processes that are optimizing a given utility function under a constraint on the cost function, or minimizing the cost function under a minimal requirement on expected utility. As a special case for Shannon-type information costs, we arrive at information-theoretic bounded rationality, which may form a normative baseline for bounded-optimal decision-making in the absence of process-dependent constraints. We show that bounded-optimal posteriors with informational costs trace a path through probability space that can itself be seen as an anytime decision-making process, where each step optimally trades off utility and processing costs.

\item[$(iv)$]  We show that \textit{Bayesian inference} can be seen as a decision-making process with limited resources given by the number of available datapoints.

\end{enumerate}

Section \ref{sec:1} deals with $(i)$ and $(ii)$, aiming at a general characterization of decision-making in terms of uncertainty reduction. Item $(iii)$ is covered in Section \ref{sec:bounded}, deriving information-theoretic bounded rationality as a special case. Section \ref{sec:example} illustrates the concepts with an example including item $(iv)$. Sections \ref{sec:discussion} and \ref{sec:conclusion} contain a general discussion and concluding remarks.

\bigskip

\section*{Notation}

Let $\mathbb R$ denote the real numbers, $\mathbb R_+ \coloneqq [0,\infty)$ the set of non-negative real numbers, and $\mathbb Q$ the rational numbers. We write $|A|$ for the number of elements contained in a countable set $A$, and $B\setminus A$ for the set difference, that is the set of elements in $B$ that are not in $A$. $\mathbb P_\Omega$ denotes the set of probability distributions on a set $\Omega$, in particular, any $p\in \mathbb P_\Omega$ is normalized so that $p(\Omega) = \mathbb E_p[1] = 1$. Random variables are denoted by capital letters $X,Y,Z$, while their explicit values are denoted by small letters $x,y,z$. For the probability distribution of a random variable $X$ we write $p(X)$, and $p(x)$ for the values of $p(X)$. Correspondingly, the expectation $\mathbb E[f(X)]$ is also written as $\mathbb E_{p(X)}[f(X)]$, $\mathbb E_{p(X)}[f]$, or $\mathbb E_p[f]$. We also write $\langle f\rangle_p := \frac{1}{n}\sum_{i=1}^n f(x_n) $, to denote the approximation of $\mathbb E_{p}[f]$ by an average over samples $\{x_1,\dots,x_n\}$ from $p\in\mathbb P_\Omega$.

\bigskip

\section{Decision-making with limited resources} \label{sec:1}

In this section, we develop the notion of a \textit{decision-making process with limited resources} following the simple assumption that any decision-making process 
\begin{itemize}
\item[$(i)$] reduces uncertainty
\item[$(ii)$] by spending resources.
\end{itemize}
Starting from an intuitive interpretation of \textit{uncertainty} and \textit{resource costs}, these concepts are refined incrementally until a precise definition of a decision-making process is given at the end of this section (Definition \ref{def:DM}) in terms of \textit{elementary computations}. Here, a decision-making process is a comprehensive term that describes all kinds of biological as well as artificial systems that are searching for solutions to given problems, for example a human decision-maker that burns calories while thinking, or a computer that uses electric energy to run an algorithm. However, \textit{resources} do not necessarily refer to a real consumable quantity but can also represent more explicit resources (like time) as a proxy, for example the number of binary comparisons in a search algorithm, the number of forward simulations in a reinforcement learning algorithm, the number of samples in a Monte Carlo algorithm, or, even more abstractly, they can express the limited availability of some source of information, like for example the number of datapoints that are available to an inference algorithm (see Section \ref{sec:example}).

\begin{figure}
\centering
\includegraphics[width=\textwidth]{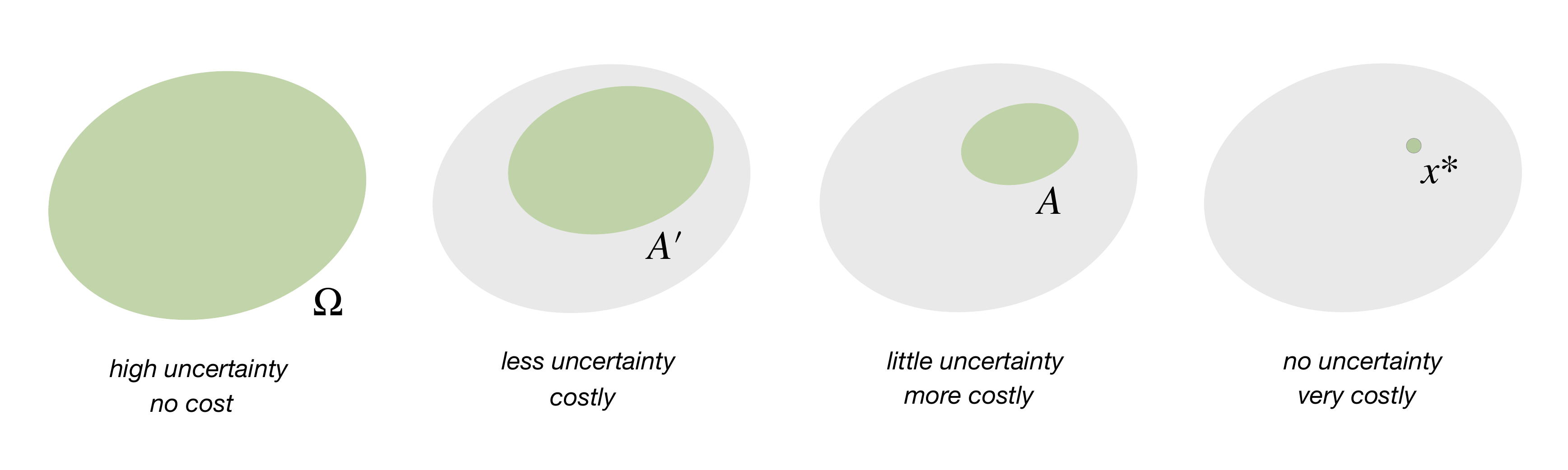}
\caption{Decision-making as search in a set of options. At the expense of more and more resources, the number of uncertain options is progressively reduced until $x^\ast$ is the only remaining option.}
\label{fig:1}
\end{figure}

\medskip
\subsection{Uncertainty reduction by eliminating options} \label{sec:elim}

In its most basic form, the concept of decision-making can be formalized as the process of looking for a decision $x\in \Omega$ in a discrete set of options $\Omega=\{x_1,\dots,x_N\}$. We say that a decision $x\in\Omega$ is \textit{certain}, if repeated queries of the decision-maker will result in the same decision, and it is \textit{uncertain}, if repeated queries can result in different decisions. Uncertainty reduction then corresponds to reducing the amount of uncertain options. Hence, a decision-making process that transitions from a space $\Omega$ of options to a strictly smaller subset $A\subsetneq\Omega$ reduces the amount of uncertain options from $N=|\Omega|$ to $N_A\coloneqq |A|<N$, with the possible goal to eventually find a single certain decision $x^\ast$. Such a process is generally costly, the more uncertainty is reduced the more resources it costs (Figure \ref{fig:1}). The explicit mapping between uncertainty reduction and resource cost depends on the details of the underlying process and on which explicit quantity is taken as the resource. For example, if the resource is given by time (or any monotone function of time), then a search algorithm that eliminates options sequentially until the target value is found (linear search) is less cost efficient than an algorithm that takes a sorted list and in each step removes half of the options by comparing the mid point to the target (logarithmic search). Abstractly, any real-valued function $C$ on the power set of $\Omega$ that satisfies $C(A')<C(A)$ whenever $A\subsetneq A'$ might be used as a cost function in the sense that $C(A)$ quantifies the expenses of reducing the uncertainty from $\Omega$ to $A\subset \Omega$.

In utility theory decision-making is modelled as an optimization process that maximizes a so-called \textit{utility function} $U:\Omega \to \mathbb R$ (which can itself be an \textit{expected} utility with respect to a probabilistic model of the environment, in the sense of von Neumann and Morgenstern \cite{Neumann1944}). A decision-maker that is optimizing a given utility function $U$ obtains a utility of $\frac{1}{N_A} \sum_{x\in A} U(x) \geq \frac{1}{N} \sum_{x\in \Omega} U(x)$ on average after reducing the amount of uncertain options from $N$ to $N_A<N$ (see Figure \ref{fig:2}). A decision-maker that completely reduces uncertainty by finding the optimum $x^\ast = \mathrm{argmax}_{x\in \Omega} U(x)$ is called \textit{rational} (w.l.o.g. we can assume that $x^\ast$ is unique, by redefining $\Omega$ in the case when it is not). Since uncertainty reduction generally comes with a cost, a utility optimizing decision-maker with limited resources,  correspondingly called \textit{bounded rational} (see Section \ref{sec:bounded}), in contrast will obtain only uncertain decisions from a subset $A\subset\Omega$. Such decision-makers seek satisfactory rather than optimal solutions, for example by taking the first option that satisfies a minimal utility requirement, which Herbert Simon calls a \textit{satisficing} solution \cite{Simon1955}.

Summarizing, we conclude that a decision-making process with decision space $\Omega$ that successively eliminates options can be represented by a mapping $\phi$ between subsets of $\Omega$, together with a cost function $C$ that quantifies the total expenses of arriving at a given subset, 
\begin{equation}\label{DM_elim0}
\Omega \longrightarrow \cdots \longrightarrow A' \longrightarrow \phi(A')\longrightarrow \cdots \longrightarrow A
\end{equation}

\vspace{-4pt}
\noindent such that
\begin{equation}\label{DM_elim}
\Omega \supset  A' \supset \phi(A') \supset A,  \qquad 0 = C(\Omega) < C(A') < C(\phi(A')) < C(A) \, ,
\end{equation}

\noindent For example, a rational decision-maker can afford $C(\{x^\ast\})$, whereas a decision-maker with limited resources can typically only afford uncertainty reduction with cost $C(A)<C(\{x^\ast\})$.

\begin{figure}
\centering
\includegraphics[width=\textwidth]{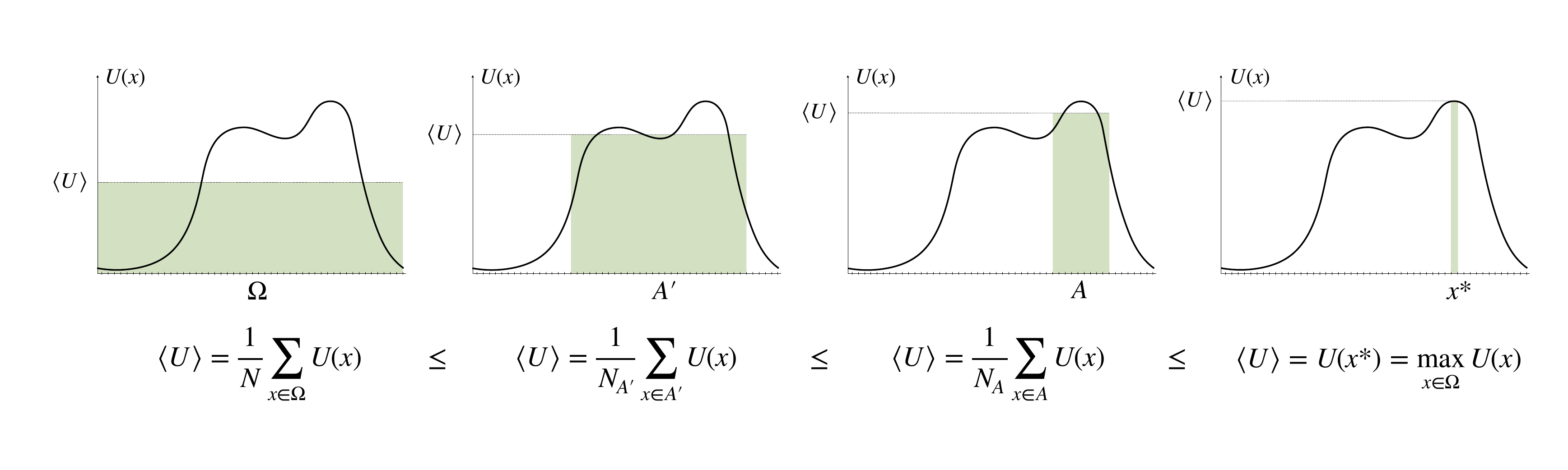}

\vspace{-8pt}
\caption{Decision-making as utility optimization process.}
\label{fig:2}
\end{figure}

From a probabilistic perspective, a decision-making process as described above is a transition from a uniform probability distribution over $N$ options to a uniform probability distribution over $N'<N$ options, that converges to the Dirac measure $\delta_{x^\ast}$ centered at $x^\ast$ in the fully rational limit. From this point of view, the restriction to uniform distributions is artificial. A decision-maker that is uncertain about the optimal decision $x^\ast$ might indeed have a bias towards a subset $A$ without completely excluding other options (the ones in $A^c = \Omega{\setminus} A$), so that the behavior must be properly described by a probability distribution $p\in\mathbb P_\Omega$. Therefore, in the following section, we extend \eqref{DM_elim0} and \eqref{DM_elim} to transitions between probability distributions. In particular, we must replace the power set of $\Omega$ by the space of probability distributions on $\Omega$, denoted by $\mathbb P_\Omega$.

\bigskip
\subsection{Probabilistic decision-making} \label{sec:probDM}

Let $\Omega$ be a discrete decision space of $N\,{=}\,|\Omega|{<}\,\infty$ options, so that $\mathbb P_\Omega$ consists of discrete distributions $p$, often represented by probability vectors $p=(p_1,\dots, p_N)$. However, many of the concepts presented in this and the following section can be generalized to the continuous case \cite{Marshall2011,Joe1990}.

Intuitively, the uncertainty contained in a distribution $p\in\mathbb P_\Omega$ is related to the relative inequality of its entries, the more similar its entries are, the higher the uncertainty. This means that uncertainty is increased by moving some probability weight from a more likely option to a less likely option. It turns out that this simple idea leads to a concept widely known as \textit{majorization} \cite{HLP1934,Arnold1987,Pecaric1992,Bhatia1997,Marshall2011,Arnold2018}, which has roots in the economic literature of the early 19th century \cite{Lorenz1905,Pigou1912,Dalton1920}, where it was introduced to describe income inequality, later known as the \textit{Pigou-Dalton Principle of Transfers}. Here, the operation of moving weight from a more likely to a less likely option corresponds to the transfer of income from one individual of a population to a relatively poorer individual (also known as a \textit{Robin Hood operation} \cite{Arnold1987}). Since a decision-making process can be viewed as a sequence of uncertainty reducing computations, we call the inverse of such a Pigou-Dalton transfer an \textit{elementary computation}.

\begin{Definition}[Elementary computation]\label{def:pigoudalton}
A transformation on $\mathbb P_\Omega$ of the form 
\begin{equation}\label{eq:pigoudalton}
T_\varepsilon: p\mapsto (p_1, \dots,  p_m+\varepsilon, \dots, p_{n} - \varepsilon, \dots, p_N) \, ,
\end{equation}

\noindent where $m,n$ are such that $p_m \leq p_n$, and $0< \varepsilon\leq \frac{p_n{-}p_m}{2}$, is called a Pigou-Dalton transfer (see Figure \ref{fig:1c}). We call its inverse $T^{-1}_\varepsilon$ an \textit{elementary computation}.
\end{Definition}

Since making two probability values more similar or more dissimilar are the only two possibilities to minimally transform a probability distribution, elementary computations are the most basic principle of how uncertainty is reduced. Hence, we conclude that a distribution $p'$ has more uncertainty than a distribution $p$ if and only if $p$ can be obtained from $p'$ by finitely many elementary computations (and permutations, which are not considered an elementary computation due to the choice of $\varepsilon$).

\begin{figure}
\centering
\includegraphics[width=\textwidth]{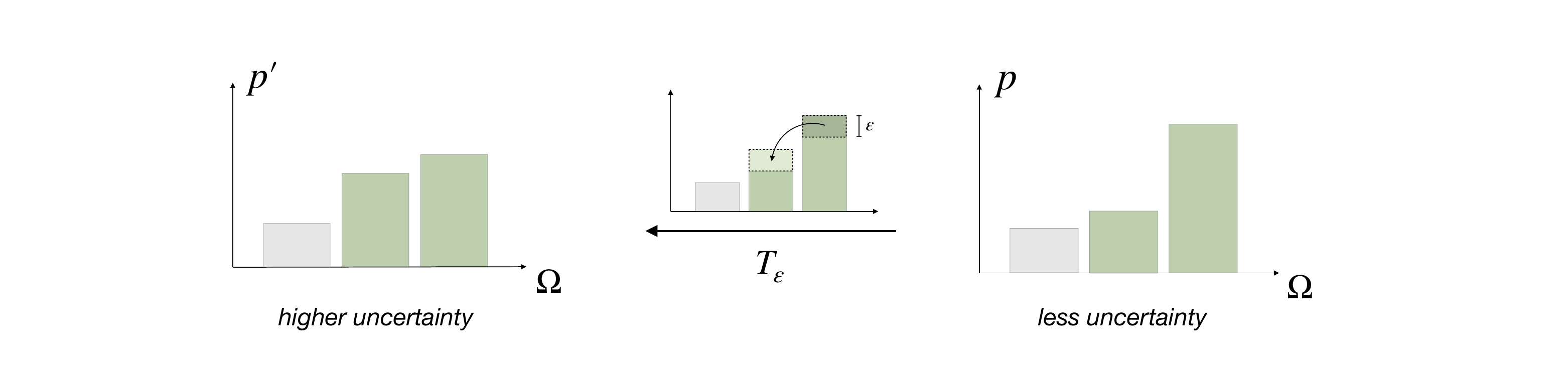}

\vspace{-2pt}
\caption{A Pigou-Dalton transfer as given by Equation \eqref{eq:pigoudalton}. The transfer of probability from a more likely to a less likely option increases uncertainty.}
\label{fig:1c}
\end{figure}

\begin{Definition}[Uncertainty] \label{def:uncertainty}
We say that $p'\in\mathbb P_\Omega$ contains more uncertainty than $p\in\mathbb P_\Omega$, denoted by 
\begin{equation}
p' \prec p \, ,
\end{equation}

\noindent if and only if $p$ can be obtained from $p'$ by a finite number of elementary computations and permutations.
\end{Definition}

\noindent Note that, mathematically this defines a \textit{preorder} on $\mathbb P_\Omega$, i.e. a reflexive ($p\prec p$ for all $p\in\mathbb P_\Omega$) and transitive (if $p''\prec p'$, $p'\prec p$ then $p''\prec p$ for all $p,p',p''\in\mathbb P_\Omega$) binary relation.

In the literature, there are different names for the relation between $p$ and $p'$ expressed by Definition \ref{def:uncertainty}, for example $p'$ is called \textit{more mixed} than $p$ \cite{Ruch1976}, \textit{more disordered} than $p$ \cite{Rossignoli2004}, \textit{more chaotic} than $p$ \cite{Bhatia1997}, or an \textit{average} of $p$ \cite{HLP1934}. Most commonly however, $p$ is said to \textit{majorize} $p'$, which started with the early influences of Muirhead \cite{Muirhead1902}, and Hardy, Littlewood, and P\'{o}lya \cite{HLP1934} and was developed by many authors into the field of majorization theory (a standard reference is by Marshall, Olkin, and Arnold \cite{Marshall2011}), with far reaching applications until today, especially in nonequilibrium thermodynamics and quantum information theory \cite{Brandao2013,Horodecki2013,Gour2015}. 

There are plenty of equivalent (arguably less intuitive) characterizations of $p\prec p'$, some of which we are summarizing below. However, one characterization makes use of a concept very closely related to Pigou-Dalton transfers, known as \textit{T-transforms} \cite{Bhatia1997,Marshall2011}, which expresses the fact that moving some weight from a more likely option to a less likely option is equivalent to taking (weighted) averages of the two probability values. More precisely, a T-transform is a linear operator on $\mathbb P_\Omega$ with a matrix of the form $T = (1-\lambda) \mathbb I + \lambda \Pi$, where $\mathbb I$ denotes the identity matrix on $\mathbb R^N$, $\Pi$ denotes a permutation matrix of two elements, and $0\leq \lambda\leq 1$. If $\Pi$ permutes $p_m$ and $p_n$, then $(Tp)_k = p_k$ for all $k\not \in \{m,n\}$, and 
\begin{equation} \label{Ttransformexplicit}
(Tp)_m = (1-\lambda) p_m + \lambda p_n \, , \quad (Tp)_n = \lambda p_m + (1-\lambda) p_n \, .
\end{equation}

\noindent Hence, a T-transform considers any two probability values $p_m$ and $p_n$ of a given $p\in\mathbb P_\Omega$, calculates their weighted averages with weights $(1-\lambda,\lambda)$ and $(\lambda,1-\lambda)$, and replaces the original values with these averages. From \eqref{Ttransformexplicit} it follows immediately that a T-transform with parameter $0< \lambda \leq \frac{1}{2}$ and a permutation $\Pi$ of $p_m,p_n$ with $p_m\leq p_n$ is a Pigou-Dalton transfer with $\varepsilon = (p_n-p_m)\lambda$. Also allowing $\frac{1}{2}\leq\lambda\leq 1$ means that T-transfers include permutations, in particular, $p'\prec p$ if and only if $p'$ can be derived from $p$ by successive applications of finitely many T-transforms 

Due to a classic result by Hardy, Littlewood and P\'{o}lya \cite[p.49]{HLP1934}, this characterization can be stated in an even simpler form by using \textit{doubly stochastic matrices}, i.e. matrices $A=(A_{ij})_{i,j}$ with $A_{ij}\geq 0$ and $\sum_i A_{ij} = 1  = \sum_j A_{ij}$ for all $i,j$. By writing $xA \coloneqq A^T x$ for all $x\in\mathbb R^N$, and $e \coloneqq (1,\dots,1)$, these conditions are often stated as 
\begin{equation}\label{def:stochmat}
A_{ij} \geq 0 \, , \quad Ae = e \, , \quad eA = e \, .
\end{equation}

\noindent Note that doubly stochastic matrices can be viewed as generalizations of T-transforms in the sense that a T-transform takes an average of two entries, whereas if $p' = pA$ with a doubly stochastic matrix $A$, then $p'_j=\sum_i A_{ij} p_i$ is a convex combination, or a weighted average, of $p$ with coefficients $(A_{ij})_i$ for each $j$. This is also, why $p'$ is then called \textit{more mixed} than $p$ \cite{Ruch1976}. Therefore, similar to T-transforms, we might expect that if $p'$ is the result of an application of a doubly stochastic matrix, $p' = pA$, then $p'$ is an average of $p$ and therefore contains more uncertainty than $p$. This is exactly what is expressed by characterization $(iii)$ in the following theorem. A similar characterization of $p'\prec p$ is that $p'$ must be given by a convex combination of permutations of the elements of $p$ (see property $(iv)$ below). 

Without having the concept of majorization, Schur proved that functions of the form  $ p\mapsto \sum_i f(p_i)$ with a convex function $f$ are monotone with respect to the application of a doubly stochastic matrix \cite{Schur1923} (see property $(v)$ below). Functions of this form are an important class of cost functions for probabilistic decision-makers, as we will discuss in Example \ref{ex:genEnt}.

\begin{figure}
\centering
\includegraphics[width=.8\textwidth]{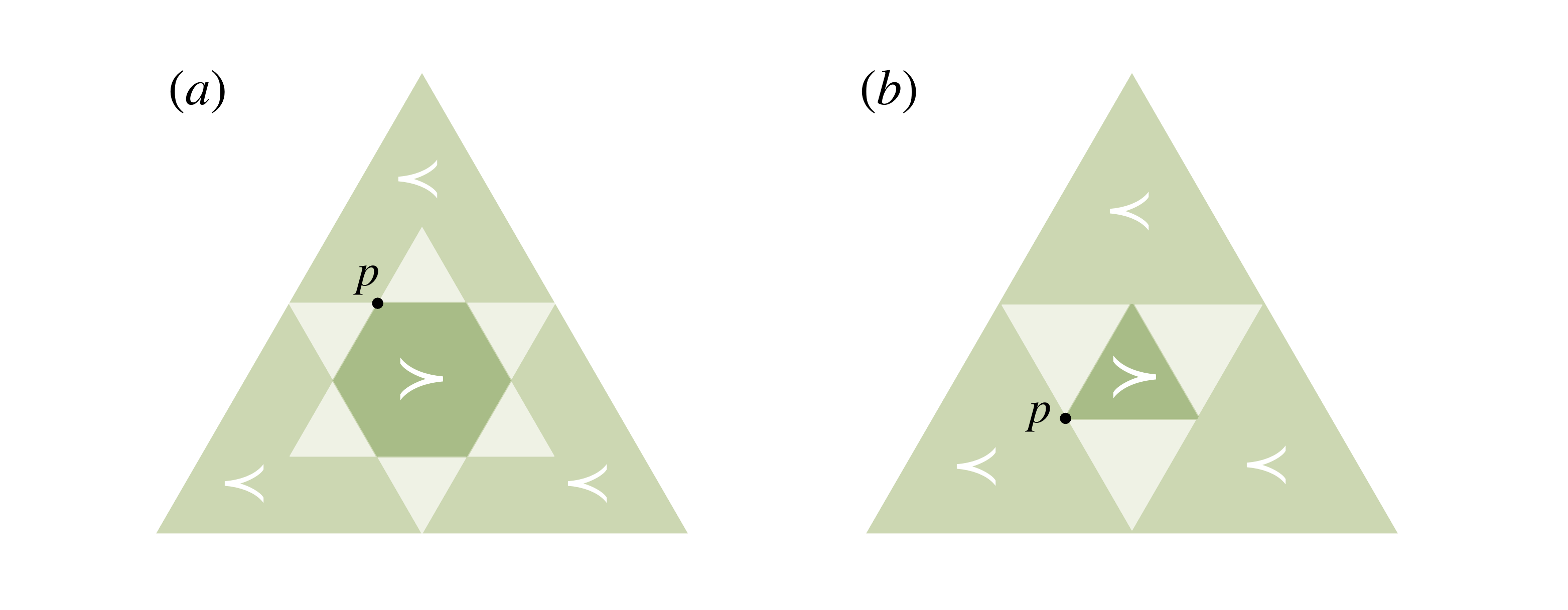}

\vspace{-5pt}
\caption{Comparability of probability distributions in $N=3$. The region in the center consists of all $p'$ that are majorized by $p$, i.e. $p\succ p'$, whereas the outer region consists of all $p'$ that majorize $p$, $p\prec p'$. The bright regions are not comparable to $p$. $(a)$ $p=(\frac{1}{3},\frac{1}{2},\frac{1}{6})$, $(b)$ $p=(\frac{1}{2},\frac{1}{4},\frac{1}{4})$.}
\label{fig:15}
\end{figure}

\begin{Theorem}[Characterizations of $p'\prec p$ \cite{Marshall2011}] \label{thm:characterization}
For $p,p'\in\mathbb P_\Omega$, the following are equivalent:
\begin{enumerate}
\item[$(i)$] $p'\prec p$, i.e. $p'$ contains more uncertainty than $p$ (Definition \ref{def:uncertainty}) \\[-4pt]
\item[$(ii)$] $p'$ is the result of finitely many T-transforms applied to $p$\\[-4pt]
\item[$(iii)$] $p' = pA$ for a doubly stochastic matrix $A$\\[-4pt]
\item[$(iv)$] $p' = \sum_{k=1}^K \theta_k \Pi_k(p)$ where $K\in\mathbb N$, $\sum_{k=1}^K \theta_k = 1$, $\theta_k\geq 0$, and $\Pi_k$ is a permutation for all $k\in\{1,\dots,K\}$\\[-4pt]
\item[$(v)$] $\sum_{i=1}^N f(p'_i) \leq \sum_{i=1}^N f(p_i)$ for all continuous convex functions $f$\\[-4pt]
\item[$(vi)$] $\sum_{i=1}^k (p'_i)^{\downarrow} \leq \sum_{i=1}^k p_i^{\downarrow}$ for all $k\in \{1,\dots,N{-}1\}$, where $p^\downarrow$ denotes the decreasing rearrangement of $p$.
\end{enumerate}
\end{Theorem}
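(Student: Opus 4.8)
The plan is to establish all six equivalences through a single cycle of implications, namely $(i)\Rightarrow(ii)\Rightarrow(iv)\Rightarrow(iii)\Rightarrow(v)\Rightarrow(vi)\Rightarrow(i)$. Five of the six arrows are short, and the genuine work lies in closing the loop with $(vi)\Rightarrow(i)$, the constructive direction.

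For $(i)\Rightarrow(ii)$ I would use the identification already made in the text: a Pigou--Dalton transfer $T_\varepsilon$ with $p_m\le p_n$ is the $T$-transform with the transposition $\Pi$ of the indices $m,n$ and parameter $\lambda=\varepsilon/(p_n-p_m)\in(0,\tfrac12]$, and a general permutation is a product of transpositions, each of which is the $T$-transform with $\lambda=1$; since by Definition \ref{def:uncertainty} $p'$ arises from $p$ by finitely many such operations, $(ii)$ follows. For $(ii)\Rightarrow(iv)$, write $p'=p\,T_1\cdots T_r$ with $T_\ell=(1-\lambda_\ell)\mathbb I+\lambda_\ell\Pi_\ell$ and expand the product: every term is a product of some of the $\Pi_\ell$ (hence a permutation matrix) with a coefficient that is a product of nonnegative factors $\lambda_\ell$ or $1-\lambda_\ell$, so $T_1\cdots T_r=\sum_k\theta_k\Pi_k$ is a convex combination of permutation matrices, giving $p'=\sum_k\theta_k\Pi_k(p)$. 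Then $(iv)\Rightarrow(iii)$ is immediate, because a convex combination of permutation matrices satisfies \eqref{def:stochmat}. For $(iii)\Rightarrow(v)$ I would apply Jensen's inequality coordinatewise: if $p'=pA$ then each $p'_j=\sum_i A_{ij}p_i$ is a convex combination (the $j$-th column of $A$ sums to $1$), so $f(p'_j)\le\sum_i A_{ij}f(p_i)$ for convex $f$, and summing over $j$ and using that the rows of $A$ also sum to $1$ yields $\sum_j f(p'_j)\le\sum_i f(p_i)$. The step $(v)\Rightarrow(vi)$ uses well-chosen test functions: for each $k$ apply $(v)$ to the continuous convex function $t\mapsto\max(t-p_k^\downarrow,0)$, whose sum over the entries of $p$ equals $\sum_{i=1}^k p_i^\downarrow-kp_k^\downarrow$ exactly while its sum over the entries of $p'$ is at least $\sum_{i=1}^k (p'_i)^\downarrow-kp_k^\downarrow$; cancelling $kp_k^\downarrow$ gives the partial-sum inequality.

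The main obstacle is $(vi)\Rightarrow(i)$. Note first that both vectors sum to $1$, so the partial sums in $(vi)$ agree at $k=N$ and one may work with the sorted vectors $p^\downarrow$ and $(p')^\downarrow$. If they differ, I would let $j$ be the largest index with $p^\downarrow_j>(p'_j)^\downarrow$ and $k>j$ the smallest index with $p^\downarrow_k<(p'_k)^\downarrow$ --- both exist because the partial sums start equal, remain ordered, and end equal --- and observe that $p^\downarrow_i=(p'_i)^\downarrow$ for $j<i<k$ by the extremality of $j$ and $k$, and that $p^\downarrow_j>(p'_j)^\downarrow\ge(p'_k)^\downarrow>p^\downarrow_k$. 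Setting $\varepsilon=\min\bigl(p^\downarrow_j-(p'_j)^\downarrow,\ (p'_k)^\downarrow-p^\downarrow_k\bigr)$, the two quantities in the minimum add up to at most $p^\downarrow_j-p^\downarrow_k$, so $\varepsilon\le\tfrac12(p^\downarrow_j-p^\downarrow_k)$ and the transfer of $\varepsilon$ from coordinate $j$ to coordinate $k$ is an admissible Pigou--Dalton transfer; one checks that the resulting vector is still sorted, still satisfies $(vi)$ against $(p')^\downarrow$ (the affected partial sums decrease by $\varepsilon$ and had slack at least $\varepsilon$ precisely because of how $j$, $k$, and $\varepsilon$ were chosen), and agrees with $(p')^\downarrow$ in one more coordinate. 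Iterating reaches $(p')^\downarrow$ in finitely many steps, and prepending and appending the sorting permutations for $p$ and $p'$ exhibits $p'$ as the image of $p$ under finitely many Pigou--Dalton transfers and permutations, i.e.\ $p'\prec p$. The delicate points are exactly that the chosen $\varepsilon$ simultaneously keeps the transfer admissible and preserves $(vi)$, and that the process terminates; everything else in the theorem is essentially bookkeeping.
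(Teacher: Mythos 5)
Your proof is correct. It is worth noting that the paper does not actually prove Theorem \ref{thm:characterization}: the result is quoted from Marshall--Olkin--Arnold, and the surrounding text only attributes the individual implications to the classical literature (Muirhead and Hardy--Littlewood--P\'olya for $(ii)\Leftrightarrow(iii)\Leftrightarrow(vi)$, Schur for $(iii)\Rightarrow(v)$, Karamata and HLP for $(v)\Rightarrow(iii)$, Birkhoff--von Neumann for $(iii)\Leftrightarrow(iv)$). Your single cycle $(i)\Rightarrow(ii)\Rightarrow(iv)\Rightarrow(iii)\Rightarrow(v)\Rightarrow(vi)\Rightarrow(i)$ is a more economical, self-contained organization: by passing from $(ii)$ to $(iii)$ through the expansion of a product of T-transforms into a convex combination of permutation matrices, you only ever need the trivial direction of Birkhoff--von Neumann, and by closing the loop with $(v)\Rightarrow(vi)\Rightarrow(i)$ via the angle functions $t\mapsto(t-p_k^\downarrow)_+$ and the constructive Robin-Hood argument (which is exactly HLP's T-transform lemma, Lemma 2.B.1 in \cite{Marshall2011}), you bypass the Karamata implication $(v)\Rightarrow(iii)$ entirely. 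The price is that all the content is concentrated in $(vi)\Rightarrow(i)$, and the two points you flag there as delicate do check out: for $j\le l<k$ the slack satisfies $\sum_{i=1}^{l}\bigl(p_i^\downarrow-(p'_i)^\downarrow\bigr)\ge p_j^\downarrow-(p'_j)^\downarrow\ge\varepsilon$ because the partial-sum inequality holds at $l=j-1$ and the intermediate coordinates coincide, while the choice of $\varepsilon$ as a minimum forces agreement in at least one new coordinate per step, so the process stops after at most $N$ transfers. The only place where you are terser than you should be is the existence of the index $k>j$: the clean justification is that the largest index at which $p^\downarrow$ and $(p')^\downarrow$ differ must satisfy $p^\downarrow<(p')^\downarrow$ there (combine the partial-sum inequality just below that index with equality of the total sums), so it cannot equal $j$ and must lie above it.
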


As argued above, the equivalence between $(i)$ and $(ii)$ is straight-forward. The equivalences between $(ii)$, $(iii)$, and $(vi)$ are due to Muirhead \cite{Muirhead1902} and Hardy, Littlewood, and P\'{o}lya \cite{HLP1934}. The implication $(v)\Rightarrow (iii)$ is due to Karamata \cite{Karamata1932} and Hardy, Littlewood, and P\'{o}lya \cite{HLP1929}, whereas $(iii)\Rightarrow (v)$ goes back to Schur \cite{Schur1923}. Mathematically, $(iv)$ means that $p'$ belongs to the convex hull of all permutations of the entries of $p$, and the equivalence $(iii)\Leftrightarrow (iv)$ is known as the Birkhoff-von Neumann theorem. Here, we have stated all relations for probability vectors $p\in\mathbb P_\Omega$, even though they are usually stated for all $p,p'\in\mathbb R^N$ with the additional requirement that $\sum_{i=1}^N p_i = \sum_{i=1}^N p'_i$.

Condition $(vi)$ is the classical and most commonly used definition of majorization \cite{Lorenz1905,HLP1934,Marshall2011}, since it is often the easiest to check in practical examples. For example, from $(vi)$ it immediately follows that uniform distributions over $N$ options contain more uncertainty than uniform distributions over $N'<N$ options, since we have $\sum_{i=1}^k \frac{1}{N} = \frac{k}{N} \leqslant \frac{k}{N'} = \sum_{i=1}^k \frac{1}{N'}$ for all $k<N$, i.e. for $N\geq 3$ it follows that
\begin{equation} \label{precuniform}
\big(\tfrac{1}{N},\dots,\tfrac{1}{N}\big) \prec \big(\tfrac{1}{N-1},\dots,\tfrac{1}{N-1},0\big) \prec \big(\tfrac{1}{2},\tfrac{1}{2},0,\dots,0\big) \prec \big(1,0\dots,0\big) \, .
\end{equation}
In particular, if $A\subset A'\subset \Omega$, then the uniform distribution over $A$ contains less uncertainty than the uniform distribution over $A'$, which shows that the notion of uncertainty introduced in Definition \ref{def:uncertainty} is indeed a generalizatin of the notion of uncertainty given by the number of uncertain options introduced in the previous section. 

Note that, $\prec$ only being a preorder on $\mathbb P_\Omega$, in general, two distributions $p',p\in\mathbb P_\Omega$ are not necessarily comparable, i.e. we can have both $p'\not\prec p$ and $p\not\prec p'$. In Figure \ref{fig:15}, we visualize the regions of all comparable distributions for two exemplary distributions on a three-dimensional decision space ($N=3$), represented on the two-dimensional simplex of probability vectors $p=(p_1,p_2,p_3)$. For example, 
\[
p=\big(\tfrac{1}{2},\tfrac{1}{4},\tfrac{1}{4}\big)\, , \quad p'=\big(\tfrac{2}{5},\tfrac{2}{5},\tfrac{1}{5}\big)
\] 
can not be compared under $\prec$, since $\frac{1}{2}>\frac{2}{5}$, but $\frac{3}{4}<\frac{4}{5}$. 

\textit{Cost functions} can now be generalized to probabilistic decision-making by noting that the property $C(A')< C(A)$ whenever $A\subsetneq A'$ in \eqref{DM_elim} means that $C$ is strictly monotonic with respect to the preorder given by set inclusion. 

\begin{Definition}[Cost functions on $\mathbb P_\Omega$] \label{def:costfunctions} We say that a function $C:\mathbb P_\Omega\to\mathbb R_+$ is a cost function, if it is strictly monotonically increasing with respect to the preorder $\prec$, i.e. if 
\begin{equation} \label{schurconvexity}
p'\prec p \quad \Rightarrow \quad C(p') \leq C(p) \, ,
\end{equation}

\noindent with equality only if $p$ and $p'$ are equivalent, $p'\sim p$, which is defined as $p'\prec p$ and $p\prec p'$. Moreover, for a parametrized family of posteriors $(p_r)_{r\in I}$, we say that $r$ is a \textit{resource parameter with respect to a cost function $C$}, if the mapping $I\mapsto \mathbb R_+, r\mapsto C(p_r)$ is strictly monotonically increasing.
\end{Definition}

\noindent Since monotonic functions with respect to majorization were first studied by Schur \cite{Schur1923}, functions with this property are usually called (strictly) \textit{Schur-convex} \cite[Ch.\,3]{Marshall2011}.

\begin{figure}
\centering
\noindent\makebox[1.0\textwidth]{\includegraphics[width = 1.1\textwidth]{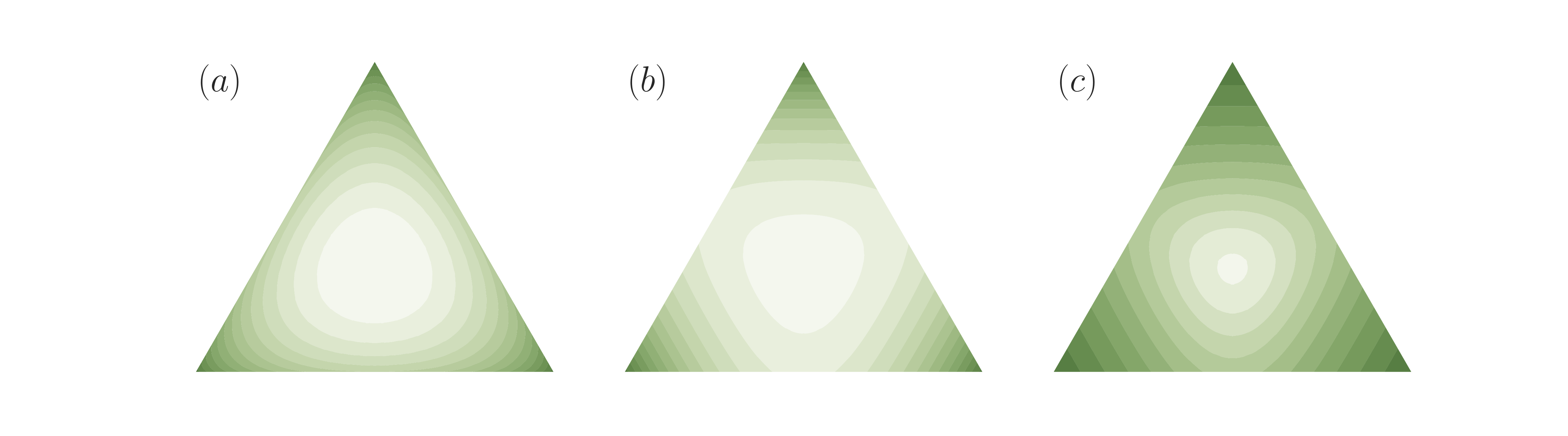}} 

\vspace{-8pt}
\caption{Examples of cost functions for decision spaces with three elements ($N=3$): $(a)$ Shannon entropy, $(b)$ Tsallis entropy of order $\alpha = 4$, $(c)$ R\'{e}nyi entropy of order $\alpha=-3.5$.}
\label{fig:12}
\end{figure}

\begin{Example}[Generalized entropies] \label{ex:genEnt}
{\normalfont
From $(v)$ in Theorem \ref{thm:characterization} it follows that functions of the form
\begin{equation}\label{genEnt}
C(p) =  \sum_{i=1}^N f(p_i) ,
\end{equation}

\noindent where $f$ is strictly convex, are examples of cost functions. Since many entropy measures used in the literature can be seen to be special cases of \eqref{genEnt} (with a concave $f$), functions of this form are often called \textit{generalized entropies} \cite{Canosa2002}. In particular, for the choice $f(t)=t\log t$, we have $C(p) = -H(p)$, where $H(p)$ denotes the \textit{Shannon entropy} of $p$. Thus, if $p'$ contains more uncertainty than $p$ in the sense of Definition \ref{def:uncertainty} ($p'\prec p$) then the Shannon entropy of $p'$ is larger than the Shannon entropy of $p$ and therefore $p'$ contains also more uncertainty in the sense of classical information theory than $p$. Similarly, for $f(t)= -\log(t)$ we obtain the (negative) \textit{Burg entropy}, and for functions of the form $f(t)=\pm t^\alpha$ for $\alpha\in\mathbb R {\setminus} \{0,1\}$ we get the (negative) \textit{Tsallis entropy}, where the sign is chosen depending on $\alpha$ such that $f$ is convex (see e.g. \cite{Gorban2010} for more examples). Moreover, the composition of any (strictly) monotonically increasing function $g$ with \eqref{genEnt} generates another class of cost functions, which contains for example the (negative) \textit{R\'{e}nyi entropy} \cite{Renyi1961}. Note also that entropies of the form \eqref{genEnt} are special cases of \textit{Csisz\'{a}r's \textit{f}-divergences} \cite{Csiszar1972} for uniform reference distributions (see Example \ref{ex:genDiv} below). In Figure \ref{fig:12}, several examples of cost functions are shown for $N=3$. In this case, the 2-dimensional probability simplex $\mathbb P_\Omega$ is given by the triangle in $\mathbb R^3$ with edges $(1,0,0)$, $(0,1,0)$, and $(0,0,1)$. Cost functions are visualized in terms of their level sets.

We prove in Proposition \ref{prop:superadditivity} in the appendix that \textit{all cost functions of the form \eqref{genEnt} are superadditive with respect to coarse-graining}. This seems to be a new result and an improvement upon the fact that generalized entropies (and $f$-divergences) satisfy \textit{information monotonicity} \cite{Amari2009}. More precisely, if a decision in $\Omega$, represented by a random variable $Z$, is split up into two steps by partitioning $\Omega=\bigcup_{i\in I} A_i$ and first deciding about the partition $i\in I$, correspondingly described by a random variable $X$ with values in $I$, and then choosing an option inside of the selected partition $A_i$, represented by a random variable $Y$, i.e. $Z=(X,Y)$, then 
\begin{equation}\label{superadditivity}
C(Z) \geq C(X) + C(Y|X) \, ,
\end{equation}

\noindent where $C(X)\coloneqq C(p(X))$ and $C(Y|X)\coloneqq \mathbb E_{p(X)}[C(p(Y|X))]$. For symmetric cost functions (such as \eqref{genEnt}) this is equivalent to
\begin{equation}\label{superadditivitysymmetric}
C(p_1,\dots,p_N) \geq C(p_1+p_2,p_3,\dots,p_N) + (p_1{+}p_2)\, C(\tfrac{p_1}{p_1+p_2},\tfrac{p_2}{p_1+p_2}) \, .
\end{equation}

\noindent The case of equality in \eqref{superadditivity} and \eqref{superadditivitysymmetric} (see Figure \ref{fig:4a}) is sometimes called  \textit{separability} \cite{Khinchin1957}, \textit{strong additivity} \cite{Csiszar2008}, or \textit{recursivity} \cite{Aczel1974}, and it is often used to characterize Shannon entropy \cite{Faddeev1956,Tverberg1958,Kendall1964,Lee1964,Renyi1961,Aczel1969}. In fact, we also show in the appendix (Proposition \ref{prop:charshannon}) that \textit{cost functions $C$ that are additive under coarse-graining are proportional to the negative Shannon entropy $-H$}. See also Example \ref{ex:genDiv} in the next section, where we discuss the generalization to arbitrary reference distributions. 
}
\end{Example}

We can now refine the notion of a decision-making process introduced in the previous section as a mapping $\phi$ together with a cost function $C$ satisfying \eqref{DM_elim}. Instead of simply mapping from sets $A'$ to smaller subsets $A\subsetneq A'$ by successively eliminating options, we now allow $\phi$ to be a mapping between probability distributions such that $\phi(p)$ can be obtained from $p$ by a finite number of elementary computations (without permutations), and we require $C$ to be a cost function on $\mathbb P_\Omega$, so that
\begin{equation}\label{DM_prob}
p\precnsim \phi(p), \quad C(p) < C(\phi(p)) \qquad \forall p\in\mathbb P_\Omega \, .
\end{equation}

\begin{figure}
\centering
\includegraphics[width=\textwidth]{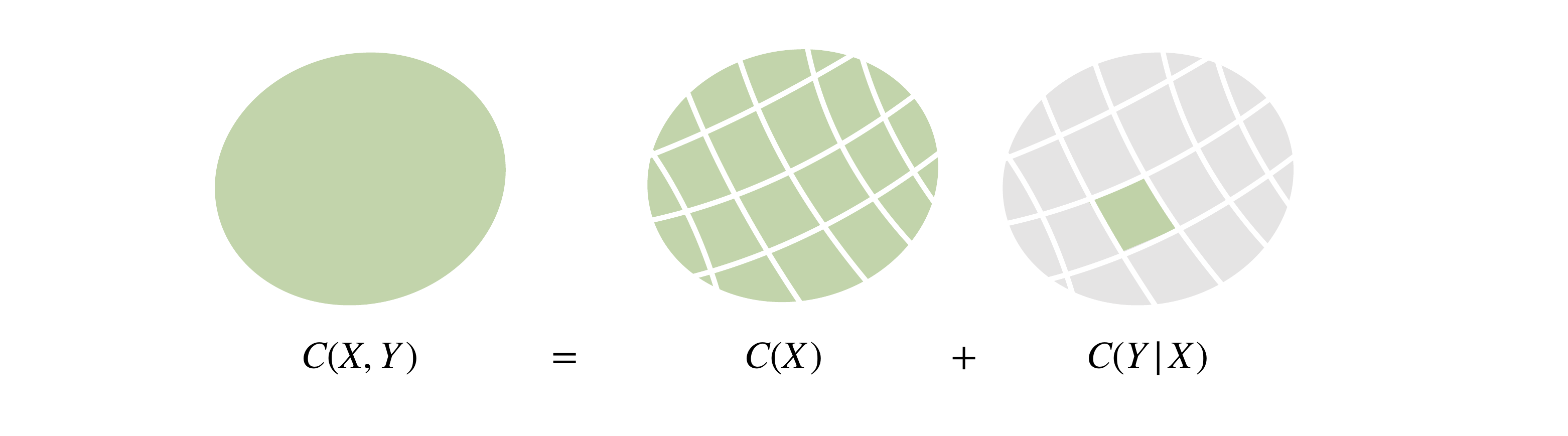}

\vspace{-5pt}
\caption{Additivity under coarse-graining. If the cost for $Z=(X,Y)$ is the sum of the costs for $X$ and the cost for $Y$ given $X$, then the cost function is proportional to Shannon entropy.}
\label{fig:4a}
\end{figure}

\noindent Here, $C(p)$ quantifies the total costs of arriving at a distribution $p$, and $p'\precnsim p$ means that $p'\prec p$ and $p\not\prec p'$. In other words, a decision-making process can be viewed as traversing probability space by moving pieces of probability from one option to another option such that uncertainty is reduced.

Up to now we have ignored one important property of a decision-making process, the distribution $q$ with \textit{minimal cost}, i.e.~satisfying $C(q)\leq C(p)$ for all $p$, which must be identified with the initial distribution of a decision-making process with cost function $C$. As one might expect (see Figure \ref{fig:12}), it turns out that all cost functions according to Definition \ref{def:costfunctions} have the same minimal element.


\begin{Proposition}[Uniform distributions are minimal] \label{prop:unifminimal_main} The uniform distribution $(\tfrac{1}{N},\dots,\tfrac{1}{N})$ is the unique minimal element in $\mathbb P_\Omega$ with respect to $\prec$, i.e.

\vspace{-5pt}
\begin{equation}\label{unifminimal_main}
\big(\tfrac{1}{N},\dots,\tfrac{1}{N} \big) \, \prec \, p \qquad \forall p\in\mathbb P_\Omega \, .
\end{equation}

\end{Proposition}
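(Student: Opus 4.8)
The plan is to use characterization $(iii)$ of Theorem~\ref{thm:characterization}, which reduces the claim to exhibiting a single doubly stochastic matrix. Let $J$ denote the $N\times N$ matrix all of whose entries equal $\tfrac1N$. Then $J$ satisfies \eqref{def:stochmat}: its entries are non-negative, and each row sum and each column sum equals $N\cdot\tfrac1N=1$. Moreover, for any $p\in\mathbb P_\Omega$ one computes $(pJ)_j=\sum_{i=1}^N J_{ij}\,p_i=\tfrac1N\sum_{i=1}^N p_i=\tfrac1N$ for every $j$, so $pJ=(\tfrac1N,\dots,\tfrac1N)$. By the equivalence $(iii)\Leftrightarrow(i)$ in Theorem~\ref{thm:characterization} this yields $(\tfrac1N,\dots,\tfrac1N)\prec p$ for all $p$, i.e.\ \eqref{unifminimal_main}. (Equivalently, one could verify $(vi)$ directly: $\sum_{i=1}^k(\tfrac1N)^{\downarrow}=\tfrac kN$, whereas the sum of the $k$ largest entries of $p$ is at least $\tfrac kN\sum_i p_i=\tfrac kN$, since the average of the top $k$ entries is at least the overall average $\tfrac1N$.)

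For uniqueness I would first observe that the only distribution majorized by the uniform one is the uniform one itself: if $q\prec(\tfrac1N,\dots,\tfrac1N)$, then applying $(vi)$ with $k=1$ gives $q_1^{\downarrow}\le\tfrac1N$, so every entry of $q$ is $\le\tfrac1N$; since the $N$ non-negative entries of $q$ sum to $1$, they must all equal $\tfrac1N$, hence $q=(\tfrac1N,\dots,\tfrac1N)$. Consequently, if $q$ is any minimal (least) element of $\prec$, then in particular $q\prec(\tfrac1N,\dots,\tfrac1N)$, whence $q=(\tfrac1N,\dots,\tfrac1N)$. In particular the equivalence class of the uniform distribution under $\sim$ is the singleton $\{(\tfrac1N,\dots,\tfrac1N)\}$, so speaking of ``the'' minimal element is unambiguous.

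I do not expect a serious obstacle: the whole content is the remark that the constant stochastic matrix $\tfrac1N J$ collapses every distribution onto the uniform one. The only point requiring a bit of care is the uniqueness part, where one must invoke the strictness already present in $(vi)$ (via the $k=1$ inequality) rather than just the existence of a doubly stochastic $A$ with $pA=q$, since the latter alone would only give $q\prec p$ and not $q=p$.
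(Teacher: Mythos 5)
Your proof is correct and follows essentially the same route as the paper's: both reduce the claim to Theorem \ref{thm:characterization}. The paper realizes the uniform distribution as the convex combination $\tfrac{1}{N}\sum_{i}\Pi_i(p)$ of cyclic permutations (characterization $(iv)$), which is precisely your constant doubly stochastic matrix $J$ seen through the Birkhoff--von Neumann equivalence $(iii)\Leftrightarrow(iv)$; for uniqueness the paper argues via $(iii)$ (any doubly stochastic $A$ satisfies $eA=e$, so $\tfrac{1}{N}eA=\tfrac{1}{N}e$), and your $k=1$ partial-sum argument via $(vi)$ is an equally valid one-line alternative.
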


\noindent Once \eqref{unifminimal_main} is established, it follows from \eqref{schurconvexity} that $C((\tfrac{1}{N},\dots,\tfrac{1}{N})) \leq C(p)$ for all $p$, in particular the uniform distribution corresponds to the initial state of all decision-making processes with cost function $C$ satisfying \eqref{DM_prob}. In particular, it contains the maximum amount of uncertainty with respect to any entropy measure of the form \eqref{genEnt}, known as the second Khinchin axiom  \cite{Khinchin1957}, e.g.~for Shannon entropy $0\leq H(p)\leq \log N$. Proposition \ref{prop:unifminimal_main} follows from characterization $(iv)$ in Theorem \ref{thm:characterization} after noticing that every $p\in\mathbb P_\Omega$ can be transformed to a uniform distribution by permuting its elements cyclically (see Proposition \ref{prop:unifminimal} in the appendix for a detailed proof).

Regarding the possibility that a decision-maker may have \textit{prior information}, for example originating from the experience of previous comparable decision-making tasks, the assumption of a uniform initial distribution seems to be artificial. Therefore, in the following section we arrive at the final notion of a decision-making process by extending the results of this section to allow for arbitrary initial distributions.

\bigskip

\subsection{Decision-making with prior knowledge} \label{sec:probDMprior}

From the discussion at the end of the previous section we conclude that, in full generality, a decision-maker transitions from an initial probability distribution $q\in\mathbb P_\Omega$, called \textit{prior}, to a terminal distribution $p\in \mathbb P_\Omega$, called \textit{posterior}. Note that, since once eliminated options are excluded from the rest of the decision-making process, a posterior $p$ must be \textit{absolutely continuous} with respect to the prior $q$, denoted by $p\ll q$, i.e. $p(x)$ can be non-zero for a given $x\in \Omega$ only if $q(x)$ is non-zero. 

The notion of uncertainty (Definition \ref{def:uncertainty}) can be generalized with respect to a non-uniform prior $q\in\mathbb P_\Omega$ by viewing the probabilities $q_i$ as the probabilities $Q(A_i)$ of partitions $A_i$ of an underlying elementary probability space $\tilde \Omega = \bigcup_i A_i$ of equally likely elements under $Q$, in particular $Q$ represents $q$ as the uniform distribution on $\tilde \Omega$ (see Figure \ref{fig:4b}). The similarity of the entries of the corresponding representation $P\in\mathbb P_{\tilde \Omega}$ of any $p \in \mathbb P_\Omega$ (its uncertainty) then contains information about how close $p$ is to $q$, which we call the \textit{relative uncertainty} of $p$ with respect to $q$ (Definition \ref{def:reluncertainty} below).

The formal construction is as follows: Let $p,q\in\mathbb P_\Omega$ be such that $p\ll q$ and $q_i\in \mathbb Q$. The case when $q_i\in\mathbb R$ then follows from a simple approximation of each entry by a rational number. Let $\alpha\in \mathbb N$ be such that $\alpha \, q_i \in \mathbb N$ for all $i\in\{1,\dots,N\}$, for example $\alpha$ could be chosen as the least common multiple of the denominators of the $q_i$. The underlying elementary probability space $\tilde \Omega$ then consists of $\alpha$ elements and there exists a partitioning $\{A_i\}_{i=1,\dots,N}$ of $\tilde \Omega$ such that 
\begin{equation} \label{def:partitionsize}
|A_i| = \alpha\,q_i \qquad \forall i\in\{1,\dots, N\} \, ,
\end{equation}

\noindent where $Q$ denotes the uniform distribution on $\tilde \Omega$. In particular, it follows that 
\begin{equation}\label{Qunif}
Q(A_i) = \sum_{j=1}^{|A_i|}  \frac{1}{\alpha}  = q_i  \qquad  \forall i\in\{1,\dots,N\} \, , 
\end{equation}

\noindent i.e.~$Q$ represents $q$ in $\tilde \Omega$ with respect to the partitioning $\{A_i\}_{i}$. Similarly, any $p\in\mathbb P_\Omega$ can be represented as a distribution on $\tilde \Omega$ by requiring that $P(A_i) = p_i$ for all $i\in\{1,\dots,N\}$ and letting $P$ to be constant inside of each partition, i.e.~similar to \eqref{Qunif} we have $P(A_i) = |A_i| \, P(\omega) = p_i$ for all $\omega\in A_i$ and therefore by \eqref{def:partitionsize}
\begin{equation} \label{def:qrep}
P(\omega) \ = \ \frac{1}{\alpha} \, \frac{p_i}{q_i} \qquad \forall \omega \in A_i \ .
\end{equation}

\noindent Note that, if $q_i=0$ then $p_i=0$ by absolute continuity ($p\ll q$) in which case we can either exclude option $i$ from $\Omega$ or set $P(\omega)=0$.

\begin{figure}
\centering
\includegraphics[width=.8\textwidth]{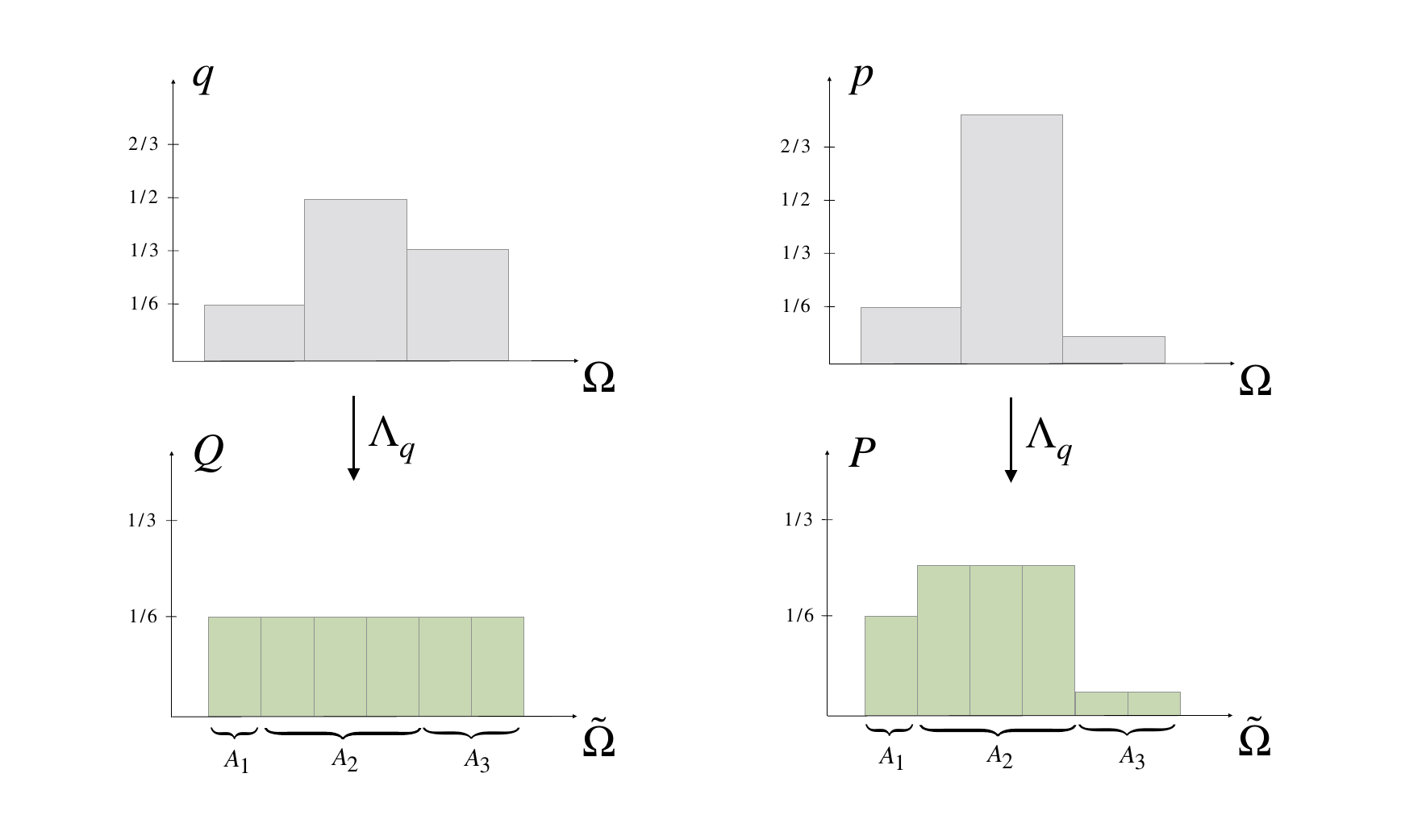}

\vspace{-5pt}
\caption{Representation of $q$ and $p$ by $Q$ and $P$ on $\tilde \Omega$ (Example \ref{ex:unifrep}), such that the probabilities $q_i$ and $p_i$ are given by the probabilities of the partitions $A_i$ with respect to $Q$ and $P$, respectively.}
\label{fig:4b}
\end{figure}   

\begin{Example} \label{ex:unifrep}
{\normalfont

For a prior \smash{$q=(\frac{1}{6},\frac{1}{2},\frac{1}{3})$} we put $\alpha=6$, so that $\tilde \Omega = \{\omega_1,\dots,\omega_6\}$ should be partitioned as $\tilde \Omega = \{ \omega_1 \} \cup \{\omega_2,\omega_3,\omega_4\} \cup \{\omega_5,\omega_{6}\}$. Then $q_i$ corresponds to the probability of the $i$-th partition under the uniform distribution \smash{$Q = \frac{1}{6} (1,\dots,1)$}, while \smash{$p=(\frac{1}{6},\frac{3}{4},\frac{1}{12})$} is represented on $\tilde \Omega$ by \smash{$P = \big(\tfrac{1}{6}, \tfrac{1}{4},\tfrac{1}{4},\tfrac{1}{4}, \tfrac{1}{24}, \tfrac{1}{24})$} (see Figure \ref{fig:4b}).
}
\end{Example}

Importantly, if the components of the representation $\Lambda_q p \coloneqq P$ in $\mathbb P_{\tilde \Omega}$ given by \eqref{def:qrep} are similar to each other, i.e.~if $P$ is close to uniform, then the components of $p$ must be very similar to the components of $q$, which we express by the concept of \textit{relative uncertainty}.

\begin{Definition}[Uncertainty relative to $q$]\label{def:reluncertainty} We say that $p'\in\mathbb P_\Omega$ contains more uncertainty with respect to a prior $q\in\mathbb P_\Omega$ than $p\in\mathbb P_\Omega$, denoted by $p' \prec_q p$, if and only if $\Lambda_q p'$ contains more uncertainty than $\Lambda_q p$, i.e.
\begin{equation}\label{eq:reluncertainty}
p'  \prec_q  p  \quad :\Leftrightarrow \quad \Lambda_q p'  \prec \Lambda_q p 
\end{equation}

\noindent where $\Lambda_q:\mathbb P_\Omega\to \mathbb P_{\tilde \Omega}, p\mapsto P$ is given by \eqref{def:qrep}.
\end{Definition}

As we will show in Theorem \ref{thm:charrel} below, it turns out that $\prec_q$ coincides with a known concept called \textit{$q${-}majorization} \cite{Veinott1971}, \textit{majorization relative to $q$} \cite{Joe1990, Marshall2011}, or \textit{mixing distance} \cite{Ruch1978}. Due to the lack of a characterization by partial sums, it is usually defined as a generalization of characterization $(iii)$ in Theorem \ref{thm:characterization}, that is $p'$ is $q$-majorized by $p$ iff $p' = pA$, where $A$ is a so-called \textit{$q$-stochastic matrix}, which means that it is a stochastic matrix ($Ae = e$) with $qA = q$. In particular, $\prec_q$ does not depend on the choice of $\alpha$ in the definition of $\Lambda_q$. Here, we provide two new characterizations of $q$-majorization, the one given by Definition \ref{def:reluncertainty}, and one using partial sums generalizing the original definition of majorization.

\medskip 

\begin{Theorem}[Characterizations of $p'\prec_q p$]\label{thm:charrel} The following are equivalent
\begin{enumerate}
\item[$(i)$] $p'\prec_q p$, i.e. $p'$ contains more uncertainty relative to $q$ than $p$ (Def. \ref{def:reluncertainty})\\[-4pt]
\item[$(ii)$] $\Lambda_q p$ can be obtained from $\Lambda_q p'$ by a finite number of elementary computations and permutations on $\mathbb P_{\tilde \Omega}$\\[-4pt]
\item[$(iii)$] $p' = pA$ for a $q$-stochastic matrix $A$, i.e.~$Ae=e$ and $qA = q$\\[-4pt]
\item[$(iv)$] $\sum_{i=1}^N q_i f\big(\frac{p'_i}{q_i} \big) \leq \sum_{i=1}^N q_i f\big(\frac{p_i}{q_i}\big)$ for all continuous convex functions $f$\\[-4pt]
\item[$(v)$] $\sum_{i=1}^{l-1} (p'_i)^{\downarrow} + a_q(k,l) (p'_l)^{\downarrow} \ \leq \ \sum_{i=1}^{l-1}p_i^\downarrow + a_q(k,l) p_l^\downarrow$ for all $k,l\in\mathbb N$ that satisfy $\alpha \sum_{i=1}^{l-1} q_i^\downarrow \leq k \leq \alpha \sum_{i=1}^l q_i^\downarrow$ and $1\leq l \leq N$, where the arrows indicate that $(p_i^\downarrow/q_i^\downarrow)_i$ is ordered decreasingly, and $a_q(k,l)\coloneqq (\frac{k}{\alpha} - \sum_{i=1}^{l-1} q_i^\downarrow)/q_l^\downarrow$.
\end{enumerate}
\end{Theorem}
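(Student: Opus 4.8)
The plan is to reduce the whole statement to Theorem~\ref{thm:characterization} applied on the refined space $\mathbb P_{\tilde\Omega}$ via the unfolding map $\Lambda_q$ from \eqref{def:qrep}, handling the rational-$q$ case first and obtaining arbitrary real priors by coordinatewise approximation (all conditions in play being closed in $q$; this also shows independence of the auxiliary integer $\alpha$). The equivalence $(i)\Leftrightarrow(ii)$ is then immediate: by Definition~\ref{def:reluncertainty}, $p'\prec_q p$ \emph{means} $\Lambda_q p'\prec\Lambda_q p$, and by Definition~\ref{def:uncertainty} the latter \emph{means} precisely that $\Lambda_q p$ is reachable from $\Lambda_q p'$ by finitely many elementary computations and permutations on $\mathbb P_{\tilde\Omega}$.

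For $(i)\Leftrightarrow(iv)$ I would apply the equivalence $(i)\Leftrightarrow(v)$ of Theorem~\ref{thm:characterization} to $\Lambda_q p'\prec\Lambda_q p$ and substitute the block structure of $\Lambda_q$: since $\Lambda_q p$ equals $\tfrac1\alpha\tfrac{p_i}{q_i}$ on each of the $\alpha q_i$ points of $A_i$, for every continuous $g$ one has $\sum_{\omega}g\big((\Lambda_q p)(\omega)\big)=\sum_i\alpha q_i\,g\big(\tfrac1\alpha\tfrac{p_i}{q_i}\big)=\sum_i q_i f\big(\tfrac{p_i}{q_i}\big)$ with $f(t)\defeq\alpha\,g(t/\alpha)$; as $g\mapsto\alpha g(\cdot/\alpha)$ is a bijection of the convex cone of continuous convex functions onto itself, $(i)$ and $(iv)$ coincide.

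For $(i)\Leftrightarrow(iii)$ I would lift and fold stochastic matrices between $\Omega$ and $\tilde\Omega$. Given a $q$-stochastic $A$, set $\tilde A_{\omega\omega'}\defeq A_{ij}/(\alpha q_j)$ for $\omega\in A_i,\omega'\in A_j$; then $\tilde A\ge0$, $\tilde Ae=e$ follows from $Ae=e$, $e\tilde A=e$ follows from $qA=q$, and a short computation gives $\Lambda_q(pA)=(\Lambda_q p)\tilde A$, so $p'=pA$ yields $\Lambda_q p'\prec\Lambda_q p$ by Theorem~\ref{thm:characterization}$(iii)$. Conversely, from $\Lambda_q p'=(\Lambda_q p)\tilde B$ with $\tilde B$ doubly stochastic I would define $A_{ij}\defeq\frac1{\alpha q_i}\sum_{\omega\in A_i,\,\omega'\in A_j}\tilde B_{\omega\omega'}$; summing $\tilde B$ over blocks along rows gives $Ae=e$ and along columns gives $qA=q$, and using that $\Lambda_q p$ is block-constant together with $(\Lambda_q p')(A_j)=p'_j$ one obtains $pA=p'$. (Alternatively, $(iii)\Leftrightarrow(iv)$ is the classical description of $q$-majorization \cite{Veinott1971,Joe1990,Marshall2011}.)

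Finally, $(i)\Leftrightarrow(v)$ is $(i)\Leftrightarrow(vi)$ of Theorem~\ref{thm:characterization} read through $\Lambda_q$: the decreasing rearrangement of $\Lambda_q p$ lists $\tfrac1\alpha\tfrac{p_i}{q_i}$ with multiplicity $\alpha q_i$, the blocks being ordered so that $p_i^\downarrow/q_i^\downarrow$ decreases, whence for a position $k$ lying in the $l$-th such block one computes $\sum_{j\le k}(\Lambda_q p)^\downarrow_j=\sum_{i<l}p_i^\downarrow+a_q(k,l)\,p_l^\downarrow$, and likewise for $\Lambda_q p'$ with respect to its own block ordering; imposing the majorization inequality for all $k\in\{1,\dots,\alpha-1\}$ is exactly $(v)$. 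The most delicate points are the folding half of $(i)\Leftrightarrow(iii)$—verifying that block-averaging a doubly stochastic matrix really produces a $q$-stochastic matrix that recovers $p'=pA$—and the limiting argument transferring the rational-$q$ statement to arbitrary real priors; the index bookkeeping needed in $(v)$ is then routine.
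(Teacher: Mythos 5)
Your proof is correct and follows essentially the same route as the paper's: everything is pulled back through $\Lambda_q$ to Theorem~\ref{thm:characterization} on $\mathbb P_{\tilde\Omega}$, and your block-averaging construction for $(i)\Leftrightarrow(iii)$ is exactly the paper's $A^T=\Lambda_q^{-1}B^T\Lambda_q$ written out entrywise. The only divergence is that the paper cites Joe's result for $(iii)\Leftrightarrow(iv)$, whereas you derive $(i)\Leftrightarrow(iv)$ directly from the convex-function characterization via the substitution $f(t)=\alpha\,g(t/\alpha)$ --- a small, self-contained improvement.
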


To prove that $(i)$, $(iii)$, and $(v)$ are equivalent (see Proposition \ref{prop:charrel} in the appendix), we use of the fact that $\Lambda_q:\mathbb P_\Omega\to\mathbb P_{\tilde \Omega}$ has a left inverse $\Lambda_q^{-1}: \Lambda_q(\mathbb P_\Omega) \to \mathbb P_{\Omega}$. This can be verified by simply multiplying the corresponding matrices given in the proof of Proposition \ref{prop:charrel}. The equivalence betweeen $(iii)$ and $(iv)$ is shown in \cite{Joe1990} (see also \cite{Ruch1978,Marshall2011}). Characterization $(ii)$ follows immediately from Definition \ref{def:uncertainty} and Definition \ref{def:reluncertainty}.

As required from the discussion at the end of the previous section, $q$ is indeed minimal with respect to $\prec_q$, which means that it contains the most amount of uncertainty with respect to itself.

\begin{Proposition}[The prior is minimal] \label{priorminimal_main}
The prior $q\in\mathbb P_\Omega$ is the unique minimal element in $\mathbb P_\Omega$ with respect to $\prec_q$, that is
\begin{equation}
q\prec_q p \qquad \forall p\in\mathbb P_\Omega \, .
\end{equation}
\end{Proposition}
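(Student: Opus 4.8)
The plan is to transport the statement to the enlarged space $\mathbb P_{\tilde\Omega}$ via the map $\Lambda_q$ and to invoke Proposition \ref{prop:unifminimal_main} there. First I would observe that $q$ is itself represented by the \emph{uniform} distribution on $\tilde\Omega$: by \eqref{def:qrep}, $(\Lambda_q q)(\omega)=\tfrac1\alpha\,\tfrac{q_i}{q_i}=\tfrac1\alpha$ for every $\omega\in A_i$, so $\Lambda_q q=(\tfrac1\alpha,\dots,\tfrac1\alpha)$. (As in the construction of $\Lambda_q$ one may assume $q_i\in\mathbb Q$, the general case following by approximation; alternatively one avoids $\Lambda_q$ entirely, see below.) Then, since $\Lambda_q p\in\mathbb P_{\tilde\Omega}$ for every $p\in\mathbb P_\Omega$, Proposition \ref{prop:unifminimal_main} applied on $\tilde\Omega$ gives $(\tfrac1\alpha,\dots,\tfrac1\alpha)\prec\Lambda_q p$, that is $\Lambda_q q\prec\Lambda_q p$, which by Definition \ref{def:reluncertainty} is precisely $q\prec_q p$ --- the asserted minimality.

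For uniqueness I would take any $q'\in\mathbb P_\Omega$ with $q'\prec_q p$ for all $p$ and specialize to $p=q$, obtaining $\Lambda_q q'\prec\Lambda_q q=(\tfrac1\alpha,\dots,\tfrac1\alpha)$; combined with the minimality just proven (with $q'$ in the role of $p$) this yields $(\tfrac1\alpha,\dots,\tfrac1\alpha)\prec\Lambda_q q'$, so $\Lambda_q q'$ is $\prec$-equivalent to the uniform vector. Since $\prec$-equivalent probability vectors have the same decreasing rearrangement (characterization $(vi)$ in Theorem \ref{thm:characterization}) and the uniform vector is permutation-invariant, $\Lambda_q q'=\Lambda_q q$; and $\Lambda_q$ is injective because it has the left inverse $\Lambda_q^{-1}$ exhibited in the proof of Proposition \ref{prop:charrel}, so $q'=q$.

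Alternatively, one can argue directly from Theorem \ref{thm:charrel} without passing to $\tilde\Omega$. For minimality, let $A$ be the matrix all of whose rows equal $q$; then $Ae=e$ and $qA=q$, so $A$ is $q$-stochastic and $pA=q$ for \emph{every} $p$, giving $q\prec_q p$ by characterization $(iii)$. For uniqueness, characterization $(iv)$ with $p=q$ gives $\sum_i q_i f(q'_i/q_i)\le\sum_i q_i f(1)=f(1)$ for all continuous convex $f$; picking $f$ strictly convex and applying the strict Jensen inequality $f(1)=f\big(\sum_i q_i\tfrac{q'_i}{q_i}\big)\le\sum_i q_i f(q'_i/q_i)$ forces all ratios $q'_i/q_i$ to coincide, and $\sum_i q'_i=\sum_i q_i=1$ makes the common value $1$, i.e.\ $q'=q$. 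There is no genuine obstacle in either route; the only point that calls for a word of care is the reduction to rational $q_i$ in the $\Lambda_q$-based argument, which the matrix-based argument avoids since it never uses rationality.
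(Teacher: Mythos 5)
Your main argument is correct and, for the minimality half, is essentially the paper's own proof: represent $p$ on $\tilde\Omega$ via $\Lambda_q$, note that $\Lambda_q q$ is the uniform distribution, and invoke Proposition \ref{prop:unifminimal_main}. Where you diverge is uniqueness. The paper's route is shorter: from $q'\prec_q q$ and characterization $(iii)$ of Theorem \ref{thm:charrel} one gets $q'=qA$ for a $q$-stochastic matrix $A$, and $qA=q$ by definition of $q$-stochasticity, so $q'=q$ in one line. Your first uniqueness argument (mutual majorization of $\Lambda_q q'$ and the uniform vector, equality of decreasing rearrangements via $(vi)$, then injectivity of $\Lambda_q$) is valid but longer and still carries the rational-entries caveat. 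Your second, matrix-free route is the more interesting contribution: the rank-one $q$-stochastic matrix with all rows equal to $q$ sends \emph{every} $p$ to $q$, giving minimality directly from $(iii)$ with no appeal to $\Lambda_q$ or to the rationality of the $q_i$; and the strict-Jensen argument via $(iv)$ gives uniqueness likewise without the representation map (with the minor proviso that indices with $q_i=0$ are first discarded using absolute continuity, so the ratios $q'_i/q_i$ are well defined). That version is cleaner than both your first argument and the paper's reliance on the $\Lambda_q$ construction for the existence half, at the cost of leaning on the already-proved equivalences in Theorem \ref{thm:charrel} rather than on the definition of $\prec_q$ itself.
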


\noindent This follows more or less directly from Proposition \ref{prop:unifminimal_main} and the equivalence of $(i)$ and $(iii)$ in Theorem \ref{thm:charrel} (see Proposition \ref{priorminimal} in the appendix for a detailed proof).

Order-preserving functions with respect to $\prec_q$ generalize cost functions introduced in the previous section (Definition \ref{def:costfunctions}). According to Proposition \ref{priorminimal_main}, such functions have a unique minimum given by the prior $q$. Since cost functions are used in Definition \ref{def:DM} below to quantify the expenses of a decision-making process, we require their minimum to be zero, which can always be achieved by redefining a given cost function by an additive constant.

\begin{Definition}[Cost functions relative to $q$] \label{def:relcostfunctions} We say that a function $C_q:\mathbb P_\Omega\to\mathbb R_+$ is a cost function relative to $q$, if $C_q(q) = 0$, if it is invariant under relabeling $(q_i,p_i)_i$, and if it is strictly monotonically increasing with respect to the preorder $\prec_q$, that is if 
\begin{equation} \label{relschurconvexity}
p'\prec_q p \quad \Rightarrow \quad C_q(p') \leq C_q(p) \, ,
\end{equation}

\noindent with equality only if $p'\sim_q p$, i.e.~if $p'\prec_q p$  and $p\prec_q p'$. Moreover, for a parametrized family of posteriors $(p_r)_{r\in I}$, we say that $r$ is a \textit{resource parameter with respect to a cost function $C_q$}, if the mapping $I\mapsto \mathbb R_+, r\mapsto C_q(p_r)$ is strictly monotonically increasing.
\end{Definition}

\begin{figure}
\centering
\noindent\makebox[1.0\textwidth]{\includegraphics[width = 1.1\textwidth]{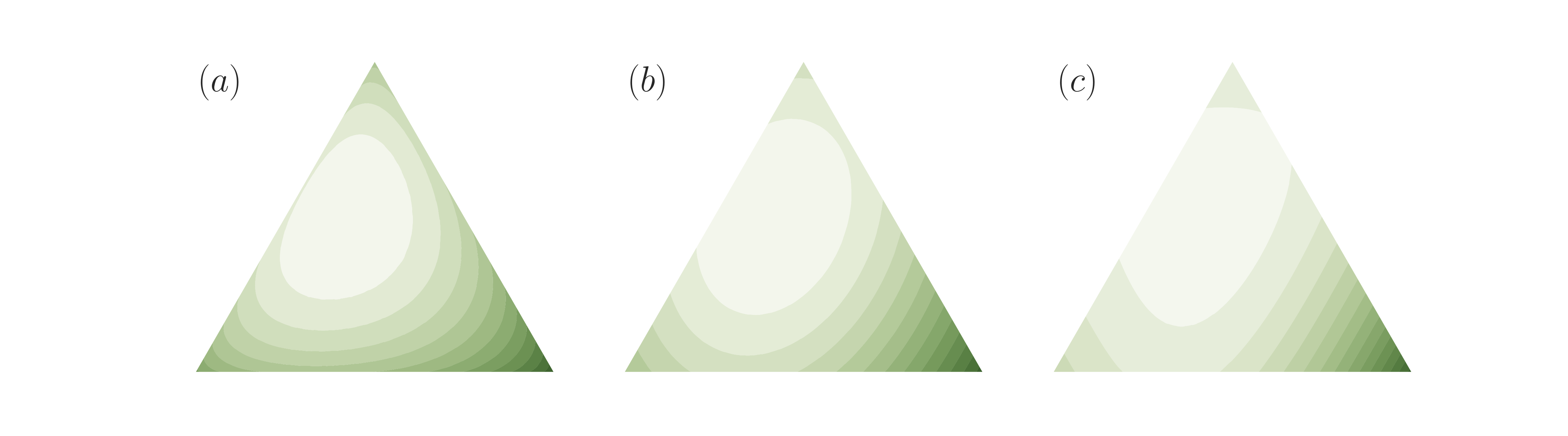}} 

\vspace{-5pt}
\caption{Examples of cost functions for $N=3$ relative to $q=(\frac{1}{3},\frac{1}{2},\frac{1}{6})$. $(a)$ Kullback-Leibler divergence, $(b)$ Squared $\ell^2$ distance, $(c)$ Tsallis relative entropy of order $\alpha=3.0$.}
\label{fig:13}
\end{figure}

Similar to generalized entropy functions discussed in Example \ref{ex:genEnt}, in the literature there are many examples of relative cost functions, usually called \textit{divergences} or \textit{measures of divergence}. 
\begin{Example}[$f$-divergences] \label{ex:genDiv}
{\normalfont
From $(iv)$ in Theorem \ref{thm:charrel} it follows that functions of the form
\begin{equation}\label{genDiv}
C_q(p) \coloneqq \sum_{i=1}^N q_i \, f\Big(\frac{p_i}{q_i} \Big) ,
\end{equation}

\noindent where $f$ is continuous and strictly convex with $f(1)=0$, are examples of cost functions relative to $q$. Many well-known divergence measures can be seen to belong to this class of relative cost functions, also known as \textit{Csisz\'{a}r's $f$-divergences} \cite{Csiszar1972}: the \textit{Kullback-Leibler divergence} (or \textit{relative entropy}), the squared \textit{$\ell^2$ distance}, the \textit{Hartley entropy}, the \textit{Burg entropy}, the \textit{Tsallis entropy}, and many more \cite{Gorban2010,Csiszar2008} (see Figure \ref{fig:13} for visualizations of some of them in $N=3$ relative to a non-uniform prior).

As a generalizition of Proposition \ref{prop:superadditivity} (superadditivity of generalized entropies), we prove in Proposition \ref{prop:superadditivityDiv} in the appendix that \textit{$f$-divergences are superadditive under coarse-graining}, that is, for $Z=(X,Y)$
\begin{equation} \label{relsuperadditivity}
C_q(Z) \geq C_q(X) + C_q(Y|X)
\end{equation}

\noindent whenever $C_q(X)\coloneqq C_{q(X)}(p(X))$ and $C_q(Y|X)\coloneqq \mathbb E_{p(X)}[C_{q(Y|X)}(p(Y|X))]$. This generalizes \eqref{superadditivity} to the case of a non-uniform prior. Similar to entropies, the case of equality in \eqref{relsuperadditivity} is sometimes called \textit{composition rule} \cite{Hobson1969}, \textit{chain rule} \cite{Leinster2017}, or \textit{recursivity} \cite{Csiszar2008}, and is often used to characterize Kullback-Leibler divergence \cite{Hobson1969,Mattsson2002,Csiszar2008,Leinster2017}. 

Indeed, we also show in the appendix (Proposition \ref{prop:charDKL}) that \textit{all additive cost functions with respect to $q$ are proportional to Kullback-Leibler divergence (relative entropy)}. This goes back to Hobson's modification \cite{Hobson1969} of Shannon's original proof \cite{Shannon1948}, after establishing the following monotonicity property for uniform distributions: If $f(M,N)$ denotes the cost $C_{u_N}(u_M)$ of a uniform distribution $u_M$ over $M$ elements relative to a uniform distribution $u_N$ over $N\geq M$ elements, then (see Figure \ref{fig:3a})
\begin{equation} \label{uniformmonotonicity_main}
\begin{array}{rl}
f(M',N) \leq f(M,N) &\ \forall M \leq M' \leq N \, ,\\[5pt]
f(M,N) \geq f(M,N') &\  \forall M \leq N' \leq N \, . 
\end{array}
\end{equation}

\noindent Note that, even though our proof of Proposition \ref{prop:charDKL} uses additivity under coarse graining to show the monotonicity property \eqref{uniformmonotonicity_main}, it is easy to see that any relative cost function of the form \eqref{genDiv} also satisfies \eqref{uniformmonotonicity_main} by using the convexity of $f$ in the form $f(t)\leq \tfrac{t}{s} f(s)  + (1-\tfrac{t}{s}) f(0)$ with $t = \frac{N'}{M} < \frac{N}{M} = s$.

\begin{figure}
\centering
\includegraphics[width=1.0\textwidth]{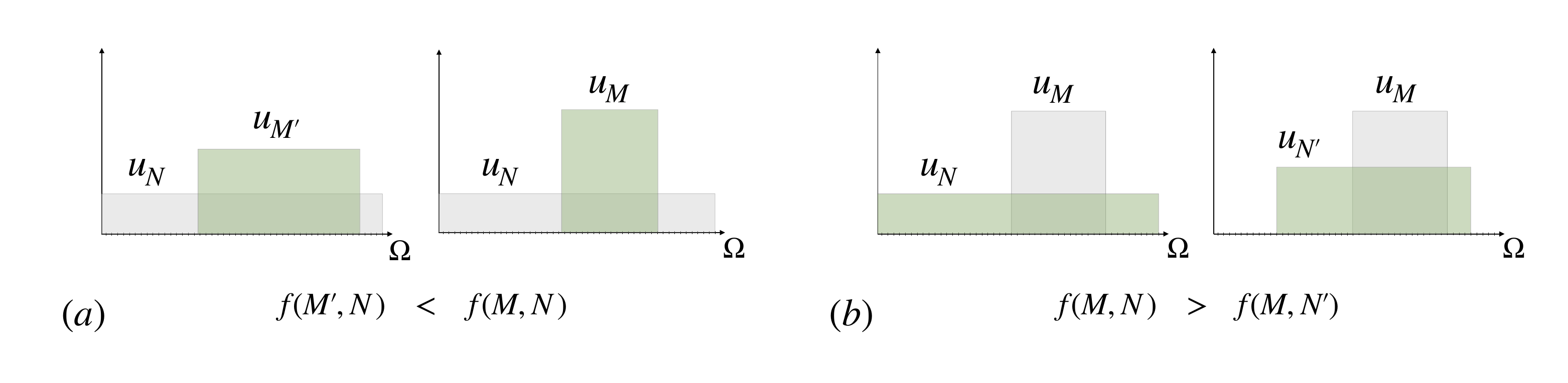}

\vspace{-5pt}
\caption{Monotonicity property \eqref{uniformmonotonicity_main}. $(a)$ The cost is higher when more uncertainty has been reduced. $(b)$ If the posterior is the same, then it is cheaper to start from a prior with fewer options.}
\label{fig:3a}
\end{figure}   

In terms of decision-making, superadditivity under coarse-graining means that decision-making costs can potentially be reduced by splitting up the decision into multiple steps, for example by a more intelligent search strategy. For example, if $N=2^k$ for some $k\in\mathbb N$ and $C_q$ is superadditive, then the cost for reducing uncertainty to a single option, i.e.~$p=(1,0,\dots,0)$, when starting from a uniform distribution $q$, satisfies
\[
C_q(p) \ \geq C_{q^2}(1,0) + C_{q^{N/2}}(1,0,\dots,0) \ \geq \, \dots \, \geq \  \log N \, = \, D_{KL}(p\| q) \, ,
\]

\noindent where $q^{n} \coloneqq (\frac{1}{n},\dots,\frac{1}{n})$, and we have set \smash{$C_{q^2}(1,0) = 1$} as unit cost (corresponding to \textit{1 bit} in the case of Kullback-Leibler divergence). Thus, intuitively the property of the Kullback-Leibler divergence of being additive under coarse-graining might be viewed as describing the minimal amount of processing costs that must be contained in any cost function, because \textit{it cannot be reduced by changing the decision-making process}. Therefore, in the following we call cost functions that are proportional to the Kullback-Leibler divergence simply \textit{informational costs}.
}
\end{Example}

In contrast to the previous section, in the definition of $\prec_q$ and its characterizations we have never used elementary computations on $\mathbb P_\Omega$ directly. This is due to the fact that permutations do interact with the uncertainty relative to $q$, and therefore $\prec_q$ cannot be characterized by a finite number of elementary computations and permutations on $\mathbb P_\Omega$. However, we can still define elementary computations relative to $q$ by the inverse of Pigou-Dalton transfers $T_\varepsilon$ of the form \eqref{eq:pigoudalton} such that $T_\varepsilon p \precnsim_q p$ for $\varepsilon >0$, which is arguably the most basic form of how to generate uncertainty with respect to $q$. 

Even for small $\varepsilon$, a regular Pigou-Dalton transfer does not necessarily increase uncertainty relative to $q$, because the similarity of the components now needs to be considered with respect to $q$. Instead, we compare the components of the representation $P=\Lambda_q p$ of $p\in\mathbb P_\Omega$, and move some probability weight $\varepsilon \geq 0$ from $P(A_n)$ to $P(A_m)$ whenever $P(\omega)\leq P(\omega')$ for $\omega\in A_m$ and $\omega'\in A_n$, by distributing $\varepsilon$ evenly among the elements in $A_m$ (see Figure \ref{fig:4c}), denoted by the transformation $\tilde T_\varepsilon$. Here, $\varepsilon$ must be small enough such that the inequality \smash{$\frac{1}{\alpha} \frac{p_m}{q_m} = P(\omega) \leq P(\omega') = \frac{1}{\alpha} \frac{p_n}{q_n}$} is invariant under $\tilde T_\varepsilon$, which means that
\begin{equation}\nonumber
(\tilde T_\varepsilon P)(\omega) \leq (\tilde T_\varepsilon P)(\omega') \quad \Leftrightarrow \quad \frac{1}{\alpha} \frac{p_m}{q_m} + \frac{\varepsilon}{|A_m|} \ \leq  \ \frac{1}{\alpha}\frac{p_n}{q_n} - \frac{\varepsilon}{|A_n|}
\end{equation}
and therefore 
\begin{equation}\label{relativeepsilonbound}
\varepsilon \ \leq \ \frac{\tfrac{p_n}{q_n}-\frac{p_m}{q_m}}{\frac{1}{q_m}+\frac{1}{q_n}}.
\end{equation}

\noindent By construction, $\tilde T_\varepsilon$ minimally increases uncertainty in $\mathbb P_{\tilde \Omega}$ while staying in the image of $\mathbb P_\Omega$ under $\Lambda_q$, by keeping the values of $P$ constant in each partition, and therefore $T_\varepsilon \coloneqq \Lambda_q^{-1} \tilde T_\varepsilon \Lambda_q$ can be considered as the most basic way of how to increase uncertainty relative to $q$.

\begin{figure}
\centering
\includegraphics[width=.9\textwidth]{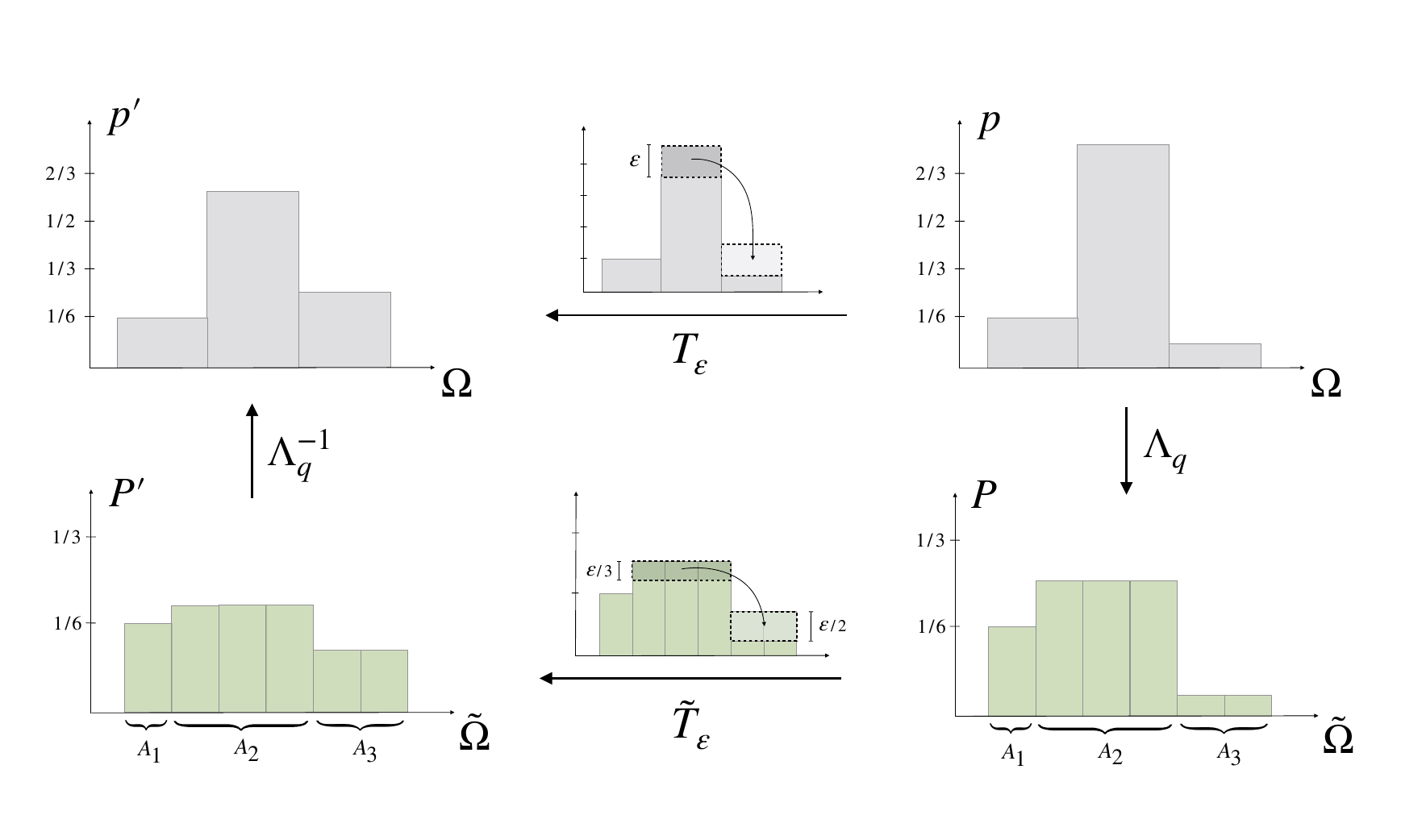}

\vspace{-6pt}
\caption{Pigou-Dalton transfer relative to $q$. A distribution $p\in\mathbb P_\Omega$ is transformed relative to $q$ by first moving some amount of weight $\varepsilon\geq 0 $ from $P(A_n)$ to $P(A_m)$ where $n,m$ are such that $P(\omega)\leq P(\omega')$ whenever $\omega\in A_m$ and $\omega'\in A_n$, with $\varepsilon$ small enough such that this relation remains true after the transformation, and then mapping the transformed distribution back to $\mathbb P_\Omega$ by $\Lambda_q^{-1}$ (see Definition \ref{def:relpigoudalton}).}
\label{fig:4c}
\end{figure}

\begin{Definition}[Elementary computation relative to $q$] \label{def:relpigoudalton}
We call a transformation on $\mathbb P_\Omega$ of the form 
\begin{equation}\label{eq:relpigoudalton}
T_\varepsilon: p\mapsto (p_1, \dots,  p_m+\varepsilon, \dots, p_{n} - \varepsilon, \dots, p_N) \, ,
\end{equation}

\noindent with $m,n$ such that \smash{$\frac{p_m}{q_m}\leq \frac{p_n}{q_n}$}, and $\varepsilon$ satisfying \eqref{relativeepsilonbound}, a Pigou-Dalton transfer relative to $q$, and its inverse an elementary computation relative to $q$.

\end{Definition}

We are now in the position to state our final definition of a decision-making process.

\begin{Definition}[Decision-making process] \label{def:DM}
A decision-making process is a gradual transformation 
\[
q \longrightarrow \cdots \longrightarrow p' \longrightarrow \phi(p')\longrightarrow \cdots \longrightarrow p
\]
of a prior $q\in\mathbb P_\Omega$ to a posterior $p\in\mathbb P_\Omega$, such that each step decreases uncertainty relative to $q$. This means that $p$ is obtained from $q$ by successive application of a mapping $\phi$ between probability distributions on $\Omega$, such that $\phi(p')$ can be obtained from $p'$ by finitely many elementary computations relative to $q$, in particular
\begin{equation}\label{DM}
q\precnsim_q  p' \precnsim_q \phi(p'),  \qquad 0 = C_q(q)< C_q(p') < C_q(\phi(p')) \, ,
\end{equation}

\noindent where $C_q(p')$ quantifies the total costs of a distribution $p'$, and $p' \precnsim_q p$ means that $p'\prec_q p$ and $p\not\prec_q p'$.  
\end{Definition}

In other words, a decision-making process can be viewed as traversing probability space from prior $q$ to posterior $p$ by moving pieces of probability from one option to another option such that uncertainty is reduced relative to $q$, while expending a certain amount of resources determined by the cost function $C_q$.

\bigskip

\section{Bounded rationality}\label{sec:bounded} 

\subsection{Bounded rational decision-making}

In this section, we consider decision-making processes that trade off utility against costs. Such decision-makers either maximize a utility function subject to a constraint on the cost function, for example an author of a scientific article that optimizes the article's quality until a deadline is reached, or minimizing the cost function subject to a utility constraint, for example a high-school student that minimizes effort such that the requirement to pass a certain class is achieved. In both cases, the decision-makers are called \textit{bounded rational}, since in the limit of no resource constraints they coincide with \textit{rational} decision-makers. 

In general, depending on the underlying system, such an optimization process might have additional \textit{process dependent} constraints that are not directly given by resource limitations, for example in cases when the optimization takes place in a parameter space that has less degrees of freedom than the full probability space $\mathbb P_\Omega$. Abstractly, this is expressed by allowing the optimization process to search only in a subset $\Gamma\subset \mathbb P_\Omega$.

\begin{Definition}[Bounded rational decision-making process]\label{def:BRDM}
Let $U:\Omega \to\mathbb R$ be a given utility function, and $\Gamma\subset \mathbb P_\Omega$. A decision-making process with prior $q$, posterior $p^\ast \in \Gamma$, and cost function $C_q$ is called bounded rational if its posterior satisfies
\begin{equation} \label{eq:optU}
p^\ast = \mathop{\mathrm{argmax}}_{p\in \Gamma} \Big \{ \mathbb E_p[U]  \, \Big| \, C_q(p) \leq C_0 \Big \} \, ,
\end{equation}

\noindent for a given upper bound $C_0\geq 0$, or equivalently
\begin{equation}\label{eq:optC}
p^\ast = \mathop{\mathrm{argmin}}_{p\in \Gamma} \Big \{ C_q(p) \, \Big| \, \mathbb E_p[U] \geq U_0 \Big \} \, ,
\end{equation}

\noindent for a given lower bound $U_0\in\mathbb R$. In the case when the process constraints disappear, i.e.~if $\Gamma=\mathbb P_\Omega$, then a bounded rational decision-maker is called \textit{bounded-optimal}.
\end{Definition}

The equivalence between \eqref{eq:optU} and \eqref{eq:optC} is easily seen from the equivalent optimization problem given by the formalism of Lagrange multipliers \cite{Everett1963},
\begin{equation}\label{eq:unconstrained}
p_\beta \coloneqq  \mathop{\mathrm{argmin}}_{p\in \Gamma} \Big ( C_q(p) - \beta\,  \mathbb E_p[U] \Big) = \mathop{\mathrm{argmax}}_{p\in \Gamma} \Big ( \mathbb E_p[U] - \tfrac{1}{\beta}\,  C_q(p) \Big) \, ,
\end{equation}

\noindent where the cost or utility constraint is expressed by a trade-off between utility and cost, or cost and utility, with a trade-off parameter given by the Lagrange multiplier $\beta$, which is chosen such that the constraint given by $C_0$ or $U_0$ is satisfied. It is easily seen from the maximization problem on the right side of \eqref{eq:unconstrained} that a larger value of $\beta$ decreases the weight of the cost term and thus allows for higher values of the cost function. Hence, $\beta$ parametrizes the amount of resources the decision-maker can afford with respect to the cost function $C_q$, and, at least in non-trivial cases (non-constant utilities) it is therefore a resource parameter with respect to $C_q$ in the sense of Definition \ref{def:relcostfunctions}. In particular, for $\beta\to 0$, the decision-maker minimizes its cost function irrespective of the expected utility, and therefore stays at the prior, $p_0 = q$, whereas $\beta\to\infty$ makes the cost function disappear so that the decision-maker becomes purely rational with a Dirac posterior centered on the optima $x^\ast$ of the utility function $U$.

For example, in Figure \ref{fig:14} we can see how the posteriors $(p_\beta)_{\beta\geq 0}$ of bounded-optimal decision-makers with different cost functions for $N=3$ and with utility $U=(0.8,1.0,0.4)$ leave a trace in probability space, by moving away from an exemplary prior $q=(\frac{1}{3},\frac{1}{2},\frac{1}{6})$ and eventually arriving at the rational solution $\delta_{(0,1,0)}$.

\begin{figure}
\centering
\noindent\makebox[1.0\textwidth]{\includegraphics[width = 1.1\textwidth]{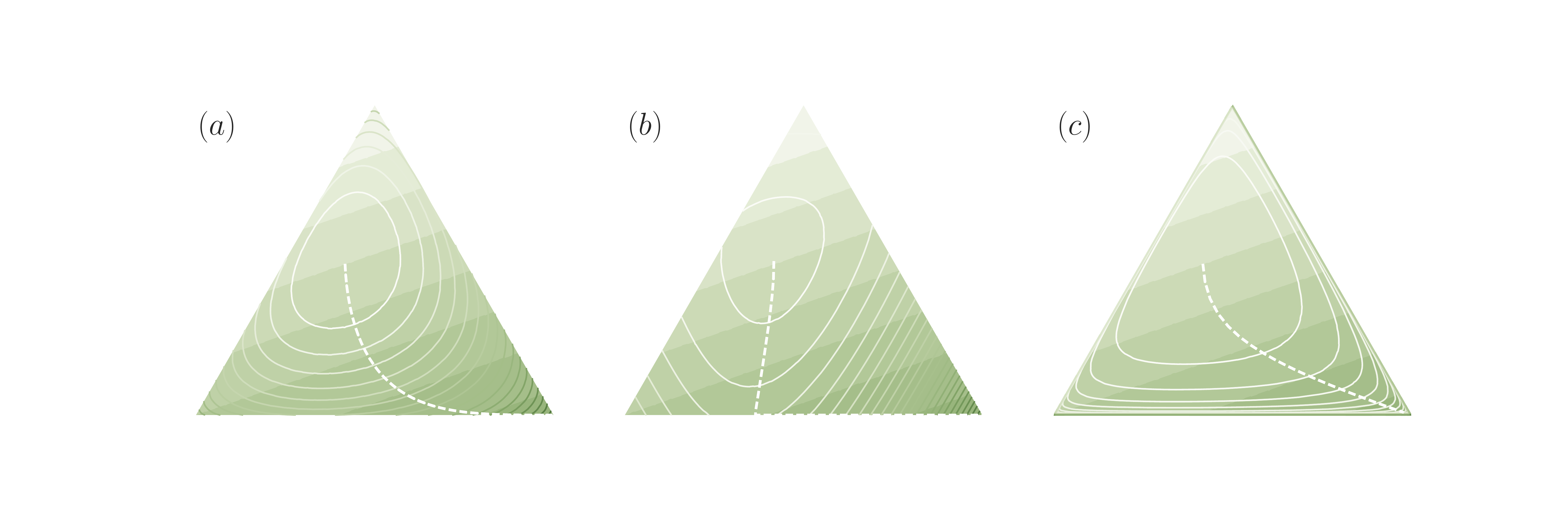}} 

\vspace{-20pt}
\caption{Paths of bounded-optimal decision-makers in $\mathcal P(\Omega)$ for $N=3$. The straight lines in the background denote level sets of expected utility, the solid lines are level sets of the cost functions, and the dashed curves represent the paths $(p_\beta)_{\beta\geq 0}$ of a bounded-optimal decision-maker given by \eqref{eq:unconstrained} with $U=(0.8,1.0,0.4)$, prior $q=(\frac{1}{3},\frac{1}{2},\frac{1}{6})$, and cost functions given by $(a)$ Kullback-Leibler divergence, $(b)$ Tsallis relative entropy of order $\alpha = 3$, and $(c)$ Burg relative entropy.}
\label{fig:14}
\end{figure}

For informational costs (i.e.~proportional to Kullback-Leibler divergence), $\beta$ is a resource parameter with respect to \textit{any} cost function. 

\begin{Proposition}\label{brdmproc}
If $(p_\beta)_{\beta\geq 0}$ is a family of bounded-optimal posteriors given by \eqref{eq:unconstrained} with $C_q(p) = D_{KL}(p\|q)$, then $\beta$ is a resource parameter with respect to any cost function, in particular
\begin{equation}\label{BRprec}
q = p_0 \, \precnsim_q \, p_{\beta'} \, \precnsim_q \, p_{\beta} \qquad \forall \beta',\beta \ \ \text{with} \ \ \beta'<\beta \,.
\end{equation}
\end{Proposition}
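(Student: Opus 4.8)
The plan is to show that the Kullback--Leibler bounded-optimal posteriors $p_\beta$ given by \eqref{eq:unconstrained} are totally ordered under $\prec_q$ as $\beta$ increases, and then invoke Definition \ref{def:relcostfunctions}: any cost function $C_q$ is strictly monotone with respect to $\prec_q$, so $\beta \mapsto C_q(p_\beta)$ is strictly increasing, i.e. $\beta$ is a resource parameter for $C_q$. Thus the whole statement reduces to establishing the chain \eqref{BRprec}. First I would recall the explicit form of the maximizer of $\mathbb E_p[U] - \tfrac1\beta D_{KL}(p\|q)$: a standard Lagrangian computation gives the Gibbs/Boltzmann form $p_\beta(x) = \tfrac{1}{Z_\beta}\, q(x)\, e^{\beta U(x)}$ with $Z_\beta = \sum_x q(x) e^{\beta U(x)}$. (That $\Gamma = \mathbb P_\Omega$ here is exactly the bounded-optimal case, so there is no extra projection to worry about.) The endpoints are immediate: $p_0 = q$ since $e^0 \equiv 1$, and for $\beta' < \beta$ the expected utility is strictly increasing in $\beta$ (the derivative $\tfrac{d}{d\beta}\mathbb E_{p_\beta}[U] = \mathrm{Var}_{p_\beta}(U) \geq 0$, strictly unless $U$ is $p_\beta$-a.s.\ constant), so in the non-trivial case $p_\beta \ne p_{\beta'}$ and the orderings will be strict, giving $\precnsim_q$ rather than just $\prec_q$.

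The heart of the argument is to show $p_{\beta'} \prec_q p_\beta$ whenever $\beta' \le \beta$. I would use characterization $(iii)$ of Theorem \ref{thm:charrel}: it suffices to exhibit a $q$-stochastic matrix $A$ (so $Ae = e$ and $qA = q$) with $p_{\beta'} = p_\beta A$. Write $\lambda = \beta'/\beta \in [0,1]$ and note $p_{\beta'}(x) \propto q(x) e^{\beta' U(x)} = q(x)^{1-\lambda}\,\big(q(x) e^{\beta U(x)}\big)^{\lambda} = q(x)^{1-\lambda} p_\beta(x)^\lambda Z_\beta^\lambda$. So up to normalization $p_{\beta'}$ is, entrywise, a geometric interpolation between $q = p_0$ and $p_\beta$. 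The natural candidate for $A$ is the matrix with entries $A_{ij} = q_j^{\,1-\lambda} p_{\beta,j}^{\,\lambda - ?}\cdots$ — more cleanly, I would work with the representations $\Lambda_q$ and characterization $(v)$ / Definition \ref{def:reluncertainty} instead: one shows $\Lambda_q p_{\beta'} \prec \Lambda_q p_\beta$ by checking the partial-sum inequalities of Theorem \ref{thm:characterization}$(vi)$ after verifying that $(p_{\beta,i}/q_i)_i$ and $(p_{\beta',i}/q_i)_i$ are ordered the same way (both are monotone in $U(x)$, since $p_\beta(x)/q(x) \propto e^{\beta U(x)}$) and that the map $t \mapsto t^{\beta'/\beta}$ applied to the sorted likelihood ratios, renormalized, produces a sequence majorized by the original — this is precisely a statement that raising a probability-weighted family to a sub-unit power and renormalizing makes it "more mixed". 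Concretely, I would verify $\sum_{i\le k} q_i (p_{\beta',i}/q_i) \le \sum_{i\le k} q_i (p_{\beta,i}/q_i)$ for the weighted partial sums in decreasing-ratio order; by the integral representation of convex functions this is equivalent to $(iv)$ of Theorem \ref{thm:charrel}, i.e. $\sum_i q_i f(p_{\beta',i}/q_i) \le \sum_i q_i f(p_{\beta,i}/q_i)$ for all convex $f$, and that last inequality follows because $x \mapsto (x/q \text{-mixture of } e^{\beta' U} \text{ vs } e^{\beta U})$ — here the clean route is Jensen applied to the tilted measure.

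The cleanest self-contained route, which I would actually write up, is: fix a convex $f$; by Theorem \ref{thm:charrel}$(iv)$ it is enough to show $g(\beta) := \sum_i q_i f(p_{\beta,i}/q_i)$ is non-decreasing in $\beta \ge 0$. Differentiate under the sum using $\tfrac{d}{d\beta} \log(p_{\beta,i}/q_i) = U_i - \mathbb E_{p_\beta}[U]$, obtaining $g'(\beta) = \sum_i q_i f'(p_{\beta,i}/q_i)\,\tfrac{p_{\beta,i}}{q_i}\,(U_i - \mathbb E_{p_\beta}[U]) = \mathbb E_{p_\beta}\!\big[f'(p_\beta/q)\,(U - \mathbb E_{p_\beta}[U])\big]$; since $p_\beta/q$ is an increasing function of $U$ and $f'$ is non-decreasing, $f'(p_\beta/q)$ is a non-decreasing function of $U$, so this is a covariance of two comonotone functions of $U$ under $p_\beta$ and hence $\ge 0$ (Chebyshev's sum / correlation inequality). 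That gives $p_{\beta'} \prec_q p_\beta$; strictness of $C_q(p_{\beta'}) < C_q(p_\beta)$ then comes from the equality clause of Definition \ref{def:relcostfunctions} together with $p_{\beta'} \not\sim_q p_\beta$, which in turn follows because $p_{\beta'} \ne p_\beta$ whenever $U$ is non-constant (established above via the variance) and $\prec_q$ restricted to distributions with the same $q$-ordered likelihood profile is antisymmetric. I expect the main obstacle to be the bookkeeping around strictness and the degenerate cases (constant utility on the support, ties in $U$ forcing ties in the likelihood ratios, rationality approximation for irrational $q_i$ in the definition of $\Lambda_q$); the monotonicity inequality itself is the short comonotone-covariance argument above, and the reduction from $\precnsim_q$ to "$C_q$ strictly increasing along the path" is immediate from Definition \ref{def:relcostfunctions}.
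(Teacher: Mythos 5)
Your final argument is correct and reaches the same conclusion, but by a genuinely different route than the paper. The paper proves $p_{\beta'}\prec_q p_\beta$ via its new partial-sum characterization $(v)$ of Theorem \ref{thm:charrel}: it computes $\partial_\beta p_{\beta,i}=p_{\beta,i}(U_i-\mathbb E_{p_\beta}[U])$ with the $U_i$ sorted decreasingly and shows directly that the weighted partial sums $\sum_{i\le l-1}p_{\beta,i}^\downarrow+a_q(k,l)p_{\beta,l}^\downarrow$ are non-decreasing in $\beta$, using a Chebyshev-sum-type rearrangement on the truncated averages $S_k$. You instead work with characterization $(iv)$ (the $f$-divergence criterion due to Joe): you differentiate $g(\beta)=\sum_i q_i f(p_{\beta,i}/q_i)$ and identify $g'(\beta)=\mathrm{Cov}_{p_\beta}\bigl(f'(p_\beta/q),\,U\bigr)$ as a covariance of two comonotone functions of $U$, hence nonnegative. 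Your computation is right, and this buys a shorter, more conceptual monotonicity argument that bypasses the paper's characterization $(v)$ entirely; what the paper's route buys is that it never needs $f$ to be differentiable (characterization $(iv)$ quantifies over all \emph{continuous} convex $f$, so your differentiation argument strictly needs an extra approximation step or the use of one-sided derivatives of $f$ -- a routine but necessary remark). Two smaller points. First, your strictness discussion is the weakest part: the claim that $\prec_q$ is antisymmetric on distributions with a common $q$-ordered likelihood profile is not obviously enough as stated. The cleanest fix inside your own framework is to note that for the strictly convex $f(t)=t\log t$ the covariance above is strictly positive when $U$ is non-constant, so $D_{KL}(p_{\beta'}\|q)<D_{KL}(p_\beta\|q)$; if $p_\beta\prec_q p_{\beta'}$ held, characterization $(iv)$ would force the reverse inequality. (The paper instead shows $\beta\mapsto Z_\beta^{-1}e^{\beta U_1}$ is increasing and contradicts the $l=1$ case of $(v)$.) Second, the doubly-stochastic-matrix/geometric-interpolation sketch in your middle paragraph is left incomplete, but since you explicitly discard it in favor of the covariance route, it does not affect the correctness of what you would actually write up.
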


This generalizes a result in \cite{Rossignoli2004} to the case of non-uniform priors, by making use of our new characterization $(v)$ of $\prec_q$, by which it suffices to show that the function \smash{$\beta \mapsto \sum_{i=1}^{l-1} (p_{\beta,i})^{\downarrow} + a_q(k,l) (p_{\beta,l})^{\downarrow}$} is monotonically increasing for all $k,l$ specified in Theorem \ref{thm:charrel} (see Proposition \ref{prop:boundedoptDM} in the appendix for details). For simplicity, we restrict ourselves to the case of the Kullback-Leibler divergence, however the proof is analogous for cost functions of the form \eqref{genDiv} with $f$ being differentiable and strictly convex on $[0,1]$ (so that $f'$ is strictly monotonically increasing and thus invertible on $[0,1]$, see \cite{Rossignoli2004} for the case of uniform priors).

Hence, for any $\beta >0$, the posteriors $(p_{\beta'})_{\beta'<\beta}$ of a bounded-optimal decision-making process with the Kullback-Leibler divergence as cost function can be regarded as the steps of a decision-making process (i.e.~satisfying \eqref{DM}) with posterior $p_\beta$, where each step optimally trades off utility against informational cost. This means that with increasing $\beta$ the posteriors $p_\beta$ do not only decrease entropy in the sense of the Kullback-Leibler divergence, but also in the sense of any other cost function.

The important case of bounded-optimal decision-makers with informational costs is termed \textit{information-theoretic bounded rationality} \cite{Ortega2010,Tishby2011,Ortega2013} and is studied more closely in the following sections.

\bigskip

\subsection{Information-theoretic bounded rationality} \label{sec:inf-theo-br}

For bounded-optimal decision-making processes with informational costs the unconstrained optimization problem \eqref{eq:unconstrained} takes the form $\max_{p\in \mathbb P_\Omega} \mathcal F[p]$, where
\begin{equation} \label{freeenergy}
\mathcal F[p]\coloneqq \mathbb E_p[U] - \tfrac{1}{\beta} D_{KL}(p\|q) \, ,
\end{equation}

\noindent which has a unique maximum $p_\beta$, the bounded-optimal posterior given by
\begin{equation} \label{boltzmann}
p_\beta(x) \, = \, \frac{1}{Z_\beta} \, q(x) \, e^{\beta U(x)} \, 
\end{equation}

\noindent with normalization constant $Z_\beta$. This form can easily be derived by finding the zeros of the functional derivative of the objective functional \eqref{freeenergy} with respect to $p$ (with an additional normalization constraint), whereas the uniqueness follows from the convexity of the mapping $p\mapsto D_{KL}(p\|q)$. For the actual maximum of $\mathcal F$ we obtain 
\[
\mathcal F_\beta \coloneqq \max_{p\in \mathbb P_\Omega} \mathcal F[p] \ = \ \mathcal F[p_\beta] \ = \ \frac{1}{\beta} \log Z_\beta \, ,
\]

\noindent so that $p_\beta(x) = q(x) \, e^{\beta (U(x) - \mathcal F_\beta)}$.

Due to its analogy with physics, in particular thermodynamics (see e.g. \cite{Ortega2013}), the maximization of \eqref{freeenergy} is known as the \textit{Free Energy principle} of \textit{information-theoretic bounded rationality}, pioneered in \cite{Ortega2010,Tishby2011,Ortega2013}, further developed in \cite{Genewein2015,Gottwald2018}, and applied in recent studies of artificial systems, such as generative neural networks trained by Markov chain Monte Carlo methods \cite{Hihn2018}, or in reinforcement learning as an adaptive regularization strategy \cite{Leibfried2018,Moya2019}, as well as in recent experimental studies on human behavior \cite{Ortega2016a,Schach2018}. Note that there is a formal connection of \eqref{freeenergy} and the Free Energy principle of \textit{active inference} \cite{Friston2009}, however, as discussed in \cite[Sec.~6.3]{Gottwald2018}, both Free Energy principles have conceptually different interpretations.

\begin{Example}[Bayes rule as a bounded-optimal posterior]\label{ex:bayesianinference}

{\normalfont
In Bayesian inference, the parameter $\theta$ of the distribution $p_\theta$ of a random variable $Y$ is inferred from a given dataset $d=\{y_1,\dots,y_N\}$ of observations of $Y$ by treating the parameter itself as a random variable $\Theta$ with a prior distribution $q(\Theta)$. The parametrized distribution of $Y$ evaluated at an observation $y_i\in d$ \textit{given} a certain value of $\Theta$, i.e.~$p(y_i|\Theta{=}\theta)$, is then understood as a function of $\theta$, known as the \textit{likelihood} of the datapoint $y_i$ under the assumption of $\Theta=\theta$. After seeing the dataset $d$, the belief about $\Theta$ is updated by using Bayes rule
\[
p(\theta) \, = \, \frac{q(\theta) p(d|\theta)}{\mathbb E_{q(\Theta)}[p(d|\Theta)] } \, .
\]

\noindent This takes the form of a bounded-optimal posterior \eqref{boltzmann} with $\beta = N$ and utility function given by the average log-likelihood per datapoint,
\[
U(\theta) \coloneqq \frac{1}{N} \log p(d|\theta) = \frac{1}{N} \sum_{i=1}^N \log(p(y_i|\theta)) \, ,
\]

\noindent since then Bayes rule reads 
\begin{equation}\label{boltzmann-bayes}
p(\theta) \, = \, \frac{1}{Z}\,  q(\theta)\, e^{\beta \, U(\theta)}.
\end{equation}

\noindent The corresponding Free Energy \eqref{freeenergy}, that is maximized by \eqref{boltzmann-bayes}, 
\begin{align} \nonumber
& \mathcal F[p(\Theta)] \, = \, \mathbb E_{p(\Theta)}[U(\Theta)] - \frac{1}{\beta} D_{KL}(p(\Theta)\|q(\Theta)) \\
& \, = \, \frac{1}{N} \mathbb E_{p(\Theta)}\Big[\log p(d|\Theta)  - \log \frac{p(\Theta)}{q(\Theta)}\Big] \, = \, -\frac{1}{N} D_{KL}\big(p(\Theta) \| q(\Theta) p(d|\Theta)\big) \label{variationalfreeenergy}
\end{align}

\noindent coincides with the \textit{variational Free Energy} $\mathcal F_{var}$ from Bayesian statistics. Indeed, from \eqref{variationalfreeenergy} it is easy to see that $\mathcal F$ assumes its maximum when $p(\Theta)$ is proportional to $q(\Theta) p(d|\Theta) $, that is when $p(\Theta)$ is given by \eqref{boltzmann-bayes}. In the literature, $\mathcal F_{var}$ is used in the variational characterization of Bayes rule, in cases when the form \eqref{boltzmann-bayes} cannot be achieved exactly but instead is approximated by optimizing \eqref{variationalfreeenergy} over the parameters $\vartheta$ of a parametrized distribution $p_\vartheta(\Theta)$ \cite{Hinton1993,Mackay1995}.
}
\end{Example}

In the following section, we will see that the Free Energy $\mathcal F$ of a bounded rational decision-making process satisfies a recursivity property, which allows the interpretation of $\mathcal F$ as a \textit{certainty-equivalent}.

\bigskip

\begin{figure}
\centering
\includegraphics[width=\textwidth]{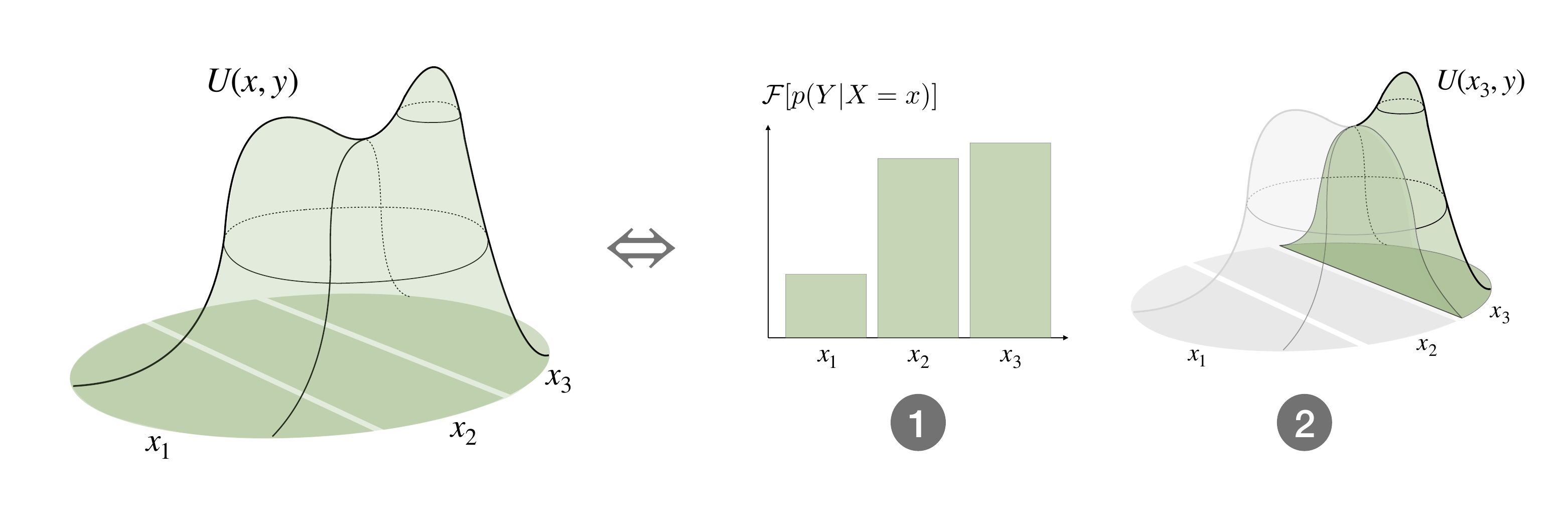}

\vspace{-10pt}
\caption{Recursivity of the Free Energy under coarse-graining. The decision about $Z=(X,Y)$ is equivalent to a two-step process consisting of the decision about $X$ and the decision about $Y$ given $X$. The objective function for the first step is the Free Energy of the second step.}
\label{fig:5}
\end{figure}

\subsection{The recursivity of $\mathcal F$ and the value of a decision problem}

Consider a bounded-optimal decision-maker with an informational cost function deciding about a random variable $Z$ with values in $\Omega$ that is decomposed into the random variables $X$ and $Y$, i.e. $Z=(X,Y)$. This decomposition can be understood as a two-step process, where first a decision about a partition $A_i$ of the full search space $\Omega=\bigcup_{i\in I} A_i$ is made, represented by a random variable $X$ with values in $I$, followed by a decision about $Y$ inside the partition selected by $X$ (see Figure \ref{fig:4a}). 

Since $p(Z) = p(X) p(Y|X)$, by the additivity of the Kullback-Leibler divergence (Proposition \ref{prop:charDKL}), we have 
\begin{align}\nonumber
& \mathcal F[p(Z)] =  \mathbb E_{p(Z)}[U(Z)] - \frac{1}{\beta} D_{KL}(p(Z)\|q(Z)) \\ \nonumber
& =  \mathbb E_{p(X)}\Big[\mathbb E_{p(Y|X)}[U(X,Y)] - \frac{1}{\beta} D_{KL}\big(p(Y|X)\|q(Y|X)\big)\Big] - \frac{1}{\beta} D_{KL}(p(X)\|q(X)) \, ,
\end{align}

\noindent and therefore, if $\mathcal F_\beta[p(Y|X)] \coloneqq \mathbb E_{p(Y|X)}[U(X,Y)] - \frac{1}{\beta} D_{KL}(p(Y|X)\|q(Y|X))$ denotes the Free Energy of the second step,
\begin{equation} \label{recursivity}
\mathcal F[p(X)p(Y|X)] \, = \, \mathbb E_{p(X)}\big[\mathcal F_\beta[p(Y|X)]\big] - \frac{1}{\beta} D_{KL}(p(X)\|q(X)) \, .
\end{equation}

\noindent In particular, the Free Energy $\mathcal F_\beta[p(Y|X)]$ of the second decision-step plays the role of the utility function of the first decision-step (see Figure \ref{fig:5}). In Equation \eqref{recursivity}, the two decision-steps have the same resource parameter $\beta$, controlling the strength of the constraint on the total informational costs 
\[
D_{KL}(p(Z)\|q(Z)) \, = \, D_{KL}(p(X)\|q(X))  + \mathbb E_{p(X)} \big[D_{KL}(p(Y|X)\|q(Y|X))\big] \, .
\] 

\noindent More generally, each step might have a separate information-processing constraint, which requires two resource parameters $\beta_1$ and $\beta_2$, and results in the total Free Energy 
\[
\mathcal F[p(X),p(Y|X)] \, = \, \mathbb E_{p(X)}\big[\mathcal F_{\beta_2}[p(Y|X)]\big] - \frac{1}{\beta_1} D_{KL}(p(X)\|q(X)) \, .
\]

\begin{Example}\label{ex:freeenergy} 
{\normalfont
Consider a bounded-rational decision-maker with informational cost function and a utility function $U$ defined on a set $\Omega = \{z_1,\dots,z_4\}$ with values as given in Figure \ref{fig:6} and an information-processing bound of $0.2$ bits ($\beta\approx 0.9$). If we partition $\Omega$ into the disjoint subsets $\{z_1,z_2\}$ and $\{z_3,z_4\}$, then the decision about $Z$ can be decomposed into two steps, $Z=(X,Y)$, the decision about $X$ corresponding to the choice of the partition and the decision about $Y$ given $X$ corresponding to the choice of $z_i$ inside the given partition determined by $X$. According to Equation \eqref{recursivity}, the choice of the partition $X=x_i$ is not in favor of the achieved expected utility inside each partition, but of the Free Energy (see Figure \ref{fig:6}). 
}
\end{Example}

Therefore, a bounded rational decision-maker that has the choice among decision-problems ideally should base its decision not on the expected utility that might be achieved but on the Free Energy of the subordinate problems. In other words, the Free Energy quantifies the value of a decision-problem, that besides of the achieved average utility also takes the information-processing costs into account.

\bigskip

\begin{figure}
\centering
\includegraphics[width=\textwidth]{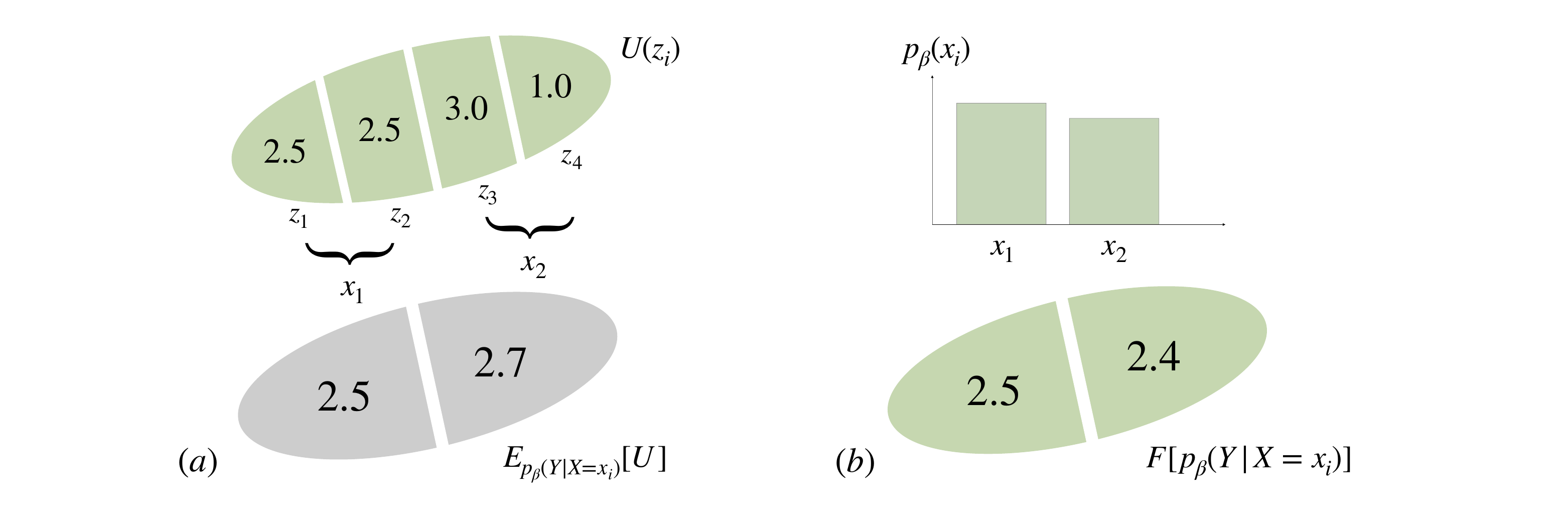}

\vspace{-5pt}
\caption{The Free Energy as certainty-equivalent (Example \ref{ex:freeenergy}). $(a)$ Utility function $U$ as a function of $z_i$ (top) and expected utilities for the coarse-grained partitions $\{z_1,z_2\}$ and $\{z_3,z_4\}$ corresponding to the choices $x_1$ and $x_2$, respectively, for a bounded-rational decision-maker with $\beta=0.9$ (bottom). $(b)$ The bounded optimal probability distribution over $x_i$ (top) does not correspond to the expected utilities in $(a)$ but to the Free Energy of the second decision-step, i.e. the decision about $Y$ given $X$ (bottom).}
\label{fig:6}
\end{figure}

\subsection{Multi-task decision-making and the optimal prior} \label{sec:multitask}

So far, we have considered decision-making problems with utility functions defined on $\Omega$ only, modelling a single decision-making task. This is extended to multi-task decision-making problems by utility functions of the form $U:\mathcal W\times \Omega\to \mathbb R, (w,x)\mapsto U(w,x)$, where the additional variable $w\in\mathcal W$ represents the current state of the world. Different world states $w$ in general lead to different optimal decisions $x^\ast(w) \coloneqq \mathrm{argmax}_{x\in \Omega} U(w,x)$. For example, in a chess game the optimal moves depend on the current board configurations the players are faced with. 

The prior $q$ for a bounded-rational multi-task decision-making problem may either depend or not depend on the world state $w\in\mathcal W$. In the first case, the multi-task decision-making problem is just given by multiple single-task problems, i.e. for each $w\in\mathcal W$, $q(X|W=w)$ and $p(X|W=w)$ are the prior and posterior of a bounded rational decision-making process with utility function $x\mapsto U(x,w)$, as described in the previous sections. In the case when there is a single prior for all world states $w\in\mathcal W$, the Free Energy is
\begin{equation} \label{freeenergy-multitask}
\mathcal F[p(X|W)] = \mathbb E_{p(W)}\Big[ \mathbb E_{p(X|W)}[U(W,X)] - \frac{1}{\beta} D_{KL}(p(X|W)\|q(X))\Big] \, ,
\end{equation}

\noindent where $p(W)$ is a given world state distribution. Note that, for simplicity we assume that $\beta$ is independent of $w\in\mathcal W$, which means that only the average information-processing is constrained, in contrast to the information-processing being constrained for each world state which in general would result in $\beta$ being a function of $w$. Similarly to single-task decision-making (Equation \eqref{boltzmann}) the maximum of \eqref{freeenergy-multitask} is achieved by
\begin{equation} \label{boltzmann-multitask}
p_\beta(x|w)  \, = \, \frac{1}{Z_\beta(w)} \, q(x) \, e^{\beta U(w,x)} \,
\end{equation}

\noindent with normalization constant $Z_\beta(w)$. Interestingly, the deliberation cost in \eqref{freeenergy-multitask} depends on how well the prior was chosen to reach all posteriors without violating the processing constraint. In fact, viewing the Free Energy \eqref{freeenergy-multitask} as a function of both, posterior \textit{and prior}, $\mathcal F[p(X|W)] = \mathcal F[p(X|W),q(X)]$, and optimizing for the prior yields the marginal of the joint distribution $p(W,X)=p(W)p(X|W)$, i.e. the mean of the posteriors for the different world states,
\begin{equation} \label{optimalprior}
q^\ast(X) \, \coloneqq \, \mathrm{argmax}_{q(X)} \mathcal F[p(X|W),q(X)] \, = \, \mathbb E_{p(W)}[p(X|W)] \, .
\end{equation}

\noindent Similarly to \eqref{boltzmann}, Equation \eqref{optimalprior} follows from finding the zeros of the functional derivative of the Free Energy with respect to $q(X)$ (modified by an additional term for the normalization constraint).

Optimizing the Free Energy $\mathcal F[p(X|W),q(X)]$ for both prior and posterior can be achieved by iterating Equations \eqref{boltzmann-multitask} and \eqref{optimalprior}. This results in an alternating optimization algorithm, originally developed independently by Blahut and Arimoto to calculate the capacity of a memoryless channel \cite{Blahut1972,Arimoto1972} (see \cite{Csiszar1984} for a convergence proof by Csisz{\'a}r and Tusn{\'a}dy). Note that 
\[
\mathcal F[p(X|W),q^\ast(X)] \, = \, \mathbb E_{p(W)p(X|W)}[U(W,X)] - \frac{1}{\beta}  \, I(W;X) \, ,
\]  

\noindent in particular, that the information-processing cost is now given by the mutual information $I(W;X)$ between the random variables $W$ and $X$. In this form, we can see that the Free Energy optimization with respect to prior and posterior is equivalent to the optimization problem in classical rate distortion theory \cite{Shannon1959}, where $U$ is given by the negative of the distortion measure. 

Similarly as in rate-distortion theory, where compression algorithms are anaylized with respect to the rate-distortion function, any decision-making system can now be analyzed with respect to informational bounded-optimality. More precisely, when plotting the achieved expected utility against the information-processing resources of a bounded-rational decision-maker with optimal prior, we obtain a pareto-optimality curve that forms an efficiency-frontier that cannot be surpassed by any decision-making process (see Figure \ref{fig:7}$(c)$).

\medskip

\subsection{Multi-task decision-making with unknown world state distribution}

\label{sec:unknownworlddistribution}
{\normalfont
A bounded rational decision-making process with informational cost and utility $U:\mathcal W\times\Omega \to \mathbb R$ that has an optimal prior $q^\ast(X)$ given by the marginal \eqref{optimalprior} must have perfect knowledge about the world state distribution $p(W)$. In contrast, here we consider the case when the exact shape of the world state distribution is unknown to the decision-maker and therefore has to be inferred from the already seen world states. More precisely, we assume that the world state distribution is parametrized by a parameter $\theta\in\mathbb R$, i.e. $p(W) = p_{\theta_{true}}(W)$ for a given parametrized distribution $p_\theta(W)$. Since the true parameter $\theta_{true}$ is unknown, $\theta$ is treated as a random variable by itself, so that $p_\theta(W) = p(W|\Theta=\theta)$. After a dataset $d = \{w_1,\dots,w_N\} \in \mathcal W^N$ of samples from $p(W|\Theta=\theta_{true})$ has been observed the joint distribution of all involved random variables can be written as
\[
p(\Theta,D,W,X) = p(\Theta) p(D|\Theta) p(W|\Theta) p(X|D,W)
\]

\noindent where $p(\Theta)$ denotes the decision-maker's prior belief about $\Theta$, and $p(D=d|\Theta) = \prod_{i=1}^N p(w_i|\Theta)$ is the likelihood of the previously observed world states. Therefore, the resulting (multi-task) Free Energy (see Equation \eqref{freeenergy-multitask}) is given by
\begin{equation}\label{freeenergy-multitaskinference}
\mathbb E_{p(\Theta)p(D|\Theta)p(W|\Theta)}\Big[\mathbb E_{p(X|D,W)}[U(W,X)] - \frac{1}{\beta} D_{KL}(p(X|D,W) \| q(X|D)) \Big] \, .
\end{equation}

It turns out that we obtain Bayesian inference as a byproduct of optimizing \eqref{freeenergy-multitaskinference} with respect to the prior $q(X|D)$. Indeed, by calculating the functional derivative with respect to $q(X|D)$ of the Free Energy \eqref{freeenergy-multitaskinference} plus an additional term for the normalization constraint of $q(X|D)$ (with Lagrange multiplier $\lambda$), we can see that any distribution $q^\ast(X|D)$ that optimizes \eqref{freeenergy-multitaskinference} must satisfy
\[
\frac{1}{\beta} \mathbb E_{p(\Theta)p(W|\Theta)}\Big[\tfrac{p(D|\Theta)p(X|D,W)}{q^\ast(X|D)}\Big] + \lambda = 0 \, ,
\]

\noindent where $\lambda\in\mathbb R$ is chosen such that $q^\ast(X|D{=}d)\in\mathbb P_{\Omega}$ for any $d\in \mathcal W^N$. This is equivalent to
\[
q^\ast(X|D) = \frac{1}{Z_D} \, \mathbb E_{p(\Theta)}\big[p(D|\Theta) \, \mathbb E_{p(W|\Theta)}[p(X|D,W)] \big] \, ,
\]

\noindent where $Z_D$ denotes the normalization constant of $q^\ast(X|D)$, given by $Z_D =  \mathbb E_{p(\Theta)}[p(D|\Theta)]$, since $\mathbb E_{p(X|D,W)}[1] = 1$ as well as $\mathbb E_{p(W|\Theta)}[1] = 1$. Therefore, we obtain
\[
q^\ast(X|D) = \mathbb E_{p^\ast(\Theta|D)}\big[\mathbb E_{p(W|\Theta)}[p(X|D,W)]\big]
\]

\noindent with $p^\ast(\Theta|D)$ as defined in Equation \eqref{bayesianinferenceoptimality}. Hence, we have shown 

\begin{Proposition}[Optimality of Bayesian inference]\label{prop:optBI}
The optimal prior $q^\ast(X|D)$ that maximizes \eqref{freeenergy-multitaskinference} is given by $q^\ast(X|D) = \mathbb E_{p^\ast(\Theta|D)p(W|\Theta)}[p(X|D,W)]$, where $p^\ast(\Theta|D)$ is the Bayes posterior
\begin{equation} \label{bayesianinferenceoptimality}
p^\ast(\Theta|D) \, \coloneqq \, \frac{p(\Theta) p(D|\Theta)}{\mathbb E_{p(\Theta)}[p(D|\Theta)] } \, .
\end{equation}

\end{Proposition}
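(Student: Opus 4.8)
The plan is to isolate the dependence of the multi-task Free Energy \eqref{freeenergy-multitaskinference} on the prior $q(X|D)$. Since the utility term $\mathbb E_{p(X|D,W)}[U(W,X)]$ does not involve $q$, maximizing \eqref{freeenergy-multitaskinference} over $q(X|D)$ is equivalent to minimizing the aggregated divergence
\[
G(q) \ \defeq \ \mathbb E_{p(\Theta)p(D|\Theta)p(W|\Theta)}\big[D_{KL}(p(X|D,W)\,\|\,q(X|D))\big] \, .
\]
Because $W$ and $D$ are conditionally independent given $\Theta$, the joint weight factorizes and, in the discrete setting, $G$ decomposes into a sum over datasets $d\in\mathcal W^N$ in which the $d$-th summand depends on $q$ only through $q(X|d)$. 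Hence the optimization can be carried out separately for each value of $d$.

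First I would fix $d$, expand the relative entropy, and drop the terms independent of $q$, so that the quantity to be minimized over $q(X|d)$ becomes $-\sum_x m_d(x)\log q(x|d)$, where $m_d(x)\defeq \sum_{\theta,w} p(\theta)\,p(d|\theta)\,p(w|\theta)\,p(x|d,w)$ collects all the weights. Using $\sum_x p(x|d,w)=1$ together with $\mathbb E_{p(W|\Theta)}[1]=1$, the total mass is $Z_d\defeq\sum_x m_d(x)=\mathbb E_{p(\Theta)}[p(d|\Theta)]$, the Bayesian evidence of $d$. Subject to the normalization $\sum_x q(x|d)=1$, the non-negativity of the Kullback-Leibler divergence (equivalently, setting the functional derivative to zero with a Lagrange multiplier $\lambda$, exactly as in the display preceding the statement) shows that the unique minimizer is $q^\ast(x|d) = m_d(x)/Z_d$.

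The last step is to recognize the Bayesian structure of the result: dividing $p(\theta)\,p(d|\theta)$ by the evidence $Z_d$ yields precisely the Bayes posterior $p^\ast(\Theta|D)$ of \eqref{bayesianinferenceoptimality}, so that
\[
q^\ast(x|d) \ = \ \sum_\theta \frac{p(\theta)\,p(d|\theta)}{Z_d}\sum_w p(w|\theta)\,p(x|d,w) \ = \ \mathbb E_{p^\ast(\Theta|d)\,p(W|\Theta)}[p(X|d,W)] \, ,
\]
which is the claimed formula. Since $q\mapsto D_{KL}(p\|q)$ is convex, this stationary point is the global maximizer of the Free Energy in $q$, so no second-order argument is needed.

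I do not anticipate a genuine obstacle: the argument is a constrained minimization of a convex functional. The only real care is bookkeeping — keeping the factor $p(d|\Theta)$ correctly inside the $\Theta$-average when differentiating with respect to $q(X|D{=}d)$, which is exactly what makes the normalization constant equal the evidence $\mathbb E_{p(\Theta)}[p(d|\Theta)]$ rather than $1$, and keeping the conditional independence of $W$ and $D$ given $\Theta$ visible so that the world-state average $\mathbb E_{p(W|\Theta)}[\,\cdot\,]$ nests cleanly inside the posterior expectation $\mathbb E_{p^\ast(\Theta|D)}[\,\cdot\,]$.
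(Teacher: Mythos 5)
Your argument is correct and follows essentially the same route as the paper: isolate the $q$-dependence of \eqref{freeenergy-multitaskinference}, solve the normalization-constrained optimization for each dataset $d$, identify the normalizer with the evidence $\mathbb E_{p(\Theta)}[p(d|\Theta)]$, and recognize the Bayes posterior. The only (welcome) difference is that you justify global optimality directly via non-negativity of the Kullback-Leibler divergence rather than relying solely on the stationarity condition from the functional derivative.
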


}

\bigskip

\begin{figure}
\centering
\noindent\makebox[1.0\textwidth]{\includegraphics[width = 1.1\textwidth]{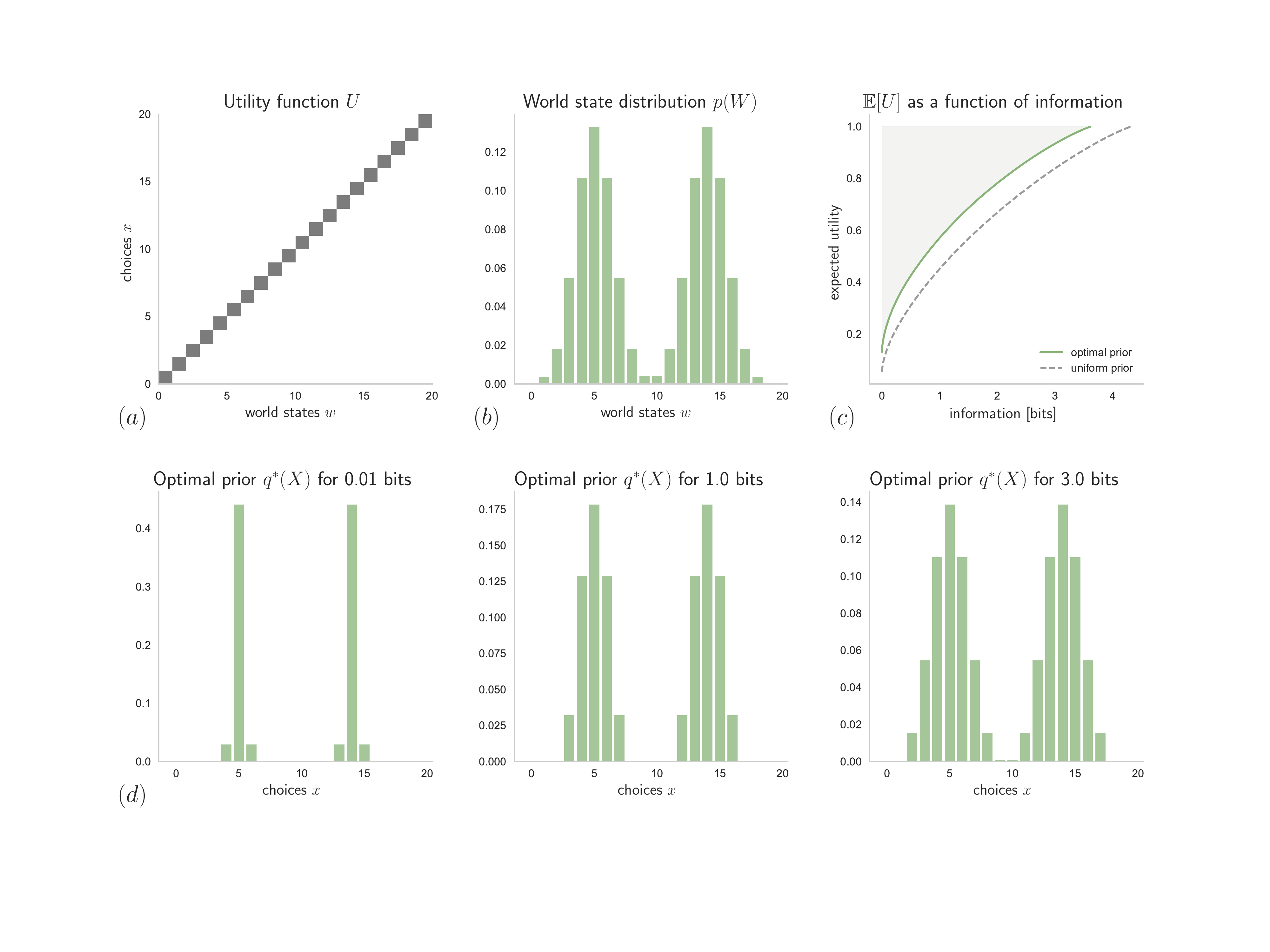}}  

\vspace{-45pt}
\caption{Absolute identification task with known world state distribution. $(a)$ Utility function, $(b)$ world states distribution (a mixture of two Gaussians), $(c)$ expected utility as a function of information-processing resources for a bounded-optimal decision-maker with a uniform and with an optimal prior (the shaded region cannot be reached by any decision-making process), and $(d)$ exemplary optimal priors $q^\ast(X)$ for different information-processing bounds.}
\label{fig:7}
\end{figure}

\section{Example: Absolute identification task with known and unknown stimulus distribution} \label{sec:example}

Consider a bounded rational decision-maker with a multi-task utility function $U$ such that for each $w\in\mathcal W$, $U(w,x)$ is non-zero for only one choice $x\in\Omega$ as shown in Figure \ref{fig:7}. Here, the decision and world spaces are both finite sets of $N=20$ elements. The world state distribution $p(W)$ is given by a mixture of two Gaussian distributions as shown in Figure \ref{fig:7}$(b)$. Due to some world states $w\in \mathcal W$ being more likely than others, there are some choices $x\in\Omega$ that are less likely to be optimal.

\subsection{Known stimulus distribution}
As can be seen in Figure \ref{fig:7}$(c)$ (dashed line), here it is not ideal to have a uniform prior distribution, $q(x) = \frac{1}{N}$ for all $x\in \Omega$. Instead, if the world state distribution is known perfectly and the prior has the form suggested by Equation \eqref{optimalprior}, i.e. $q(x) = \sum_w p(w) p_\beta(x|w)$, then, as can be seen in Figure \ref{fig:7}$(c)$ (solid line), achieving the same expected utility as with a uniform prior requires less informational resources. In particular, the explicit form of $q^\ast$ depends on the resource parameter $\beta$, see Figure \ref{fig:7}$(d)$. For low resource availability (small $\beta$), only the choices that correspond to the most probable world states are considered. However, for $\beta\to \infty$, we have 
\[
q^\ast(x) = \sum_w p(W{=}w) \delta_{w,x} = p(W{=}x) \, ,
\]

\noindent because here $\lim_{\beta\to\infty} p_\beta(x|w)= \delta_{w,x}$ is the posterior of a rational decision-maker, where $\delta_{w,x}$ denotes the Kronecker-$\delta$ (which is only non-zero if $w{=}x$). Hence, for decision-makers with abundant information-processing resources (large $\beta$) the optimal prior $q^\ast(X)$ approaches the form of the world state distribution $p(W)$ (since here $\mathcal W = \Omega$).

\smallskip

\begin{figure}
\noindent\makebox[1.0\textwidth]{\includegraphics[width = 1.1\textwidth]{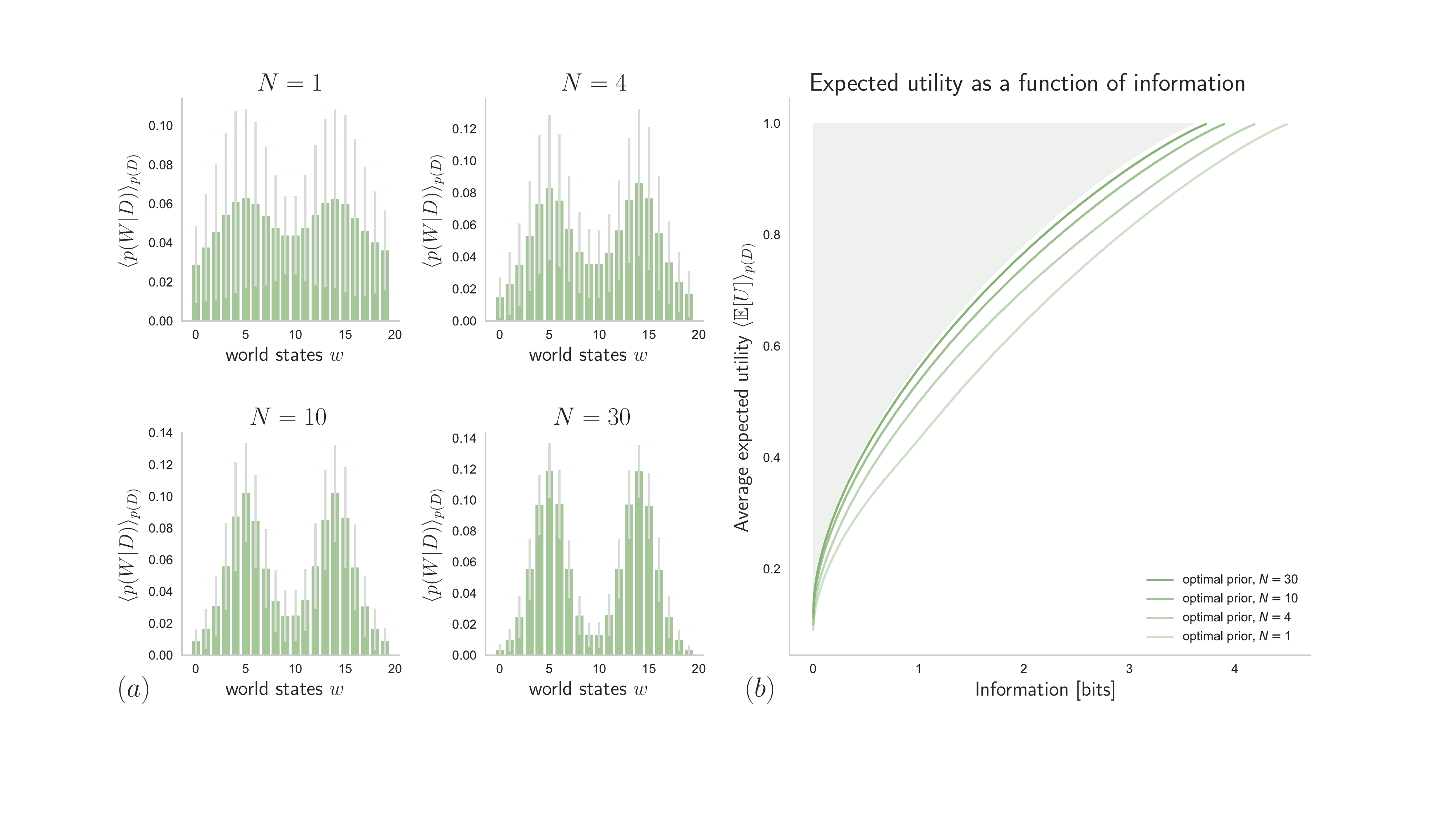}}  

\vspace{-25pt}
\caption{Absolute identification task with unknown world state distribution. $(a)$ Average of inferred world state distributions for different sizes $N$ of datasets (standard-deviations across datasets indicated by error bars). $(b)$ Resulting utility-information curves of a bounded-rational decision-maker with optimal prior that has to infer the world state distribution from datasets with different sizes $N$.}
\vspace{10pt}
\label{fig:8}
\end{figure}

\subsection{Unknown stimulus distribution}


In the case when the decision-maker has to infer its knowledge about $p(W)$ from a set of samples $d=\{w_1,\dots,w_N\}$, we know from Section \ref{sec:unknownworlddistribution} that this is optimally done via Bayesian inference. Here, we assume a mixture auf two Gaussians as a parametrization of $p(W)$, so that $\theta = (\mu_1,\mu_2,\sigma_1,\sigma_2)$, where $\mu_i$ and $\sigma_i$ denote mean and standard-deviation of the $i$-th component respectively (for simplicity, with fixed equal weights for the two mixture components).

In Figure \ref{fig:8}$(a)$, we can see how different values of $N$ affect the belief about the world state distribution, $p(W|D) = \mathbb E_{p(\Theta|D)}[p(W|\Theta)]$, when $p(\Theta|D)$ is given by the Bayes posterior \eqref{bayesianinferenceoptimality} with a uniform prior belief $p(\Theta)$. The resulting expected utilities (averaged over samples from $p(D|\theta_{true})$) as functions of available information-processing resources are displayed in Figure \ref{fig:8}$(b)$, which shows how the performance of a bounded-rational decision-maker with optimal prior and perfect knowledge about the true world state distribution is approached by bounded rational decision-makers with limited but increasing knowledge given by the sample size $N$.


Abstractly, we can view Equation \eqref{bayesianinferenceoptimality} as the bounded optimal solution to the decision-making problem that starts with a prior $p(\Theta)$ and arrives at a posterior $p(\Theta|D\,{=}\,d)$ after processing the samples in $d=\{w_1,\dots,w_N\}$ (see also Example \ref{ex:bayesianinference}). In fact, the posteriors shown in Figure \ref{fig:8}$(a)$ satisfy the requirements for a decision-making process with resource given by the number of datapoints $N$, when averaged over $p(D)$. In particular, by increasing $N$ the posteriors contain less and less uncertainty with respect to the preorder $\prec$ given by majorization. Accordingly, if we plot the achieved expected utility against the number of samples, we obtain an optimality curve similar to Figure \ref{fig:7}$(c)$ and Figure \ref{fig:8}$(b)$. In Figure \ref{fig:9} we can see how Bayesian Inference outperforms Maximum Likelihood when evaluated with respect to the average expected utility of a bounded-rational decision-maker with $2$ bits of information-processing resources.

\begin{figure}
\noindent\makebox[1.0\textwidth]{\includegraphics[width = .7\textwidth]{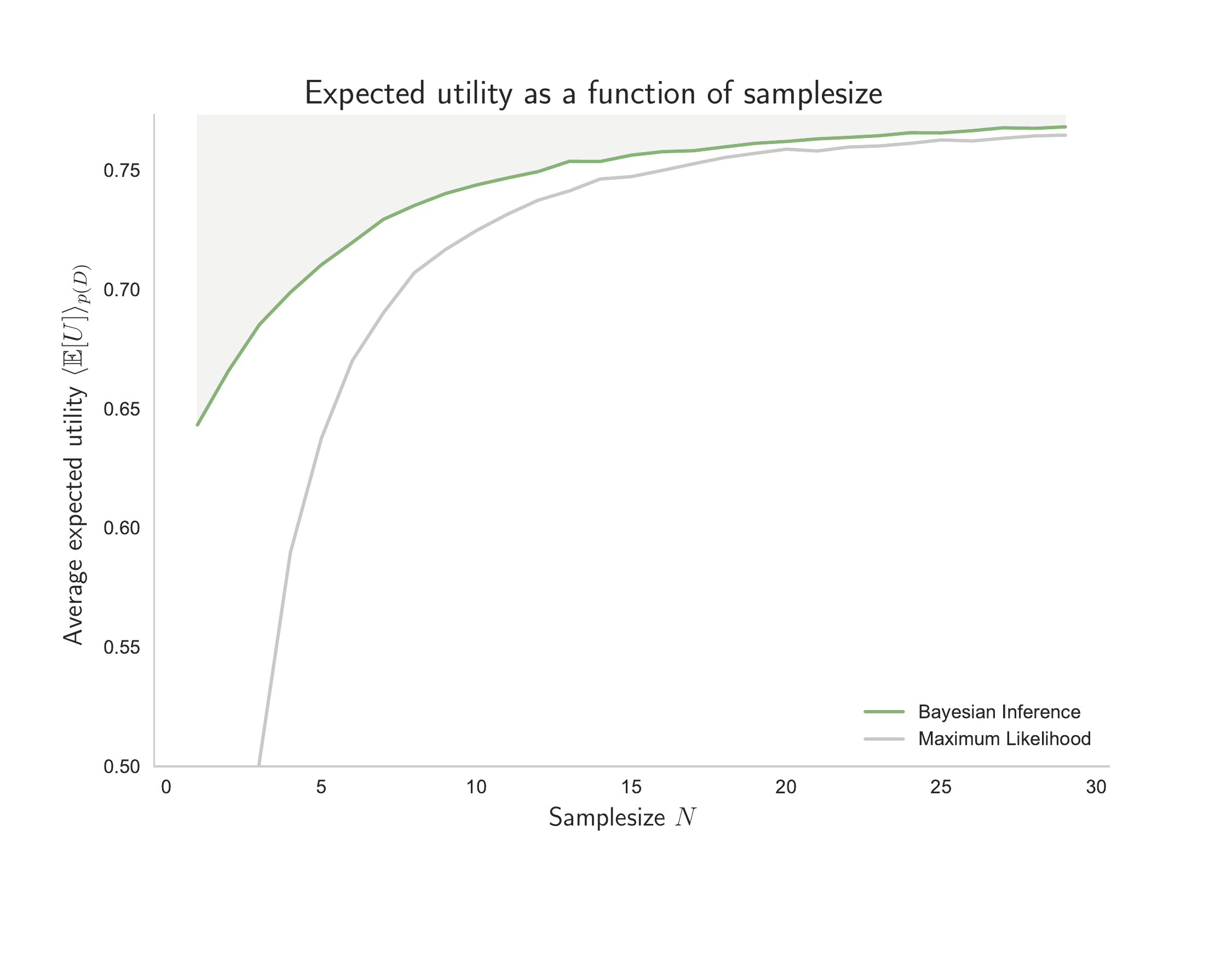}}  

\vspace{-30pt}
\caption{Optimality curve given by Bayesian inference. The average expected utility as a function of $N$ achieved by a bounded-rational decision-maker that infers the world state distribution with Bayes rule \eqref{bayesianinferenceoptimality} forms an efficiency frontier that cannot be surpassed by any other inference scheme, like for example Maximum Likelihood, when starting from the same prior belief about the world.}
\label{fig:9}
\end{figure}

\bigskip

\section{Discussion}\label{sec:discussion}

In this work, we have developed a generalized notion of decision-making in terms of uncertainty reduction. Based on the simple idea of transferring pieces of probability between the elements of a probability distribution, which we call elementary computations, we have promoted a notion of uncertainty which is known in the literature as majorization, a preorder $\prec$ on $\mathbb P_\Omega$. Taking non-uniform initial distributions into account, we extended the concept to the notion of relative uncertainty, which corresponds to relative majorization $\prec_q$. Despite of the large amount of research that has been done on majorization theory, from the early works \cite{Muirhead1902,Lorenz1905,HLP1934} through further developments \cite{Ruch1976,Arnold1987,Ando1989,Pecaric1992,Cohen1993,Bhatia1997,Marshall2011} to modern applications \cite{Brandao2013,Horodecki2013,Gour2015}, there is a lack of results on the more general concept of relative majorization. This does not seem to be due to a lack of interest, as can be seen from the results \cite{Veinott1971,Ruch1978,Joe1990,Latif2017}, but mostly because relative majorization looses some of the appealing properties of majorization which makes it harder to deal with, for example that permutations no longer leave the ordering $\prec_q$ invariant, in contrast to the case of a uniform prior. This restriction does, however, not affect our application of the concept to decision-making, as permutations are not considered as elementary computations, since they do not diminish uncertainty. By reducing the non-uniform to the uniform case, we managed to prove new results on relative majorization (Theorem \ref{thm:charrel}), which then enabled new results in other parts of the paper (Example \ref{ex:genDiv}, Propositions \ref{prop:superadditivityDiv}, \ref{prop:boundedoptDM}), and allowed an intuitive interpretation of our final definition of a decision-making process (Definition \ref{def:DM}) in terms of elementary computations with respect to non-uniform priors (Definition \ref{def:relpigoudalton}).

More precisely, starting from stepwise elimination of uncertain options (Section \ref{sec:probDMprior}), we have argued that decision-making can be formalized by transitions between probability distributions (Section \ref{sec:probDM}), and arrived at the concept of decision-making processes  traversing probability space from prior to posterior by successively moving pieces of probability between options such that uncertainty relative to the prior is reduced (Section \ref{sec:elim}). Such transformations can be quantified by cost functions, which we define as order-preserving functions with respect to $\prec_q$ and capture the resource costs that must be expended by the process. We have shown (Propositions \ref{prop:superadditivity}, \ref{prop:superadditivityDiv}) that many known generalized entropies and divergences, which are examples of such cost functions (Examples \ref{ex:genEnt} and \ref{ex:genDiv}), satisfy superadditivity with respect to coarse-graining. This means that under such cost functions, decision-making costs can potentially be reduced by a more intelligent search strategy, in contrast to Kullback-Leibler divergence, which was characterized as the only additive cost function (Proposition \ref{prop:charDKL}). There are plenty of open questions for further investigation in that regard. First, it is not clear under which assumptions on the cost functions $C_q$ superadditivity could be improved to $C_q(p) = \alpha D_{KL}(p\|q) + r(p,q)$ with $\alpha>0$ and $r(p,q)\geq 0$. Additionally, it would be an interesting challenge to find  sufficient conditions implying super-additivity that include more cost functions than $f$-divergences. The field of information geometry might give further insights on the topic, since there are studies in similar directions, in particular characterizations of divergence measures in terms of information monotonicity and the data-processing inequality \cite{Jiao2014,Amari2009,Amari2010,Amari2018}. One interesting result is the characterization of Kullback-Leibler divergence as the single divergence measure being both an $f$-divergence and a Bregman divergence.

In Section \ref{sec:bounded}, bounded rational decision-makers were defined as decision-making processes that are maximizing utility under constraints on the cost function, or equivalently minimizing resource costs under a minimal utility requirement. In the important case of additive cost functions (i.e.~proportional to Kullback-Leibler divergence), this leads to information-theoretic bounded rationality \cite{Ortega2010,Tishby2011,Ortega2013,Genewein2015,Gottwald2018}, which has precursors in the economic and game-theoretic literature \cite{McKelvey1995,Ochs1995,Mattsson2002,Wolpert2006,Spiegler2011,Howes2009,Todorov2009,Still2009,Tishby2011,Kappen2012,Vul2014,Lewis2014}. We have shown that the posteriors of a bounded rational decision-maker with increasing informational constraints leave a path in probability space that can itself be considered an anytime decision-making process, in each step perfectly trading off utility against processing costs (Prop.~\ref{brdmproc}). In particular, this means that the path of a bounded rational decision-maker with informational cost decreases uncertainty with respect to \textit{all} cost functions, not just Kullback-Leibler divergence. We have also studied the role of the prior in bounded rational multi-task decision-making, where we have seen that imperfect knowledge about the world state distribution leads to Bayesian inference as a byproduct, which is in line with the characterization of Bayesian inference as minimizing prediction surprise \cite{Ortega2010b}, but also demonstrates the wide applicability of the developed theory of decision-making with limited resources.

Finally, in Section \ref{sec:example}, we have presented the results of a simulated bounded rational decision-maker solving an absolute identification task with and without knowledge about the world state distribution. Additionally, we have seen that Bayesian inference can be considered a decision-making process with limited resources by itself, where the resource is given by the number of available datapoints.

\section{Conclusion}\label{sec:conclusion}

To our knowledge, this is the first principled approach to decision-making based on the intuitive idea of Pigou-Dalton-type probability transfers (elementary computations). Information-theoretic bounded rationality has been introduced by other axiomatic approaches before \cite{Ortega2010, Mattsson2002}. For example, in \cite{Ortega2010} a precise relation between rewards and information value was derived by postulating that systems will choose those states with high probability that are desirable for them. This leads to a direct coupling of probabilities and utility, where utility and information inherit the same structure, and only differ with respect to normalization---see \cite{Candeal2001} for similar ideas. In contrast, we assume utility and probability to be independent objects a priori that only have a strict relationship in the case of bounded-optimal posteriors. The approach in \cite{Mattsson2002} introduces Kullback-Leibler divergence as disutility for decision control. Based on Hobson's characterization \cite{Hobson1969}, the authors argue that cost functions should be monotonic with respect to uniform distributions (property \eqref{uniformmonotonicity_main}) and invariant under decomposition, which coincides with additivity under coarse-graining (see Examples \ref{ex:genEnt}, \ref{ex:genDiv}). Both of these assumptions are special cases of our more general treatment, where cost functions must be monotonic with respect to elementary computations and are generally not restricted to being additive.

In the literature, there are many mechanistic models of decision-making that instantiate decision-making processes with limited resources. Examples include reinforcement learning algorithms with variable depth \cite{Keramati2011,Keramati2016}, Markov chain Monte Carlo (MCMC) models where only a certain number of samples can be evaluated \cite{Vul2014, Ortega2014, Hihn2018}, and evidence accumulation models  that accumulate noisy evidence until either a fixed threshold is reached \cite{Laming1968, Ratcliff1978, Townsend1983, Ratcliff2013, Shadlen2006} or where thresholds are determined dynamically by explicit cost functions depending on the number of allowed evidence accumulation steps \cite{Frazier2007, Drugowitsch2012}. 
Many of these concrete models may be described abstractly by resource parameterizations (Def.~\ref{def:relcostfunctions}). More precisely, in such cases the posteriors $\{p_r\}_{r\in I}\subset\Gamma\subset \mathbb P_\Omega$ are generated by an explicit process with process constraints $\Gamma$ and resource parameter $r$. For example, in diffusion processes $r$ may correspond to the amount of time allowed for evidence accumulation, in Monte Carlo algorithms $r$ may reflect the number of MCMC steps, and in a reinforcement learning agent $r$ may represent the number of forward-simulations. If the resource restriction is described by a monotonic cost function $r\mapsto c_r$ \cite{Frazier2007,Drugowitsch2012}, then the process can be optimized by a maximization problem of the form 
\begin{align} \nonumber
\max_{r\in I, p\in\Gamma_r} \big\{\mathbb E_{p}[U] - c_{r}\big\} \ & = \ \max_{r\in I, p\in\Gamma_r} \big\{ E_{p}[U] \, \big| \, c_{r}\leq M \big\} \ = \ \max_{p\in\Gamma} \big\{ E_{p}[U] \, \big| \, C_{q}(p) \leq M' \big\}
\end{align}

\noindent where $M,M'$ are non-negative constants, $\Gamma_r\subset\Gamma$ denotes the subset of probability distributions with resource $r$, and $C_q$ denotes a cost function such that $r\mapsto C_q(p)$ for $p\in\Gamma_r$ is strictly monotonically increasing. In particular, such cases can also be regarded as  bounded rational decision-making problems of the form \eqref{eq:optU}.

Bounded rationality models in the literature come in a variety of flavors. 
In the heuristics and biases paradigm the notion of optimization is often dismissed in its entirety \cite{Gigerenzer2001}, even though decision-makers still have to have a notion of options being better or worse, for example in order to adapt their aspiration levels in a satisficing scheme \cite{Simon1982}. We have argued that from an abstract normative perspective we can formulate satisficing in probabilistic terms, such that one could investigate the efficiency of heuristics within this framework.
Another prominent approach to bounded rationality is given by systems capable of decision-making about decision-making, i.e. meta-decision-making. Explicit decision-making processes composed of two decision steps have been studied, for example, in the reinforcement learning literature \cite{Keramati2011, Keramati2016, Pezzulo2013, Viejo2015}, where the first step is represented by a meta decision about whether a cheap model-free or a more expensive model-based learning algorithm is used in the second step. The meta step consists of a trade-off between the estimated utility against the decision-making costs of the second decision step. 
In the information-theoretic framework of bounded rationality this could be seen as a natural property of multi-step decision-making and the recursivity property \eqref{recursivity} from which follows that the value of a decision-making problem is given by its Free Energy, that besides of the achieved utility also takes the corresponding processing costs into account. Another prominent approach to bounded rationality is computational rationality \cite{Lewis2014} where the focus lies on finding bounded-optimal programs that solve constrained optimization problems presented by the decision-maker's architecture and the task environment. As described above, such architectural constraints could be represented by a process dependent subset $\Gamma\subset \mathbb{P}_\Omega$, and in fact our resource costs could be included into such a subset $\Gamma_r$ as well. From this point of view, both frameworks would look for bounded-optimal solutions in that the search space is first restricted and then the best solution in the restricted search space is found. However, our search space would consist of distributions describing probabilistic input-output maps, whereas the search space of programs would be far more detailed.

The notion of decision-making presented in this work, intuitively developed from the basic concept of uncertainty reduction given by elementary computations and motivated by the simple idea of progressively eliminating options, on the one hand provides a promising theoretical playground that is open for further investigation (e.g.~superadditivity of cost functions and minimality of relative entropy), potentially providing new insights into the connection between the fields of rationality theory and information theory, and on the other hand serves the purpose of a general framework to describe and analyze all kinds of decision-making processes (e.g.~in terms of bounded-optimality).


\bigskip 

\begin{appendix}

\section{Proofs of technical results from Sections \ref{sec:1} and \ref{sec:bounded}}

\medskip

\begin{Proposition}[Superadditivity of generalized entropies under coarse-graining, Example \ref{ex:genEnt}] \label{prop:superadditivity}
All cost functions of the form 
\begin{equation}\label{genEnt_app}
C(p) \, = \, \sum\nolimits_{i=1}^N f(p_i) \, ,
\end{equation}

\noindent with $f$ (strictly) convex and differentiable on $[0,1]$, and $f(1) = 0$, are superadditive with respect to coarse-graining, that is 
\[
C(Z) \geq C(X) + C(Y|X)
\]

\noindent whenever $Z=(X,Y)$, and $C(X)\coloneqq C(p(X))$ and $C(Y|X)\coloneqq \mathbb E_{p(X)}[C(p(Y|X))]$.

\end{Proposition}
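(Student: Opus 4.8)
The plan is to reduce the assertion to a single one-variable estimate, prove that estimate by convexity, and then recover the general case by summing over the cells of the partition.

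Write the joint law of $Z=(X,Y)$ as $(p_{ij})_{i,j}$, so that the entries of $p(X)$ are $p_i\coloneqq\sum_j p_{ij}$ and the entries of $p(Y\,|\,X{=}i)$ are $(p_{ij}/p_i)_j$. Then $C(Z)\ge C(X)+C(Y|X)$ reads
\[
\sum_i\sum_j f(p_{ij})\ \ge\ \sum_i f(p_i)\ +\ \sum_i p_i\sum_j f\!\Big(\tfrac{p_{ij}}{p_i}\Big),
\]
so it suffices to prove, for each cell $i$ separately, $\sum_j f(p_{ij})\ge f(p_i)+p_i\sum_j f(p_{ij}/p_i)$; for a singleton cell this is the identity $f(p_i)=f(p_i)+p_i f(1)$ using $f(1)=0$. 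Fix $i$, set $s\coloneqq p_i\in[0,1]$ and $a_j\coloneqq p_{ij}/p_i$, so $(a_j)_j$ is a probability vector, and use $f(1)=0$ to write $f(s)=\sum_j a_j f(s)$. The cell inequality is then equivalent to
\[
\sum_j g(a_j)\ \ge\ g(1),\qquad\text{where}\qquad g(x)\coloneqq f(sx)-s\,f(x),\ \ x\in[0,1],
\]
since $g(1)=f(s)-s\,f(1)=f(s)$.

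The heart of the proof is the claim that $g$ is \emph{subadditive} on $[0,1]$, i.e. $g(a+b)\le g(a)+g(b)$ for $a,b\ge0$ with $a+b\le1$; iterating this over $a_1,\dots,a_m$ gives $g(1)=g\big(\sum_j a_j\big)\le\sum_j g(a_j)$, as needed. Subadditivity follows from two facts. First, $g$ is concave on $[0,1]$, because $g''(x)=s^2 f''(sx)-s\,f''(x)=s\big(s\,f''(sx)-f''(x)\big)\le0$, using convexity of $f$ and $0\le s\le1$. Second, $g(0)=(1-s)f(0)\ge0$: indeed $f(0)\ge0$, since for a cost function of the form \eqref{genEnt_app} on $N\ge2$ options the value at a point mass is $C(\delta)=f(1)+(N{-}1)f(0)=(N{-}1)f(0)\ge0$. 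A concave function that is nonnegative at the left endpoint is subadditive on the interval: for $a,b\ge0$ with $a+b\le1$, concavity applied to the pair $(0,a+b)$ gives $g(a)\ge\frac{b}{a+b}\,g(0)+\frac{a}{a+b}\,g(a+b)$ and $g(b)\ge\frac{a}{a+b}\,g(0)+\frac{b}{a+b}\,g(a+b)$, and adding these and using $g(0)\ge0$ yields $g(a)+g(b)\ge g(0)+g(a+b)\ge g(a+b)$.

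I expect the concavity of $g$ — equivalently the pointwise bound $s\,f''(sx)\le f''(x)$ — to be the main obstacle: this is precisely where convexity of $f$ combines with the contraction $x\mapsto sx$ with $s\le1$, and it requires care when $f''$ is only a nonnegative measure rather than a function, since one must track its pushforward under the rescaling. Granting the cell inequality, summing over $i$ gives $C(Z)\ge C(X)+C(Y|X)$, and the symmetric reformulation \eqref{superadditivitysymmetric} is just the special case in which a single pair of probabilities is merged into one cell.
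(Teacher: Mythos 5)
Your reduction to the per-cell inequality $\sum_j f(sa_j)\ge f(s)+s\sum_j f(a_j)$, its reformulation as $\sum_j g(a_j)\ge g(1)$ with $g(x)=f(sx)-s\,f(x)$, and the lemma that a concave function with $g(0)\ge0$ is subadditive are all fine. The gap is the claim that $g$ is concave. The required inequality $s\,f''(sx)\le f''(x)$ does \emph{not} follow from convexity of $f$ together with $0\le s\le1$: convexity only gives $f''\ge0$ and says nothing about how $f''(sx)$ compares with $f''(x)$. It does hold when $f''$ is nondecreasing (then $s f''(sx)\le f''(sx)\le f''(x)$), but it fails when $f''$ decreases fast enough. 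Concretely, take $f(t)=\tfrac{1}{6}(1-t)^3+\varepsilon(t-1)^2$ with small $\varepsilon>0$: this is smooth, strictly convex on $[0,1]$, nonnegative, and satisfies $f(1)=0$, yet
\[
g''(1)\;=\;s^2 f''(s)-s\,f''(1)\;=\;s^2\big((1-s)+2\varepsilon\big)-2\varepsilon s\;=\;s(1-s)(s-2\varepsilon)\;>\;0
\]
for $2\varepsilon<s<1$, so $g$ is strictly convex near $x=1$. Your subadditivity argument therefore collapses exactly where you need it most, since the last step of your iteration applies it with $a+b=1$. This is not the measure-theoretic technicality you flag at the end; the claim fails for $C^\infty$, strictly convex $f$. (The target inequality $\sum_j g(a_j)\ge g(1)$ is still true for such $f$ --- only your route to it breaks.)

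The paper's proof avoids second derivatives entirely. It first replaces $f$ by $f(t)-f'(1)(t-1)$ so that, without loss of generality, $f'(1)=0$ and hence $f\ge0$ on $[0,1]$; it then reduces by induction to the two-cell case and shows that $F_v(u)=f(uv)+f(u(1-v))-f(u)-u\big(f(v)+f(1-v)\big)$ is nonincreasing in $u$ with $F_v(1)=0$, using only the monotonicity of $f'$ (which \emph{is} a consequence of convexity) together with $f(v)+f(1-v)\ge0$. If you want to keep your $g$-formulation, you would need to establish $\sum_j g(a_j)\ge g(1)$ by a first-order argument of this kind rather than via concavity of $g$.
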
 

\begin{proof}
As we will see in the following proof, strict convexity is not needed for superadditivity, but it is required for the definition of a cost function. First, since $\sum_i p_i = 1$, notice that we can always redefine the convex function $f$ in \eqref{genEnt_app} by $f_c(t)\coloneqq f(t)-c(t-1)$ for an arbitrary constant $c\in\mathbb R$ without changing $C(p)$ for all $p\in\mathbb P_\Omega$. Hence, without loss of generality, we may assume $f'(1) = 0$, so that $t=1$ is a global minimum of $f$ (since $f(t)\geq f(1) + (t-1)\,f'(1) = f(1)$ due to convexity). Since $C$ is symmetric, superadditivity under coarse-graining is equivalent to 
\begin{equation} \label{recursivity_app}
C(p_1,\dots,p_N) \geq C(p_1+p_2,p_3,\dots,p_N) + (p_1{+}p_2)\, C(\tfrac{p_1}{p_1+p_2},\tfrac{p_2}{p_1+p_2}) \end{equation}

\noindent This simply follows by induction, since \eqref{recursivity_app} corresponds to the partitioning $\Omega =\bigcup_{j=1}^{N-1} A_j$ with $A_1 = \{x_1,x_2\}$ and $A_j = \{x_{j+1}\}$ for all $j=2,\dots,N-1$ (see also \cite[2.3.5]{Aczel1975}). In terms of $f$, \eqref{recursivity_app} reads
\[
f(p_1)+f(p_2) \, \geq \, f(p_1+p_2) + (p_1+p_2) \, \Big(f\big(\tfrac{p_1}{p_1+p_2}\big) + f\big(\tfrac{p_2}{p_1+p_2} \big)  \Big)
\]

\noindent By setting $u = p_1+p_2$ and $v = \tfrac{p_1}{p_1+p_2}$, this is equivalent to
\begin{equation} \label{rec_eq}
f(uv) + f(u(1-v)) \, \geq \, f(u) + u \, \big(f(v)+f(1-v) \big) 
\end{equation}

\noindent for all $u,v\in[0,1]$. Writing $F_v(u)\coloneqq f(uv) + f(u(1-v))-f(u) - u \, (f(v)+f(1-v))$ and noting that $F_v(0)\geq 0$ and $F_v(1) = 0$, it suffices to show that $F'_v(u) \leq 0$, which shows that $F_v$ is monotonically decreasing from $F_v(0)$ to $F_v(1)=0$ and thus $F_v(u)\geq 0$ for all $u\in[0,1]$. We have for all $v,u\in[0,1]$
\[
F'_v(u) = v f'(uv) + (1{-}v) f'(u(1{-}v)) - f'(u) - f(v) - f(1{-}v)
\]

\noindent By the symmetry of $F_v$ under the replacement of $v$ by $1{-}v$, without loss of generality, we may assume that $v\leq \frac{1}{2}$, so that $u v \leq u(1{-}v) \leq u$. Since $f$ is convex, $f'$ is monotonically increasing on $[0,1]$, and thus $f'(uv)\leq f'(u(1{-}v) \leq f'(u)$. In particular,
\[
f'(u)  = v f'(u) + (1-v) f'(u) \geq v f'(uv) + (1-v) f'(u(1{-}v))
\]

\noindent and thus, since $f(v)+f(1{-}v) \geq 0$, it follows that $F'_v(u) \leq 0$, which completes the proof.
\end{proof}

\bigskip

\begin{Proposition}[Characterization of Shannon entropy, Example \ref{ex:genEnt}] \label{prop:charshannon}
If a cost function $C$ is additive under coarse-graining, that is if $C(Z) = C(X) + C(Y|X)$ with the notation from Proposition \ref{prop:superadditivity}, then 
\[
C = -\alpha H
\] 

\noindent for some $\alpha\geq 0$, i.e. $C$ is proportional to the negative Shannon entropy $-H$.
\end{Proposition}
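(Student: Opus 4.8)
The plan is to follow the classical characterization of Shannon entropy of Shannon, Faddeev and Hobson \cite{Shannon1948,Faddeev1956,Hobson1969}, but to replace the customary continuity axiom by the strict monotonicity with respect to $\prec$ that is already part of the definition of a cost function (Definition \ref{def:costfunctions}). First I would extract two elementary consequences of additivity under coarse-graining. Applying it to the degenerate two-block decompositions of $(1,0,0)$ in two ways gives $2\,C(1,0)=C(1,0,0)=C(1,0)$, hence $C(1,0)=0$, and then, decomposing $(p_1,\dots,p_N,0)$ along the block $\{1,\dots,N\}$, that $C$ is \emph{expansible}: $C(p_1,\dots,p_N,0)=C(p_1,\dots,p_N)$. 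In particular $C$ is a consistently defined family over decision spaces of every cardinality and $C$ of a point mass is $0$. Now set $L(N)\defeq C(u_N)$ for the uniform distribution $u_N=(\tfrac1N,\dots,\tfrac1N)$. By the same induction used in the proof of Proposition \ref{prop:superadditivity}, additivity for two blocks upgrades to additivity for arbitrary coarse-grainings; applying this to $u_{mn}$ arranged as $m$ blocks of $n$ elements yields the completely additive equation $L(mn)=L(m)+L(n)$, with $L(1)=0$.

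The decisive extra ingredient is that $L$ is monotone: by \eqref{precuniform} we have $u_N\prec(u_{N-1},0)$ strictly (the two are not equivalent, as their partial sums differ), so strict Schur-convexity of $C$ together with expansibility gives $L(N)=C(u_N)<C((u_{N-1},0))=L(N-1)$, i.e.\ $L$ is strictly decreasing. A strictly decreasing solution of $L(mn)=L(m)+L(n)$ on $\mathbb N$ must be $L(N)=-\alpha\log N$ for some $\alpha>0$: squeeze $N^{j}$ between consecutive powers of $2$ and let $j\to\infty$, using monotonicity to control the error (this is the classical fact that a monotone completely additive arithmetic function is a multiple of $\log$; cf.\ \cite{Lee1964}). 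Thus $L(N)=-\alpha H(u_N)$, with $\alpha>0$.

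Next I would fix the value of $C$ on rational distributions. For $p$ with $p_i=n_i/n$ (zero components discarded using expansibility), decompose $u_n$ into blocks of sizes $n_1,\dots,n_N$: the coarse-grained law is $p$ and the within-block conditionals are the uniforms $u_{n_i}$, so additivity gives $L(n)=C(p)+\sum_{i=1}^N p_i\,L(n_i)$, hence
\[
C(p)\eq -\alpha\log n+\alpha\sum_{i=1}^N p_i\log n_i\eq \alpha\sum_{i=1}^N p_i\log p_i\eq -\alpha H(p).
\]
Finally, for an arbitrary $p\in\mathbb P_\Omega$ I would sandwich $p$ in the majorization preorder between rational distributions: there exist rational $r_m\prec p\prec s_m$ with $r_m,s_m\to p$, obtained by perturbing $p$ slightly towards the uniform, respectively towards a peak, and using density of the rationals in the corresponding majorization cones (the cases $p=u_N$ and $p$ a point mass are themselves rational and already treated). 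Monotonicity of $C$ and the rational case give $-\alpha H(r_m)=C(r_m)\le C(p)\le C(s_m)=-\alpha H(s_m)$, and letting $m\to\infty$, with $H$ continuous, forces $C(p)=-\alpha H(p)$.

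The main obstacle is this last step: one has to produce rational distributions that squeeze $p$ in $\prec$ \emph{and} whose Shannon entropies converge to $H(p)$, which requires some care when $p$ lies on a lower-dimensional face of the simplex or has repeated components, since there the local geometry of the majorization polytopes $\{q:q\prec p\}$ and $\{q:p\prec q\}$ degenerates. (If one is willing to add continuity of $C$ as a hypothesis, this step is immediate, but the interest of the statement is precisely that it follows from order-monotonicity alone.) The only other non-trivial ingredient is the number-theoretic lemma on monotone additive functions used to solve the functional equation for $L$.
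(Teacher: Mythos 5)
Your proposal is correct in outline and, in its second half, takes a genuinely different route from the paper. The paper's own proof is a two-line reduction: it observes, exactly as you do, that \eqref{precuniform} forces $N\mapsto C(u_N)$ to be monotone, and then cites Shannon's axiomatic argument \cite{Shannon1948}, which derives $-\alpha H$ from monotonicity on uniforms, additivity under coarse-graining, \emph{and continuity}. Since continuity is not among the stated hypotheses of the proposition (unlike the companion characterization of the Kullback--Leibler divergence, where it is assumed explicitly), your version --- which replaces the continuity axiom by the strict Schur-monotonicity already built into Definition \ref{def:costfunctions} --- actually proves the statement as literally written, at the price of the extra sandwiching step. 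That step, which you rightly single out as the only delicate point, does go through: by expansibility you may assume $p$ has full support and, by the permutation invariance of any cost function (which follows from $p\sim\Pi p$), work in the sorted cone; then $(1-t)p+t\,u_N$ lies in the relative interior of the lower permutohedron $\{q:q\prec p\}$ for $t\in(0,1)$ (the uniform being its barycenter), while $(1-t)p+t\,\delta_{1}$ satisfies all the partial-sum inequalities defining $\{s:p\prec s\}$ strictly, so each admits a neighbourhood inside the respective set in which rational points are dense; continuity of $H$ then closes the squeeze. The remaining ingredients --- $C(1,0)=0$ and expansibility from the degenerate decompositions, the grouping identity $L(n)=C(p)+\sum_i p_i L(n_i)$ for rational $p$, and the number-theoretic lemma that a monotone solution of $L(mn)=L(m)+L(n)$ is a multiple of $\log$ --- are standard and correctly deployed (note that your $L$ is strictly \emph{decreasing}, consistent with $C=-\alpha H$; the paper's phrase ``monotonically increasing'' should be read with that sign convention in mind). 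In short: same first step as the paper, but a self-contained and slightly stronger finish that dispenses with continuity.
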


\vspace{-15pt}
\begin{proof}
Since uniform distributions over $N$ options are majorized by uniform distributions over $N'<N$ options (see \eqref{precuniform}), it follows for any cost function $C$ that the function $f$ defined by $f(N) := C\big(\tfrac{1}{N},\dots,\tfrac{1}{N}\big)$ is monotonically increasing. Therefore the claim follows directly from Shannon's proof \cite{Shannon1948}, where he shows that this monotonicity, additivity under coarse-graining, and continuity determine Shannon entropy up to a constant factor. 
\end{proof}

\medskip

\begin{Proposition}[Prop.\ref{prop:unifminimal_main}] \label{prop:unifminimal} The uniform distribution $(\tfrac{1}{N},\dots,\tfrac{1}{N})$ is the unique minimal element in $\mathbb P_\Omega$ with respect to $\prec$, i.e.

\vspace{-5pt}
\begin{equation}\label{unifminimal}
\big(\tfrac{1}{N},\dots,\tfrac{1}{N} \big) \, \prec \, p \qquad \forall p\in\mathbb P_\Omega \, .
\end{equation}

\end{Proposition}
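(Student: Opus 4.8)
The plan is to deduce \eqref{unifminimal} directly from characterization $(iv)$ of Theorem \ref{thm:characterization}, and to obtain the uniqueness part from characterization $(vi)$.

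First I would prove the inclusion $(\tfrac1N,\dots,\tfrac1N)\prec p$ for an arbitrary $p=(p_1,\dots,p_N)\in\mathbb P_\Omega$. Let $\Pi_0,\Pi_1,\dots,\Pi_{N-1}$ be the $N$ cyclic shifts of the coordinates. The key (and entirely routine) observation is that for each fixed coordinate $i$, the shifted index $((i+k-1)\bmod N)+1$ ranges over all of $\{1,\dots,N\}$ as $k$ ranges over $\{0,\dots,N-1\}$; hence averaging the $\Pi_k(p)$ with equal weights $\theta_k=\tfrac1N$ yields, in each coordinate, the value $\tfrac1N\sum_{j=1}^N p_j=\tfrac1N$, i.e. $\sum_{k=0}^{N-1}\tfrac1N\,\Pi_k(p)=(\tfrac1N,\dots,\tfrac1N)$. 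This is precisely a representation of the uniform distribution as a convex combination of permutations of the entries of $p$, so characterization $(iv)$ gives $(\tfrac1N,\dots,\tfrac1N)\prec p$.

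For uniqueness I would argue as follows. Suppose $q\in\mathbb P_\Omega$ is another minimal element, so in particular $q\prec(\tfrac1N,\dots,\tfrac1N)$. By characterization $(vi)$ the first partial-sum inequality reads $q_1^{\downarrow}\le\tfrac1N$, i.e. every entry of $q$ is bounded above by $\tfrac1N$; since the $N$ entries are nonnegative and sum to $1$, this forces each of them to equal $\tfrac1N$, hence $q=(\tfrac1N,\dots,\tfrac1N)$. This is exactly what makes the uniform distribution \emph{uniquely} minimal: its $\sim$-equivalence class is a singleton.

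I do not expect any serious obstacle here. The only point worth stating carefully is the meaning of ``unique minimal element'': it is not enough that the uniform distribution be $\prec$-below everything; one must also verify that nothing else shares that property, which is exactly where the partial-sum criterion $(vi)$ is used, since the definition of $\prec$ (or characterization $(iv)$) alone would only yield minimality up to $\sim$-equivalence. An alternative route to uniqueness, which I could mention, is to invoke any strictly Schur-convex cost function from Definition \ref{def:costfunctions} --- for instance $p\mapsto\sum_i p_i^2$ --- whose unique minimiser on the simplex is the uniform distribution; but the argument via $(vi)$ is self-contained.
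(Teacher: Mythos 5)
Your proof is correct and the minimality part is exactly the paper's argument: average the $N$ cyclic permutations of $p$ with equal weights and invoke characterization $(iv)$ of Theorem \ref{thm:characterization}. For uniqueness the paper instead applies characterization $(iii)$ (if $q\prec \frac{1}{N}e$ then $q=\frac{1}{N}eA$ with $A$ stochastic, so $q=\frac{1}{N}e$) or, alternatively, the strict Schur-convexity of $-H$; your partial-sum argument via $(vi)$, forcing $q_1^{\downarrow}\leq\frac{1}{N}$ and hence $q=(\tfrac{1}{N},\dots,\tfrac{1}{N})$, is an equally short and valid variant of the same step, and you rightly flag that uniqueness requires more than minimality up to $\sim$-equivalence.
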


\begin{proof}
For the proof of \eqref{unifminimal}, let $(\Pi_i)_{i=1}^N$ be the family of all cyclic permutations of the $N$ entries of a probability vector $p$, such that 
\[
\Pi_1(p)=p,\quad \Pi_2(p) = (p_N,p_1,\dots, p_{N-1}),\quad \dots \quad, \quad \Pi_N(p) = (p_2,\dots,p_N,p_1)\, .
\]

\noindent It follows that $\sum_{i=1}^N\Pi_i(p) = e$ for all $p\in\mathbb P_\Omega$, where $e=(1,\dots,1)$ as above, and therefore the uniform distribution $\frac{1}{N} e$ is given by a convex combination of permutations of $p$, so that $(iv)$ in Theorem \ref{thm:characterization} implies $\frac{1}{N} e \prec p$. There is many different ways to prove uniqueness. A direct way is to assume there exists $q\in\mathbb P$ with $q\prec \frac{1}{N}e$ for all $p\in\mathbb P_\Omega$, so that by $(iii)$ in Theorem \ref{thm:characterization} there exists a stochastic matrix $A$ with $\frac{1}{N}eA = q$. But since $eA = e$ ($A$ stochastic), it follows that $q = \frac{1}{N}e$. An indirect way would be to use that if $q\prec p$ for all $p\in\mathbb P_\Omega$, then 
from Example \ref{ex:genEnt} we know that this implies $H(q)\geq H(p)$ for all $p\in\mathbb P_\Omega$, where $H$ denotes the Shannon entropy, simply because $-H$ is a cost function. In particular, $q$ maximizes $H$ and therefore coincides with the uniform distribution $\frac{1}{N}\,e$.
\end{proof}

\bigskip

\begin{Proposition}[Equivalence of $(i),(iii),(v)$ in Theorem \ref{thm:charrel}]\label{prop:charrel} The following are equivalent

\begin{enumerate}

\item[$(i)$] $p'\prec_q p$, i.e. $p'$ contains more uncertainty relative to $q$ than $p$ (Def. \ref{def:reluncertainty})\\[-4pt]

\item[$(iii)$] $p' = pA$ for a $q$-stochastic matrix $A$, i.e.~$Ae=e$ and $qA = q$\\[-4pt]

\item[$(v)$] $\sum_{i=1}^{l-1} (p'_i)^{\downarrow} + a_q(k,l) (p'_l)^{\downarrow} \ \leq \ \sum_{i=1}^{l-1}p_i^\downarrow + a_q(k,l) p_l^\downarrow$ for all $\alpha \sum_{i=1}^{l-1} q_i^\downarrow \leq k \leq \alpha \sum_{i=1}^l q_i^\downarrow$ and $1\leq l \leq N$, where $a_q(k,l)\coloneqq (\frac{k}{\alpha} - \sum_{i=1}^{l-1} q_i^\downarrow)/q_l^\downarrow$, and the arrows indicate that $(p_i^\downarrow/q_i^\downarrow)_i$ is ordered decreasingly.

\end{enumerate}
\end{Proposition}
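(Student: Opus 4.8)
\emph{Plan.} The idea is to reduce both equivalences to the classical characterizations of majorization in Theorem~\ref{thm:characterization}, exploiting that $\Lambda_q\colon\mathbb P_\Omega\to\mathbb P_{\tilde\Omega}$ is realized by a concrete matrix with an explicit left inverse. Assuming first that $q_i\in\mathbb Q$ (the general case follows by approximating $q$ by rational probability vectors, since $(iii)$ and $(v)$ are closed conditions and $\prec_q$ does not depend on the auxiliary number $\alpha$ -- an independence which is itself transparent from $(iii)$), fix $\alpha$ and a partition $\{A_i\}$ as in \eqref{def:partitionsize}. Writing row vectors on the left, $\Lambda_q p = pM$ where $M$ is the $N\times|\tilde\Omega|$ matrix with $M_{i\omega}=\tfrac1{\alpha q_i}$ if $\omega\in A_i$ and $0$ otherwise; one checks $Me=e$ and $qM=\tfrac1\alpha e$ (so $\Lambda_q q$ is uniform), and that the $|\tilde\Omega|\times N$ ``partition-sum'' matrix $L$ with $L_{\omega i}=1$ for $\omega\in A_i$ and $0$ otherwise is a left inverse, $ML=\mathbb I_N$, with $Le=e$ and $eL=\alpha q$ (here $e$ is the all-ones vector of whatever dimension is needed; note $LM$ is only a projection, not the identity).

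\emph{$(i)\Leftrightarrow(iii)$.} I would conjugate doubly stochastic matrices and $q$-stochastic matrices along $M,L$. If $p'\prec_q p$ then $\Lambda_q p'\prec\Lambda_q p$, so by Theorem~\ref{thm:characterization}$(iii)$ there is a doubly stochastic $B$ with $p'M=(pM)B$; multiplying on the right by $L$ and using $ML=\mathbb I_N$ gives $p'=p\,(MBL)$, and $A\coloneqq MBL$ is $q$-stochastic: $A\ge0$, $Ae=MBe=Me=e$ (using $Le=e$, $Be=e$), and $qA=\tfrac1\alpha eBL=\tfrac1\alpha eL=q$ (using $qM=\tfrac1\alpha e$, $eB=e$, $eL=\alpha q$). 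Conversely, if $p'=pA$ with $A$ $q$-stochastic, then $p'M=pAM=(pM)(LAM)$ by $ML=\mathbb I_N$, and $B\coloneqq LAM$ is doubly stochastic: $B\ge0$, $Be=LAe=Le=e$ (using $Me=e$, $Ae=e$), and $eB=\alpha qAM=\alpha qM=e$ (using $eL=\alpha q$, $qA=q$, $qM=\tfrac1\alpha e$); hence $\Lambda_q p'\prec\Lambda_q p$ by Theorem~\ref{thm:characterization}$(iii)$, i.e.\ $p'\prec_q p$.

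\emph{$(i)\Leftrightarrow(v)$.} Apply the partial-sum criterion Theorem~\ref{thm:characterization}$(vi)$ to $\Lambda_q p'\prec\Lambda_q p$: it is equivalent to $\sum_{j=1}^k(\Lambda_q p')^\downarrow_j\le\sum_{j=1}^k(\Lambda_q p)^\downarrow_j$ for all $k$ (the case $k=|\tilde\Omega|$ being the trivial equality of total masses). Since the entries of $\Lambda_q p$ are the ratios $\tfrac1\alpha\tfrac{p_i}{q_i}$, each with multiplicity $|A_i|=\alpha q_i$, ordering the blocks by decreasing $p_i/q_i$ shows that $\sum_{j=1}^k(\Lambda_q p)^\downarrow_j$ is a sum of full blocks plus a fraction of the next one: for $\alpha\sum_{i<l}q_i^\downarrow\le k\le\alpha\sum_{i\le l}q_i^\downarrow$ it equals $\sum_{i=1}^{l-1}p_i^\downarrow+a_q(k,l)\,p_l^\downarrow$, the range recording which block $k$ falls into. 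Doing the same for $\Lambda_q p'$ (the rearrangement now taken with respect to the order of $(p'_i/q_i)_i$) turns the family of inequalities over all admissible $(k,l)$ into exactly $(v)$, and vice versa.

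\emph{Main obstacle.} The delicate part is the bookkeeping in $(i)\Leftrightarrow(iii)$: tracking the non-square dimensions, the asymmetry between $ML=\mathbb I_N$ and $LM\ne\mathbb I$, and verifying in each direction that the conjugated matrix satisfies \emph{all} of its defining constraints. A secondary subtlety is that in $(v)$ the two decreasing rearrangements must each be taken relative to the appropriate ratio ordering -- a single common ordering would yield a strictly weaker, hence incorrect, statement -- together with the rational-to-general-$q$ limiting argument.
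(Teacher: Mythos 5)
Your proof is correct and follows essentially the same route as the paper: the equivalence $(i)\Leftrightarrow(iii)$ is obtained by conjugating doubly stochastic matrices with $\Lambda_q$ and its explicit left inverse (your $M$ and $L$ are exactly the paper's $\Lambda_q^T$ and $(\Lambda_q^{-1})^T$, and your $A=MBL$, $B=LAM$ match its constructions), while $(v)$ is read off from the partial-sum criterion applied to $\Lambda_q p'\prec\Lambda_q p$. If anything you are more explicit than the paper, which dismisses $(v)$ as "an immediate consequence"; your spelled-out block computation and the caveat that each rearrangement must follow its own ratio ordering are worthwhile additions rather than deviations.
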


\begin{proof} We will make use of the fact that $\Lambda_q:\mathbb P_\Omega\to\mathbb P_{\tilde \Omega}$ has a left inverse $\Lambda_q^{-1}: \Lambda_q(\mathbb P_\Omega) \to \mathbb P_{\Omega}$ satisfying $\Lambda_q^{-1}\Lambda_q = \mathbb I$, where $\mathbb I$ denotes the identity on $\mathbb P_\Omega$. This can be verrified by simply multiplying the corresponding matrices, given by

\[
\Lambda_q^{-1} = \underbrace{\begin{pmatrix} 
	1 	& \cdots & 1 	& 0 	& \cdots	& \cdots & \cdots & \cdots & 0 	\\ 
	0 	& \cdots & 0 	& 1 	& \cdots & 1 	& 0 	& \cdots & 0 	\\ 
	\ 	& \ & \ & \	& \ & \ddots	& \		& \ & \	\\ 
	\ 	& \ & \ & \	& \ & \	& \ddots		& \ & \	\\ 
	0 	& \cdots & \cdots & \cdots	& \cdots & 0 	& 1		& \cdots & 1 	\\ 
	\end{pmatrix}}_{|A_1|+\dots+|A_N| = \alpha} \]
and
\[
\Lambda_q = \frac{1}{\alpha} \left.\begin{pmatrix} 
	\tfrac{1}{q_1} 	& 0 			& \cdots 			& 0 		& 0 \\ 
	\vdots 			& \vdots 		& \ 				& \vdots		 		& \vdots \\ 
	\tfrac{1}{q_1} 	& 0 			& \cdots 			& 0		& 0 \\ 
	\ 				& \ 			& \ddots 		& \ 			& 	\ \\
	0 				& 0 			&  \cdots 		& 0 			& \frac{1}{q_N} \\ 
	\vdots 			& \vdots 		& \		 		& \vdots 		& \vdots		\\ 
	0 				& 0 			&  \cdots 		& 0 			& \frac{1}{q_N}
	\end{pmatrix}\right\}{\alpha} \, , 
\]

\vspace{5pt}
\noindent and noting that, by definition, $\alpha \, q_i = |A_i|$. 

Characterization $(v)$ follows from $(vi)$ of Theorem \ref{thm:characterization} and Definition \ref{def:reluncertainty}, since $p'\prec_q p$ if and only if
\[
\sum_{i=1}^k (\Lambda_q p')_i^\downarrow  \leq \sum_{i=1}^k (\Lambda_q p)_i^\downarrow
\] 

\noindent for all $1\leq k\leq {\alpha-1}$, from which $(v)$ is an immediate consequence. 

$(i) \Rightarrow (iii)$: Assuming that $p'\prec_q p$ we have $\Lambda_q p' \prec \Lambda_q p$ and therefore, by $(iii)$ in Theorem \ref{thm:characterization}, there exists a doubly stochastic matrix $B$ such that $\Lambda_q p' = B^T \Lambda_q p$. With $A^T\coloneqq \Lambda_q^{-1} B^T \Lambda_q$ it follows that 
\[
A e =  \Lambda_q^T B (\Lambda_q^{-1})^T e = (\Lambda_q^{-1})^T B e = (\Lambda_q^{-1})^T e = e \, ,
\]

\noindent where we have used that $(\Lambda_q^{-1})^T e = e$ and $\Lambda_q^T e=e$ which is easy to check, and $Be=e$ from $B$ being a stochastic matrix. Note that, by slightly abusing notation, here $e$ is always the constant vector $(1,\dots,1)$ irrespective of the number of its entries ($N$ or $\alpha$, depending on which space the operator is defined). Moreover, we have
\[
A^T q  = \Lambda_q^{-1} B^T \Lambda_q q =  \tfrac{1}{\alpha} \, \Lambda_q^{-1} B^T e = \tfrac{1}{\alpha} \, \Lambda_q^{-1} e = q 
\]

\noindent where we have used that $\Lambda_q q$ by definition is the uniform distribution on $\mathbb P_{\tilde \Omega}$, i.e. $\Lambda_q q = \frac{1}{\alpha}\, e$, and therefore also $\Lambda_q^{-1} e = \alpha \, q$. In particular, since also $A_{ij}\geq 0$ ($B,\Lambda_q,\Lambda_q^{-1}$ have non-negative entries), it follows that $A$ is a $q$-stochastic matrix such that $p' = A^Tp = pA$.

$(i) \Leftarrow (iii)$: Similarly, if $A$ is a $q$-stochastic matrix with $p' = pA$, then $\Lambda_q p'  = B^T \Lambda_q p$, where $B\coloneqq (\Lambda_q^{-1})^T A \Lambda_q^T$ satisfies $B^Te = \alpha \, \Lambda_q A^T q = \alpha \, \Lambda_q q = e$ and $Be = (\Lambda_q^{-1})^T A e = (\Lambda_q^{-1})^T e = e$, where we have used that $\Lambda_q^{-1} e = \alpha q$, $\Lambda_q q = \frac{1}{\alpha}\, e$, $\Lambda_q^T e = e$, and $(\Lambda_q^{-1})^T e = e$. In particular, since also $B_{ij}\geq 0$, $B$ is doubly stochastic and therefore $\Lambda_q p' \prec \Lambda_q p$, i.e.~$p'\prec_q p$. 
\end{proof}

\medskip

\begin{Proposition}[Prop.~\ref{priorminimal_main}] \label{priorminimal}
The prior $q\in\mathbb P_\Omega$ is the unique minimal element in $\mathbb P_\Omega$ with respect to $\prec_q$, that is $q\prec_q p$ for all $p\in\mathbb P_\Omega$.
\end{Proposition}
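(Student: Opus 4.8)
The plan is to invoke the equivalence of $(i)$ and $(iii)$ in Theorem~\ref{thm:charrel}: to establish $q\prec_q p$ it suffices to exhibit, for each $p\in\mathbb P_\Omega$, a $q$-stochastic matrix $A$ (i.e.\ $Ae=e$ and $qA=q$) with $p A = q$. The natural candidate is the rank-one matrix whose every row equals $q$, namely $A_{ij}\coloneqq q_j$ for all $i,j$.

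First I would check that this $A$ is $q$-stochastic: its entries are non-negative, each row sums to $\sum_j q_j=1$ so $Ae=e$, and $(qA)_j=\sum_i q_i A_{ij}=q_j\sum_i q_i=q_j$, hence $qA=q$. Then I would compute $(pA)_j=\sum_i p_i A_{ij}=q_j\sum_i p_i=q_j$, so $pA=q$, and conclude $q\prec_q p$ by characterization $(iii)$ of Theorem~\ref{thm:charrel}. Equivalently, one can argue through $\Lambda_q$: since by construction $\Lambda_q q=\tfrac1\alpha e$ is the uniform distribution on $\tilde\Omega$, Proposition~\ref{prop:unifminimal_main} gives $\Lambda_q q\prec\Lambda_q p$ for every $p\in\mathbb P_\Omega$, which is exactly $q\prec_q p$ by Definition~\ref{def:reluncertainty}.

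For uniqueness, suppose $q'\in\mathbb P_\Omega$ also satisfies $q'\prec_q p$ for all $p\in\mathbb P_\Omega$. Specialising to $p=q$ and applying $(iii)$ of Theorem~\ref{thm:charrel} once more yields a $q$-stochastic matrix $B$ with $q'=qB$; but $qB=q$ by the very definition of $q$-stochasticity, so $q'=q$. (Alternatively, $\Lambda_q q'\prec\Lambda_q q=\tfrac1\alpha e$ forces $\Lambda_q q'=\tfrac1\alpha e$ by the uniqueness part of Proposition~\ref{prop:unifminimal_main}, and applying the left inverse $\Lambda_q^{-1}$ from the proof of Proposition~\ref{prop:charrel} gives $q'=q$.)

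The argument is essentially routine once Theorem~\ref{thm:charrel} is in hand; the only point that needs a little care is the uniqueness claim, since $\prec_q$ is merely a preorder and minimal elements of a preorder need not be unique in general. Here uniqueness is forced by the rigidity $qB=q$ enjoyed by every $q$-stochastic matrix (equivalently, by the uniqueness of the minimal element for ordinary majorization, transported through $\Lambda_q$).
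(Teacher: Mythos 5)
Your proof is correct, and its second half (uniqueness) is word-for-word the paper's argument: specialize minimality to $p=q$, extract a $q$-stochastic $B$ with $q'=qB$, and use the rigidity $qB=q$. For the minimality part, your ``equivalently'' route through $\Lambda_q$ --- namely $\Lambda_q q=\tfrac1\alpha e\prec\Lambda_q p$ by Proposition~\ref{prop:unifminimal_main}, hence $q\prec_q p$ by Definition~\ref{def:reluncertainty} --- is exactly what the paper does. Your primary route, the explicit rank-one matrix $A_{ij}=q_j$, is a small but genuine variation: it verifies $q\prec_q p$ directly from characterization $(iii)$ of Theorem~\ref{thm:charrel} without appealing to the minimality of the uniform distribution or to the representation map at all, and the computation $(pA)_j=q_j\sum_i p_i=q_j$ is correct under the paper's convention $pA=A^Tp$. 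This makes the minimality claim self-contained modulo Theorem~\ref{thm:charrel}, at the cost of not illustrating the reduction-to-uniform philosophy the paper is promoting; both arguments are equally valid.
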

\begin{proof} Let $p\in\mathbb P_\Omega$, and let $P:=\Lambda_q p$ denote its representation in $\mathbb P_{\tilde \Omega}$. Then $Q\prec P$ by Proposition \ref{prop:unifminimal} (uniform distributions are minimal) and therefore $q = \Lambda_q^{-1} Q \prec_q p$, in particular $q$ is minimal with respect to $\prec_q$. For uniqueness, let $p'$ be possibly another minimal element. Then $p'\prec_q q$ and therefore by $(iii)$ in Theorem \ref{thm:charrel} there exists a $q$-stochastic matrix $A$ with $p' = qA$. But since $A$ is $q$-stochastic, $qA=q$, and thus $p' = q$. 
\end{proof}

\medskip

\begin{Proposition}[Example \ref{ex:genDiv}: Superadditivity of $f$-divergences under coarse-graining] \label{prop:superadditivityDiv}
All relative cost functions of the form 
\begin{equation}\label{genDiv_app}
C_q(p) = \sum_{i=1}^N q_i \, f\Big(\frac{p_i}{q_i} \Big) ,
\end{equation}

\noindent with $f$ (strictly) convex and differentiable on $[0,1]$, and $f(1) = 0$, are superadditive with respect to coarse-graining, that is, for $Z=(X,Y)$,
\[
C_q(Z) \geq C_q(X) + C_q(Y|X)
\]

\noindent whenever $C_q(X)\coloneqq C_{q(X)}(p(X))$ and $C_q(Y|X)\coloneqq \mathbb E_{p(X)}[C_{q(Y|X)}(p(Y|X))]$, which for cost functions that are symmetric with respect to permutations of $((p_i,q_i))_{i=1,\dots,N}$ (such as \eqref{genDiv}) is equivalent to 
\begin{equation}\label{relsuperadditivitysymmetric}
C_{q}(p) \ \geq \ C_{(q_1+q_2,q_3\dots,q_N)}(p_1{+}p_2,p_3,\dots,p_N) + (p_1{+}p_2) \,C_{(\frac{q_1}{q_1+q_2},\frac{q_2}{q_1+q_2})}\big(\tfrac{p_1}{p_1+p_2},\tfrac{p_2}{p_1+p_2}\big) \, .
\end{equation}

\end{Proposition}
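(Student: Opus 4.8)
The plan is to run the same program as the proof of Proposition~\ref{prop:superadditivity}: first normalize $f$, then reduce a general coarse-graining to a single binary merge using symmetry, and finally prove the resulting two-variable inequality for $f$ by a monotonicity-in-a-scale-parameter argument.

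First I would normalize. Since $\sum_i q_i\big(\tfrac{p_i}{q_i}-1\big) = \sum_i(p_i-q_i) = 0$ for all $p,q\in\mathbb P_\Omega$, replacing $f$ by $f_c(t):= f(t)-c\,(t-1)$ leaves $C_q(p)$ unchanged for every $p$; choosing $c=f'(1)$ we may assume $f'(1)=0$, and then convexity gives $f(t)\ge f(1)+(t-1)f'(1)=0$, so $f\ge 0$ with global minimum $0$ at $t=1$. Next, exactly as in Proposition~\ref{prop:superadditivity}, $C_q(X)+C_q(Y|X)$ for a general partition $\Omega=\bigcup_j A_j$ telescopes into finitely many two-block merges — take the chain $A_1=\{x_1,x_2\}$, $A_j=\{x_{j+1}\}$ and use the permutation-invariance of $C_q$ — so it suffices to prove the symmetric binary form~\eqref{relsuperadditivitysymmetric} (this equivalence is the one already stated in the Proposition).

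For the binary inequality I would introduce $s:=q_1+q_2$, $\lambda:=q_1/s\in(0,1)$, $x_i:=p_i/q_i$, and note that $\mu:=(p_1+p_2)/s=\lambda x_1+(1-\lambda)x_2$ is the $\lambda$-weighted average of $x_1,x_2$. Cancelling the coordinates $i\ge 3$, which occur identically on both sides, \eqref{relsuperadditivitysymmetric} is equivalent to
\[
\lambda f(x_1)+(1-\lambda)f(x_2)\ \ge\ f(\mu)+\mu\Big(\lambda f\big(\tfrac{x_1}{\mu}\big)+(1-\lambda)f\big(\tfrac{x_2}{\mu}\big)\Big).
\]
Writing $x_i=\mu\,r_i$ with $\lambda r_1+(1-\lambda)r_2=1$ (for $\mu=0$ the inequality is trivial) and treating $\mu\ge 0$ as a free scale, I would set $\Phi(\mu):=\lambda f(\mu r_1)+(1-\lambda)f(\mu r_2)-f(\mu)-\mu\big(\lambda f(r_1)+(1-\lambda)f(r_2)\big)$, the exact analogue of $F_v$ in Proposition~\ref{prop:superadditivity}, and show $\Phi\ge 0$ on the admissible range of $\mu$. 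One has $\Phi(1)=-f(1)=0$, and differentiating, $\Phi'(\mu)=\lambda r_1 f'(\mu r_1)+(1-\lambda)r_2 f'(\mu r_2)-f'(\mu)-\big(\lambda f(r_1)+(1-\lambda)f(r_2)\big)$; introducing the auxiliary function $\psi(t):=t f'(t)-f(t)$, which satisfies $\psi(1)=0$ and $\psi'(t)=t f''(t)\ge 0$ (hence $\psi\le 0$ on $[0,1]$ and $\psi\ge 0$ beyond), and using $\lambda r_1+(1-\lambda)r_2=1$ together with monotonicity of $f'$, one wants to pin down a definite sign of $\Phi'$ on each side of $\mu=1$ and conclude by integrating from $\mu=1$.

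The main obstacle is precisely this last sign analysis. In the uniform-prior case the decisive elementary step was $f'(u)=v f'(u)+(1-v)f'(u)\ge v f'(uv)+(1-v)f'(u(1-v))$, which used only that all arguments lie in $[0,1]$ and that $f'$ is increasing; in the relative case the corresponding estimate must be carried out with the weights $(\lambda,1-\lambda)$ and the rescaled arguments $\mu r_i$, and one has to keep careful track of whether each $r_i$ — equivalently each ratio $x_i=p_i/q_i$ — is $\le 1$ or $\ge 1$, since this controls the ordering of $\mu r_i$ relative to $\mu$ and hence the direction of the $f'$-comparisons. Every other ingredient (normalization, the telescoping reduction, the cancellation of the common coordinates) is routine and transfers verbatim from Proposition~\ref{prop:superadditivity}, so I expect the entire difficulty to be concentrated in obtaining a usable sign for $\Phi'$ in the presence of the non-uniform weights.
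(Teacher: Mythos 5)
Your reduction to the two-variable inequality $\Phi(\mu)\ge 0$ is carried out correctly, and your normalization step is actually cleaner here than in the entropy case, since for $f$-divergences the substitution $f\mapsto f-f'(1)(\,\cdot\,-1)$ leaves every term of the inequality exactly unchanged. However, the obstacle you flag at the end is not a technical inconvenience to be overcome by a more careful case analysis: it is fatal, because $\Phi(\mu)\ge 0$ is simply false for general convex $f$. With your normalization $f'(1)=0$ one has $\Phi(1)=0$ and $\Phi'(1)=\lambda\psi(r_1)+(1-\lambda)\psi(r_2)$ with $\psi(t)=tf'(t)-f(t)$; this vanishes for all admissible triples ($\lambda r_1+(1-\lambda)r_2=1$) only when $\psi(t)=a(t-1)$, i.e.\ only for the Kullback--Leibler family $f(t)=a(t\log t-t+1)$. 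For any other $f$ one finds $r_1,r_2$ with $\Phi'(1)\neq0$, so $\Phi<0$ on one side of $\mu=1$ --- and unlike the scale $u=p_1+p_2\le1$ of Proposition \ref{prop:superadditivity}, which stays on one side of the anchor point $1$ (this is precisely why the monotonicity argument closes there), the scale $\mu=(p_1+p_2)/(q_1+q_2)$ ranges over both sides of $1$. Concretely, take $f(t)=(t-1)^2$ (already normalized), $q=(\tfrac14,\tfrac14,\tfrac12)$, $p=(\tfrac14,0,\tfrac34)$, and merge the first two coordinates: the left side of \eqref{relsuperadditivitysymmetric} equals $\tfrac14\cdot0+\tfrac14\cdot1+\tfrac12\cdot\tfrac14=\tfrac38$, while the right side equals $\tfrac14+\tfrac14\cdot1=\tfrac12$. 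In your variables this is $\lambda=\tfrac12$, $r_1=2$, $r_2=0$, $\mu=\tfrac12$, giving $\Phi(\tfrac12)=-\tfrac14$.

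So the statement you were asked to prove fails as written beyond the Kullback--Leibler case, and your plan correctly isolates the exact point where it breaks. For comparison, the paper's own proof takes a different route: it establishes the identity $C_q(p)=C_{\Lambda_q q}(\Lambda_q p)$ and then declares the claim a corollary of Proposition \ref{prop:superadditivity} for the uniform prior $\Lambda_q q$. That reduction hides the same defect you ran into: the lifted cost is $\sum_\omega g(P(\omega))$ with $g(s)=\tfrac1\alpha f(\alpha s)$, and $g(1)=\tfrac1\alpha f(\alpha)\neq0$ in general, whereas the proof of Proposition \ref{prop:superadditivity} uses $g(1)=0$ at the step $F_v(1)=-g(1)=0$; the counterexample above shows this is not repairable. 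What does survive for all $f$-divergences is the weaker information monotonicity $C_{q(X)}(p(X))\le C_q(p)$; full superadditivity in the sense of \eqref{relsuperadditivitysymmetric} should be regarded as tied to the Kullback--Leibler divergence (where it holds with equality), not as a property of the whole class.
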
 

\begin{proof}

This is a simple corollary to Proposition \ref{prop:superadditivity}, after establishing the following interesting property of cost functions of the form \eqref{genDiv_app}:
\begin{equation} \label{Lambdainvariance}
C_q(p) = C_{\Lambda_q q}(\Lambda_q p)
\end{equation}

\noindent where $\Lambda_q$ denotes the representation mapping defined in Section \ref{sec:probDMprior} that maps $q$ to a uniform distribution on an elementary decision space $\tilde \Omega$ of $|\tilde \Omega| = \alpha$ elements, given by $(\Lambda_q p)(\omega) = \frac{1}{\alpha} \frac{p_i}{q_i} $ whenever $\omega\in A_i$, where $\{A_i\}_{i=1}^N$ is a disjoint partition of $\tilde \Omega$ such that $|A_i| = \alpha\, q_i$. Equation \eqref{Lambdainvariance} then follows from
\[
C_{\Lambda_q q}(\Lambda_q p) = \sum_{\omega\in\tilde \Omega} (\Lambda_q q)(\omega) \, f \left( \frac{(\Lambda_q p)(\omega)}{(\Lambda_q q)(\omega)} \right) \, =\, \frac{1}{\alpha} \sum_{i=1}^{N} \sum_{w\in A_i} f\left(\frac{p_i}{q_i} \right) = \sum_{i=1}^N q_i f\left(\frac{p_i}{q_i} \right) \, ,
\]

\noindent where we have used that $\sum_{w\in A_i} 1 = |A_i| = \alpha q_i$. Hence the case of a non-uniform prior reduces to the case of a uniform prior, which was shown in Proposition \ref{prop:superadditivity}.
\end{proof}

\medskip

\begin{Proposition}[Example \ref{ex:genDiv}: Characterization of Kullback-Leibler divergence]\label{prop:charDKL}
If $C_q$ is a continuous cost function relative to $q$ that is additive under coarse-graining, that is $C_q(Z) = C_q(X) + C_q(Y|X)$ in the notation of Proposition \ref{prop:superadditivityDiv}, then 
\begin{equation} \label{DKLuniqueadditivecost}
C_q(p) =  \alpha \, D_{KL}(p\|q)
\end{equation}

\noindent for some $\alpha \geq 0$, where $D_{KL}(p\|q)$ denotes the Kullback-Leibler divergence $D_{KL}(p\|q)= \sum_{i}p_i \log(p_i/q_i)$.
\end{Proposition}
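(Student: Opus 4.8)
The plan is to follow Hobson's modification \cite{Hobson1969} of Shannon's characterization \cite{Shannon1948}, whose pivotal ingredient here is that additivity alone forces the monotonicity property \eqref{uniformmonotonicity_main} for uniform distributions. Set $f(M,N)\coloneqq C_q(p)$, where $q$ is uniform on $N$ elements and $p$ is uniform on some $M\le N$ of them (then $p\ll q$ holds automatically, and $f(N,N)=C_q(q)=0$). First I would establish, using only additivity together with the fact that $C_q$ is a nonnegative cost function, the monotonicity property \eqref{uniformmonotonicity_main} together with the two identities
\begin{equation}\label{planfe}
f(Mk,Nk)=f(M,N)\,,\qquad f(rs,r's')=f(r,r')+f(s,s')
\end{equation}
for all positive integers $k$ and all $r\le r'$, $s\le s'$. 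Each follows by applying additivity to one explicit coarse-graining of a uniform distribution: for the second identity, partition the uniform prior on $r's'$ elements into $r'$ blocks of $s'$ elements and take as posterior the uniform distribution supported on the first $s$ elements of each of the first $r$ blocks, so that the coarse decision is $u_r$ relative to $u_{r'}$ (contributing $f(r,r')$) while every nontrivial conditional decision is $u_s$ relative to $u_{s'}$, with weights summing to one (contributing $f(s,s')$). The first identity in \eqref{planfe} is the degenerate case in which the within-block decision is trivial, and \eqref{uniformmonotonicity_main} follows by isolating the block carrying the support, whose coarse contribution is a nonnegative cost.

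Next I would solve for $f$ and then bootstrap to general distributions. Writing $g(N)\coloneqq f(1,N)$, the second identity in \eqref{planfe} gives complete additivity $g(r's')=g(r')+g(s')$, while combining $f(M,MN)=g(N)$ (the first identity with reduced pair $(1,N)$) with $f(M,MN)=f(1,M)+f(M,N)$ yields $f(M,N)=g(N)-g(M)$. Since $g$ is completely additive, nondecreasing by \eqref{uniformmonotonicity_main}, and $g(1)=0$, the standard Cauchy-type argument (or the assumed continuity of $C_q$) forces $g(N)=\alpha\log N$ for some $\alpha\ge 0$, hence $f(M,N)=\alpha\log(N/M)=\alpha\,D_{KL}(u_M\|u_N)$. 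To reach an arbitrary pair $(p,q)$ of probability vectors with rational entries and $p\ll q$, I would pick integers $N_0,M_0$ with $N_0q_i\in\mathbb N$, $M_0p_i\in\mathbb N$, and $M_0p_i\le N_0q_i$ for every $i$ — attainable by taking $N_0$ a sufficiently large multiple of a common denominator of the $q_i$ and $M_0$ a common denominator of the $p_i$ — and partition the uniform prior on $N_0$ elements into blocks $A_i$ of size $N_0q_i$, placing the uniform posterior on $M_0$ elements so that exactly $M_0p_i$ of its support points lie in $A_i$. Then the coarse decision has prior $q$ and posterior $p$, while the conditional decision inside $A_i$ is $u_{M_0p_i}$ relative to $u_{N_0q_i}$, so additivity yields
\begin{equation}\label{planmain}
f(M_0,N_0)=C_q(p)+\sum_{i:\,p_i>0}p_i\,f(M_0p_i,N_0q_i)\,.
\end{equation}
Substituting $f(M,N)=\alpha\log(N/M)$ and using $\sum_i p_i=1$ to cancel the $\log(N_0/M_0)$ terms, \eqref{planmain} collapses to $C_q(p)=\alpha\sum_i p_i\log(p_i/q_i)=\alpha\,D_{KL}(p\|q)$, and the continuity of $C_q$ and of $D_{KL}$ extends this from the dense set of rational pairs to all admissible $p,q$.

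I expect the main obstacle to be precisely the step from uniform distributions to the general case, i.e.\ arranging the coarse-graining behind \eqref{planmain}: one must verify that the integer constraints $M_0p_i\le N_0q_i$ can be satisfied simultaneously (this is where $p\ll q$ is used), handle the zero entries of $p$ so that $f$ is never evaluated outside the range $1\le M\le N$, and invoke relabeling-invariance of $C_q$ so that the placement of the support of the uniform posterior does not matter. A secondary point requiring care is the continuity extension, since $D_{KL}(p\|q)$ diverges at the boundary of its natural domain; the identity, and the density argument, should therefore be carried out on $\{p:\supp p=\supp q\}$ and its closure within $\{p\ll q\}$.
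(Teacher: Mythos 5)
Your proposal is correct and takes essentially the same route as the paper: both derive the monotonicity property \eqref{uniformmonotonicity_main} for uniform distributions from additivity together with nonnegativity of the cost, and then follow Hobson's modification of Shannon's axiomatic argument. The only difference is one of completeness --- the paper stops at \eqref{uniformmonotonicity_main} and cites Hobson for the remainder, whereas you carry that remainder out explicitly (the functional equations for $f(M,N)$, the reduction to a completely additive monotone $g$ with $g(N)=\alpha\log N$, and the coarse-graining identity that bootstraps from uniform distributions to general rational pairs $(p,q)$), and each of these steps checks out, including your derivation of the first inequality in \eqref{uniformmonotonicity_main} from $f(M,N)=f(M',N)+f(M,M')$ where the paper instead invokes monotonicity under majorization.
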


\begin{proof} First, we show that any relative cost function that is additive under coarse-graining satisfies \eqref{uniformmonotonicity_main}, the monotonicity property for uniform distributions: If $f(M,N)$ denotes the cost $C_{u_N}(u_M)$ of a uniform distribution $u_M$ over $M$ elements relative to a uniform distribution $u_N$ over $N\geq M$ elements, then
\begin{equation} \label{uniformmonotonicity}
\begin{array}{rl}
f(M',N) \leq f(M,N) &\ \forall M \leq M' \leq N \, ,\\[5pt]
f(M,N) \geq f(M,N') &\  \forall M \leq N' \leq N \, . 
\end{array}
\end{equation}

\noindent Once \eqref{uniformmonotonicity} has been established, then \eqref{DKLuniqueadditivecost} goes back to a result by Hobson \cite{Hobson1969} (see also \cite{Mattsson2002}), whose proof is a modification of Shannon's axiomatic characterization \cite{Shannon1948}. 

The first property in \eqref{uniformmonotonicity} actually is true for all relative cost functions: For $q=u_N$ with $N=|\Omega|$, we have $p'\prec_q p$ iff $p' \prec p$ and thus the first property follows from \eqref{precuniform}, and the same is true in the case when $N<|\Omega|$, since we always assume that $p',p$ are absolutely continuous with respect to $q$, which allows to redefine $\Omega$ to only contain the $N$ options covered by $q$. 

For the proof of the second property in \eqref{uniformmonotonicity}, we let the random variable $X$ indexing the partitions $E_1,E_2$ of $\Omega$, where $E_1$ denotes the support of $u_{N'}$ and $E_2=\Omega{\setminus} E_1$ its complement, and $Y$ representing the choice inside of the selected partition $E_i$ given $X=i$. Letting $q(Z) = u_N$, and $p(Z) = u_{N'}$, then it follows from 
addivity under coarse-graining that $C_{q(X)}(p(X)) = C_{u_N}(u_{N'}) = f(N',N)$, and letting $p(Z) = u_{M}$, we obtain 
\[
f(M,N) = C_{q(X)}(p(X)) + C(Y|X=1) = f(N',N) + f(M,N') \, ,
\]

\noindent since $p(X{=}1) = 1$, $p(X{=}2) = 0$, and $C(Y|X=1) = C_{u_{N'}}(u_M)$. Therefore it follows that $f(M,N)\geq f(M,N')$.
\end{proof}

\medskip

\begin{Proposition}[Prop.~\ref{brdmproc}] \label{prop:boundedoptDM}
If $(p_\beta)_{\beta\geq 0}$ is a family of bounded-optimal posteriors given by \eqref{eq:unconstrained} with cost function $C_q(p) = D_{KL}(p\|q)$, then $\beta$ is a resource parameter with respect to any cost function, in particular
\begin{equation}\label{BRprec}
q = p_0 \, \precnsim_q \, p_{\beta'} \, \precnsim_q \, p_{\beta} \qquad \forall \beta',\beta \ \ \text{with} \ \ \beta'<\beta \,.
\end{equation}
\end{Proposition}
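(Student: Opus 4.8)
The plan is to establish $p_{\beta'}\precnsim_q p_\beta$ for $\beta'<\beta$ by verifying the new partial-sum characterization $(v)$ of relative majorization from Theorem~\ref{thm:charrel}, and then to deduce the resource-parameter statement for an arbitrary cost function directly from the strict monotonicity in Definition~\ref{def:relcostfunctions}. We may assume $U$ is non-constant, since otherwise $p_\beta\equiv q$ for all $\beta$ and there is nothing to prove.

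The first step is to observe that the decreasing rearrangement occurring in Theorem~\ref{thm:charrel}$(v)$ does not depend on $\beta$. Indeed, for the bounded-optimal posteriors \eqref{boltzmann} we have $p_{\beta,i}/q_i = e^{\beta U(x_i)}/Z_\beta$, which is a strictly increasing function of $U(x_i)$; hence the ordering of the ratios $(p_{\beta,i}/q_i)_i$ is the ordering of $(U(x_i))_i$, uniformly in $\beta$. After relabelling the options so that $U(x_1)\ge\dots\ge U(x_N)$, all the arrows in characterization $(v)$ become the identity and $q_i^\downarrow=q_i$. Thus $p_{\beta'}\prec_q p_\beta$ is equivalent to
\[
g_{k,l}(\beta') \ \le \ g_{k,l}(\beta), \qquad g_{k,l}(\beta) \ \coloneqq \ \sum_{i=1}^{l-1} p_{\beta,i} \;+\; a_q(k,l)\,p_{\beta,l},
\]
for all admissible $(k,l)$ (with $1\le l\le N$ and $\alpha\sum_{i=1}^{l-1}q_i \le k \le \alpha\sum_{i=1}^{l}q_i$, which is exactly what makes $a_q(k,l)\in[0,1]$, and where $a_q(k,l)$ is independent of $\beta$). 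So it suffices to show that each $\beta\mapsto g_{k,l}(\beta)$ is nondecreasing.

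This is the analytic core, and it reduces to a one-line estimate. Writing $\bar U_\beta\coloneqq\mathbb E_{p_\beta}[U]$ and differentiating \eqref{boltzmann} gives $\tfrac{d}{d\beta}p_{\beta,i}=p_{\beta,i}\,(U(x_i)-\bar U_\beta)$, so with the coefficients $c_i\coloneqq1$ for $i<l$, $c_l\coloneqq a_q(k,l)$, $c_i\coloneqq0$ for $i>l$,
\[
\frac{d}{d\beta}\,g_{k,l}(\beta) \ = \ \sum_{i=1}^{N} c_i\,p_{\beta,i}\,U(x_i) \ - \ \Big(\sum_{i=1}^{N} c_i\,p_{\beta,i}\Big)\,\bar U_\beta .
\]
The sequence $(c_i)_i$ is nonincreasing (it drops from $1$ through $a_q(k,l)\in[0,1]$ to $0$) and $(U(x_i))_i$ is nonincreasing by the relabelling, so Chebyshev's sum inequality for the weights $(p_{\beta,i})_i$ applied to these comonotone sequences gives $\sum_i c_i p_{\beta,i}U(x_i) \ge \big(\sum_i c_i p_{\beta,i}\big)\bar U_\beta$, i.e.\ $\tfrac{d}{d\beta}g_{k,l}(\beta)\ge0$. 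This establishes $p_{\beta'}\prec_q p_\beta$ for $0\le\beta'<\beta$; the case $\beta'=0$ reads $q=p_0\prec_q p_\beta$, consistent with Proposition~\ref{priorminimal_main}.

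For the strictness I would use that $D_{KL}(\cdot\|q)$ is itself a cost function relative to $q$ (an $f$-divergence; Example~\ref{ex:genDiv}), so $q$-equivalent distributions have equal Kullback-Leibler divergence from $q$. A direct computation gives $D_{KL}(p_\beta\|q)=\beta\bar U_\beta-\log Z_\beta$ and $\tfrac{d}{d\beta}D_{KL}(p_\beta\|q)=\beta\,\mathrm{Var}_{p_\beta}(U)$, which is strictly positive for $\beta>0$ since $U$ is non-constant. Hence, for $0<\beta'<\beta$, $D_{KL}(p_{\beta'}\|q)<D_{KL}(p_\beta\|q)$, so $p_{\beta'}\not\sim_q p_\beta$; and for $\beta'=0$ we have $p_\beta\ne q$ with $q$ the unique minimal element, so again $p_\beta\not\prec_q q$. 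In all cases $p_{\beta'}\precnsim_q p_\beta$. Finally, for any cost function $C_q$, combining strict $\prec_q$-monotonicity (Definition~\ref{def:relcostfunctions}) with this non-equivalence upgrades $C_q(p_{\beta'})\le C_q(p_\beta)$ to a strict inequality, so $\beta\mapsto C_q(p_\beta)$ is strictly increasing and $\beta$ is a resource parameter with respect to $C_q$. I expect the main obstacle to be not the estimate --- which is just the fact that comonotone random variables are nonnegatively correlated --- but the bookkeeping around characterization $(v)$: checking carefully that the decreasing rearrangement is $\beta$-independent and that the admissibility of $(k,l)$ forces $a_q(k,l)\in[0,1]$.
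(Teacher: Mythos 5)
Your proof is correct and follows the same skeleton as the paper's: reduce to characterization $(v)$ of Theorem \ref{thm:charrel}, observe that the decreasing rearrangement of $(p_{\beta,i}/q_i)_i$ is induced by the ordering of $U$ and hence is $\beta$-independent, and differentiate the partial-sum functional in $\beta$ using $\partial_\beta p_{\beta,i}=p_{\beta,i}(U_i-\mathbb E_{p_\beta}[U])$. You diverge from the paper in two places, both soundly. First, for the key inequality $\tfrac{d}{d\beta}g_{k,l}(\beta)\ge 0$ the paper proves that the partial sums $\sum_{i=1}^k\partial_\beta p_{\beta,i}$ are nonnegative (via monotonicity in $k$ of the truncated averages $S_k$) and then handles the fractional coefficient $a_q(k,l)\in[0,1]$ on the last term by a separate elementary step; you package the whole thing as a single application of the weighted Chebyshev sum inequality to the comonotone nonincreasing sequences $(c_i)_i$ and $(U_i)_i$ with weights $(p_{\beta,i})_i$. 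These are two dressings of the same correlation inequality, but yours dispenses with the two-stage bookkeeping. Second, for strictness the paper argues via the first coordinate, claiming $\beta\mapsto Z_\beta^{-1}e^{\beta U_1}$ is monotonically increasing so that $p_\beta\prec_q p_{\beta'}$ would force $p_{\beta,1}\le p_{\beta',1}$; you instead compute $\tfrac{d}{d\beta}D_{KL}(p_\beta\|q)=\beta\,\mathrm{Var}_{p_\beta}(U)$ and use that $q$-equivalent distributions must have equal Kullback--Leibler cost. Your version has the advantage of being genuinely strict under the single hypothesis that $U$ is non-constant, whereas the paper only establishes the monotonicity of $Z_\beta^{-1}e^{\beta U_1}$ non-strictly. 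The one cosmetic caveat is that ``non-constant'' should be read as non-constant on $\supp q$, both for your initial reduction and for $\mathrm{Var}_{p_\beta}(U)>0$ to hold when $\beta>0$.
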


\begin{proof}
Part of the proof generalizes a result in \cite{Rossignoli2004} to the case of a non-uniform prior $q$, by making use of our new characterization $(v)$ of $\prec_q$, by which it suffices to show that $\beta \mapsto \sum_{i=1}^{l-1} (p_{\beta,i})^{\downarrow} + a_q(k,l) (p_{\beta,l})^{\downarrow}$ is an increasing function of $\beta$ for all $k,l$ specified in Theorem \ref{thm:charrel}. By \eqref{boltzmann} below, we have 
\[
\partial_\beta p_{\beta,i} = \partial_\beta \big(\tfrac{1}{Z_\beta} \, q_i \, e^{\beta U_i} \big) = p_{\beta,i}(U_i - \mathbb E_{p_\beta}[U]) \, ,
\] 

\noindent where $U_i$ are the decreasingly arranged utility values $U(x)$ for $x\in\Omega$, so that also $p_{\beta,i}/q_i$ is arranged decreasingly. From the ordering of the $U_i$, it is easy to see that
\[
\left(\sum_{i=1}^k p_{i} U_i + p_{k+1} U_{k+1}\right) \sum_{j=1}^k p_j \ \leq \ \sum_{i=1}^k p_i U_i \, \left( \sum_{j=1}^k p_j + p_{k+1} \right)
\]

\noindent with the notation $p_k\coloneqq p_{\beta,k}$, from which it follows that $S_k\coloneqq \sum_{i=1}^k \hat p_k U_i$ with $\hat p_k \coloneqq {p_k}/{\sum_{j=1}^kp_j}$, is monotonically decreasing in $k$ (with $S_N=\mathbb E_{p}[U]$), and therefore 
\[
\sum_{i=1}^k p_{i}(U_i - \mathbb E_{p}[U]) = \left(\sum_{i=1}^k \hat p_i U_i -\mathbb E_{p}[U] \right) \sum_{j=1}^k p_j = (S_k - S_N) \sum_{j=1}^k p_j  \ \geq \ 0
\] 

\noindent for all $k\leq N$. Hence, it suffices to show that \smash{$\sum_{i=1}^{l-1} x_i + t x_l \geq 0$} if \smash{$\sum_{i=1}^k x_i \geq 0$} for all $k$ and $t\in[0,1]$. If $x_l\geq 0$, there is nothing to show, and if $x_l<0$, we have \smash{$\sum_{i=1}^{l-1}x_i + t x_l \geq \sum_{i=1}^{l}x_i\geq 0$}, which completes the proof of $p_{\beta'}\prec_q p_\beta$. 

It remains to see that $p_\beta \not\prec_q p_{\beta'}$. This follows again from $(v)$ in Theorem \ref{thm:charrel}, more precisely from the requirement that $p_{\beta,1}\leq p_{\beta',1}$ if $p_\beta\prec_q p_{\beta'}$, after establishing that \smash{$\beta\mapsto Z_\beta^{-1}e^{\beta U_1}$} is monotonically increasing. For the latter, note that since the $U_i$ are ordered decreasingly, \smash{$\sum_{i=1}^N U_i q_i e^{\beta U_i} \leq U_1 \sum_{i=1}^N q_i e^{\beta U_i}$} from which follows that $\partial_\beta(Z_\beta^{-1}e^{\beta U_1}) \geq 0$, which completes the proof.
\end{proof}

\end{appendix}

\bigskip

\section*{Acknowledgement}

This study was funded by the European Research Council (ERC-StG-2015-ERC Starting Grant, Project ID: 678082, ``BRISC: Bounded Rationality in Sensorimotor Coordination'').


\begin{thebibliography}{100}

\bibitem{Acerbi2014}
Luigi Acerbi, Sethu Vijayakumar, and Daniel~M. Wolpert.
\newblock On the origins of suboptimality in human probabilistic inference.
\newblock {\em PLOS Computational Biology}, 10(6):1--23, 06 2014.

\bibitem{Aczel1969}
J.~Aczel.
\newblock On different characterizations of entropies.
\newblock In M.~Behara, K.~Krickeberg, and J.~Wolfowitz, editors, {\em
  Probability and Information Theory}, pages 1--11, Berlin, Heidelberg, 1969.
  Springer Berlin Heidelberg.

\bibitem{Aczel1975}
J.~Aczel and Z~Daroczy.
\newblock {\em On measures of information and their characterizations}.
\newblock Academic Press New York, 1975.

\bibitem{Aczel1974}
J.~Acz{\'e}l, B.~Forte, and C.~T. Ng.
\newblock Why the shannon and hartley entropies are 'natural'.
\newblock 6(1):131--146, 1974.

\bibitem{Amari2009}
S.~Amari.
\newblock {$\alpha$}-divergence is unique, belonging to both f-divergence and
  bregman divergence classes.
\newblock {\em IEEE Transactions on Information Theory}, 55(11):4925--4931,
  2009.

\bibitem{Amari2010}
S.~Amari and Andrzej Cichocki.
\newblock Information geometry of divergence functions.
\newblock {\em Bulletin of the Polish Academy of Sciences}, 58(1), 2010.

\bibitem{Amari2018}
S.~Amari, Ryo Karakida, and Masafumi Oizumi.
\newblock Information geometry connecting wasserstein distance and
  kullback--leibler divergence via the entropy-relaxed transportation problem.
\newblock {\em Information Geometry}, 1(1):13--37, Sep 2018.

\bibitem{Ando1989}
T.~Ando.
\newblock Majorization, doubly stochastic matrices, and comparison of
  eigenvalues.
\newblock {\em Linear Algebra and its Applications}, 118:163 -- 248, 1989.

\bibitem{Arimoto1972}
S.~Arimoto.
\newblock An algorithm for computing the capacity of arbitrary discrete
  memoryless channels.
\newblock {\em IEEE Transactions on Information Theory}, 18(1):14--20, Jan
  1972.

\bibitem{Arnold1987}
Barry~C. Arnold.
\newblock {\em Majorization and the Lorenz Order: A Brief Introduction}.
\newblock Springer New York, 1987.

\bibitem{Arnold2018}
Barry~C. Arnold and Jose~Maria Sarabia.
\newblock {\em Majorization and the Lorenz Order with Applications in Applied
  Mathematics and Economics}.
\newblock Springer International Publishing, 2018.

\bibitem{Aumann1997}
Robert~J. Aumann.
\newblock Rationality and bounded rationality.
\newblock {\em Games and Economic Behavior}, 21(1):2 -- 14, 1997.

\bibitem{Bhatia1997}
Rajendra Bhatia.
\newblock {\em Matrix Analysis}.
\newblock Springer New York, 1997.

\bibitem{Blahut1972}
Richard~E. Blahut.
\newblock Computation of channel capacity and rate-distortion functions.
\newblock {\em IEEE Transactions on Information Theory}, 18(4):460--473, Jul
  1972.

\bibitem{Brandao2013}
Fernando G. S.~L. Brand\~ao, Micha\l{} Horodecki, Jonathan Oppenheim, Joseph~M.
  Renes, and Robert~W. Spekkens.
\newblock Resource theory of quantum states out of thermal equilibrium.
\newblock {\em Phys. Rev. Lett.}, 111:250404, Dec 2013.

\bibitem{Burns2013}
Ethan Burns, Wheeler Ruml, and Minh~B. Do.
\newblock Heuristic search when time matters.
\newblock {\em Journal of Artificial Intelligence Research}, 47(1):697--740,
  May 2013.

\bibitem{Candeal2001}
Juan~C. Candeal, Juan~R. De~Miguel, Esteban Indur{\'a}in, and Ghanshyam~B.
  Mehta.
\newblock Utility and entropy.
\newblock {\em Economic Theory}, 17(1):233--238, Jan 2001.

\bibitem{Canosa2002}
N.~Canosa and R.~Rossignoli.
\newblock Generalized nonadditive entropies and quantum entanglement.
\newblock {\em Phys. Rev. Lett.}, 88:170401, Apr 2002.

\bibitem{Cohen1993}
Joel~E. Cohen, Yves Derriennic, and Gh. Zb\u{a}ganu.
\newblock Majorization, monotonicity of relative entropy, and stochastic
  matrices.
\newblock In {\em Doeblin and modern probability ({B}laubeuren, 1991)}, volume
  149 of {\em Contemp. Math.}, pages 251--259. Amer. Math. Soc., Providence,
  RI, 1993.

\bibitem{Csiszar1972}
I.~Csisz\'{a}r.
\newblock A class of measures of informativity of observation channels.
\newblock {\em Periodica Mathematica Hungarica}, 2(1):191--213, 1972.

\bibitem{Csiszar2008}
Imre Csisz{\'a}r.
\newblock Axiomatic characterizations of information measures.
\newblock {\em Entropy}, 10(3):261--273, 2008.

\bibitem{Csiszar1984}
Imre Csisz{\'a}r and G.~Tusn{\'a}dy.
\newblock Information geometry and alternating minimization procedures.
\newblock {\em Statistics and Decisions, Supplement Issue}, 1:205--237, 1984.

\bibitem{Dalton1920}
Hugh Dalton.
\newblock The measurement of the inequality of incomes.
\newblock {\em The Economic Journal}, 30(119):348--361, 1920.

\bibitem{Drugowitsch2012}
Jan Drugowitsch, Rub{\'e}n Moreno-Bote, Anne~K Churchland, Michael~N Shadlen,
  and Alexandre Pouget.
\newblock The cost of accumulating evidence in perceptual decision making.
\newblock {\em The Journal of neuroscience : the official journal of the
  Society for Neuroscience}, 32(11):3612--3628, 03 2012.

\bibitem{Everett1963}
Hugh Everett.
\newblock Generalized lagrange multiplier method for solving problems of
  optimum allocation of resources.
\newblock {\em Operations Research}, 11(3):399--417, 1963.

\bibitem{Faddeev1956}
D.~K. Faddeev.
\newblock On the concept of entropy of a finite probabilistic scheme.
\newblock {\em Uspehi Mat. Nauk (N.S.)}, 11(1(67)):227--231, 1956.

\bibitem{Frazier2007}
Peter~I. Frazier and Angela~J. Yu.
\newblock Sequential hypothesis testing under stochastic deadlines.
\newblock In {\em Proceedings of the 20th International Conference on Neural
  Information Processing Systems}, NIPS'07, pages 465--472, USA, 2007. Curran
  Associates Inc.

\bibitem{Friston2009}
Karl~J. Friston.
\newblock The free-energy principle: a rough guide to the brain?
\newblock {\em Trends in Cognitive Sciences}, 13(7):293 -- 301, 2009.

\bibitem{Genewein2015}
Tim Genewein, Felix Leibfried, Jordi Grau-Moya, and Daniel~Alexander Braun.
\newblock Bounded rationality, abstraction, and hierarchical decision-making:
  An information-theoretic optimality principle.
\newblock {\em Frontiers in Robotics and AI}, 2:27, 2015.

\bibitem{Gershman2015}
Samuel~J. Gershman, Eric~J. Horvitz, and Joshua~B. Tenenbaum.
\newblock Computational rationality: A converging paradigm for intelligence in
  brains, minds, and machines.
\newblock {\em Science}, 349(6245):273--278, 2015.

\bibitem{Gigerenzer2001}
Gerd Gigerenzer and Reinhard Selten.
\newblock {\em Bounded Rationality: The Adaptive Toolbox}.
\newblock MIT Press: Cambridge, MA, USA, 2001.

\bibitem{Gorban2010}
Alexander~N. Gorban, Pavel~A. Gorban, and George Judge.
\newblock Entropy: The markov ordering approach.
\newblock {\em Entropy}, 12(5):1145--1193, 2010.

\bibitem{Gottwald2018}
S.~Gottwald and Daniel~A. Braun.
\newblock Systems of bounded rational agents with information-theoretic
  constraints.
\newblock {\em Neural Computation}, (31):1--37, 2019.

\bibitem{Gour2015}
Gilad Gour, Markus~P. M{\"u}ller, Varun Narasimhachar, Robert~W. Spekkens, and
  Nicole~Yunger Halpern.
\newblock The resource theory of informational nonequilibrium in
  thermodynamics.
\newblock {\em Physics Reports}, 583:1 -- 58, 2015.

\bibitem{Moya2019}
Jordi Grau-Moya, Felix Leibfried, and Peter Vrancx.
\newblock Soft q-learning with mutual-information regularization.
\newblock In {\em International Conference on Learning Representations}, 2019.

\bibitem{HLP1934}
G.~H. Hardy, J.E. Littlewood, and G.~P\'{o}lya.
\newblock {\em Inequalities}.
\newblock Cambridge University Press, 1934.

\bibitem{HLP1929}
G.H. Hardy, J.E. Littlewood, and G.~P\'{o}lya.
\newblock Some simple inequalities satisfied by convex functions.
\newblock {\em Messenger of Mathematics}, 58:145--152, 1929.

\bibitem{Hihn2018}
Heinke Hihn, Sebastian Gottwald, and Daniel~A. Braun.
\newblock Bounded rational decision-making with adaptive neural network priors.
\newblock In Luca Pancioni, Friedhelm Schwenker, and Edmondo Trentin, editors,
  {\em Artificial Neural Networks in Pattern Recognition}, pages 213--225,
  Cham, 2018. Springer International Publishing.

\bibitem{Hinton1993}
Geoffrey~E. Hinton and Drew van Camp.
\newblock Keeping the neural networks simple by minimizing the description
  length of the weights.
\newblock In {\em Proceedings of the Sixth Annual Conference on Computational
  Learning Theory}, COLT '93, pages 5--13, New York, NY, USA, 1993. ACM.

\bibitem{Hobson1969}
Arthur Hobson.
\newblock A new theorem of information theory.
\newblock {\em Journal of Statistical Physics}, 1(3):383--391, Sep 1969.

\bibitem{Horodecki2013}
Micha{\l} Horodecki and Jonathan Oppenheim.
\newblock Fundamental limitations for quantum and nanoscale thermodynamics.
\newblock {\em Nature Communications}, 4, 06 2013.

\bibitem{Howes2009}
A.~Howes, R.~L. Lewis, and A.~Vera.
\newblock Rational adaptation under task and processing constraints:
  Implications for testing theories of cognition and action.
\newblock {\em Psychological Review}, 116(4):717--751, 2009.

\bibitem{Jiao2014}
J.~{Jiao}, T.~A. {Courtade}, A.~{No}, K.~{Venkat}, and T.~{Weissman}.
\newblock Information measures: The curious case of the binary alphabet.
\newblock {\em IEEE Transactions on Information Theory}, 60(12):7616--7626, Dec
  2014.

\bibitem{Joe1990}
Harry Joe.
\newblock Majorization and divergence.
\newblock {\em Journal of Mathematical Analysis and Applications}, 148(2):287
  -- 305, 1990.

\bibitem{Jones2003}
Bryan~D. Jones.
\newblock Bounded rationality and political science: Lessons from public
  administration and public policy.
\newblock {\em Journal of Public Administration Research and Theory: J-PART},
  13(4):395--412, 2003.

\bibitem{Kappen2012}
Hilbert~J. Kappen, Vicen{\c{c}} G{\'o}mez, and Manfred Opper.
\newblock Optimal control as a graphical model inference problem.
\newblock {\em Machine Learning}, 87(2):159--182, May 2012.

\bibitem{Karamata1932}
Jovan Karamata.
\newblock Sur une in\'{e}galit\'{e} relative aux fonctions convexes.
\newblock {\em Publications de l'Institut Math\'{e}matique}, 1(1):145--147,
  1932.

\bibitem{Kendall1964}
David~G. Kendall.
\newblock Functional equations in information theory.
\newblock {\em Zeitschrift f{\"u}r Wahrscheinlichkeitstheorie und Verwandte
  Gebiete}, 2(3):225--229, Jan 1964.

\bibitem{Keramati2011}
Mehdi Keramati, Amir Dezfouli, and Payam Piray.
\newblock Speed/accuracy trade-off between the habitual and the goal-directed
  processes.
\newblock {\em PLOS Computational Biology}, 7(5):1--21, 05 2011.

\bibitem{Keramati2016}
Mehdi Keramati, Peter Smittenaar, Raymond~J. Dolan, and Peter Dayan.
\newblock Adaptive integration of habits into depth-limited planning defines a
  habitual-goal{\textendash}directed spectrum.
\newblock {\em Proceedings of the National Academy of Sciences},
  113(45):12868--12873, 2016.

\bibitem{Khinchin1957}
Aleksandr~Yakovlevich Khinchin.
\newblock {\em Mathematical foundations of information theory}.
\newblock Dover books on advanced mathematics. Dover, New York, NY, 1957.

\bibitem{Kullback1951}
S.~Kullback and R.~A. Leibler.
\newblock On information and sufficiency.
\newblock {\em Ann. Math. Statist.}, 22(1):79--86, 03 1951.

\bibitem{Laming1968}
D.~R.~J. Laming.
\newblock {\em Information theory of choice-reaction times.}
\newblock Academic Press, Oxford, England, 1968.

\bibitem{Latif2017}
Naveed Latif, Dilda Pe{\v{c}}ari{\'{c}}, and Josip Pe{\v{c}}ari{\'{c}}.
\newblock Majorization, csisz{\'a}r divergence and zipf-mandelbrot law.
\newblock {\em Journal of Inequalities and Applications}, 2017(1):197, Aug
  2017.

\bibitem{Lee1964}
P.~M. Lee.
\newblock On the axioms of information theory.
\newblock {\em Ann. Math. Statist.}, 35(1):415--418, 03 1964.

\bibitem{Leibfried2018}
Felix Leibfried, Jordi Grau-Moya, and Haitham Bou-Ammar.
\newblock An information-theoretic optimality principle for deep reinforcement
  learning.
\newblock {\em preprint (arXiv:1708.01867v5)}, 2018.

\bibitem{Leinster2017}
Tom Leinster.
\newblock A short characterization of relative entropy.
\newblock {\em CoRR}, abs/1712.04903, 2017.

\bibitem{Lewis2014}
R.~L. Lewis, A.~Howes, and S.~Singh.
\newblock Computational rationality: Linking mechanism and behavior through
  bounded utility maximization.
\newblock {\em Topics in Cognitive Science}, 6(2):279--311, 2014.

\bibitem{Lipman1995}
Barton~L. Lipman.
\newblock Information processing and bounded rationality: A survey.
\newblock {\em The Canadian Journal of Economics / Revue canadienne
  d'Economique}, 28(1):42--67, 1995.

\bibitem{Lorenz1905}
M.~O. Lorenz.
\newblock Methods of measuring the concentration of wealth.
\newblock {\em Publications of the American Statistical Association},
  9(70):209--219, 1905.

\bibitem{Mackay1995}
David J.~C. MacKay.
\newblock Developments in probabilistic modelling with neural networks ---
  ensemble learning.
\newblock In Bert Kappen and Stan Gielen, editors, {\em Neural Networks:
  Artificial Intelligence and Industrial Applications}, pages 191--198, London,
  1995. Springer London.

\bibitem{Marshall2011}
Albert~W. Marshall, Ingram Olkin, and Barry~C. Arnold.
\newblock {\em Inequalities: Theory of Majorization and Its Applications}.
\newblock Springer New York, {2\textsuperscript{nd}} edition, 2011.

\bibitem{Mattsson2002}
Lars-G{\"o}ran Mattsson and J{\"o}rgen~W. Weibull.
\newblock Probabilistic choice and procedurally bounded rationality.
\newblock {\em Games and Economic Behavior}, 41(1):61 -- 78, 2002.

\bibitem{McKelvey1995}
Richard~D. McKelvey and Thomas~R. Palfrey.
\newblock Quantal response equilibria for normal form games.
\newblock {\em Games and Economic Behavior}, 10(1):6 -- 38, 1995.

\bibitem{Muirhead1902}
R.~F. Muirhead.
\newblock Some methods applicable to identities and inequalities of symmetric
  algebraic functions of n letters.
\newblock {\em Proceedings of the Edinburgh Mathematical Society}, 21:144--162,
  1902.

\bibitem{Ochs1995}
Jack Ochs.
\newblock Games with unique, mixed strategy equilibria: An experimental study.
\newblock {\em Games and Economic Behavior}, 10(1):202--217, 1995.

\bibitem{Ortega2014}
P.~A. {Ortega}, D.~A. {Braun}, and N.~{Tishby}.
\newblock Monte carlo methods for exact amp; efficient solution of the
  generalized optimality equations.
\newblock In {\em 2014 IEEE International Conference on Robotics and Automation
  (ICRA)}, pages 4322--4327, May 2014.

\bibitem{Ortega2010}
Pedro~A. Ortega and Daniel~A. Braun.
\newblock A conversion between utility and information.
\newblock In {\em 3rd Conference on Artificial General Intelligence
  (AGI-2010)}, Cham, Switzerland, 2010. Atlantis Press, Springer International
  Publishing.

\bibitem{Ortega2010b}
Pedro~A. Ortega and Daniel~A. Braun.
\newblock A minimum relative entropy principle for learning and acting.
\newblock {\em J. Artif. Int. Res.}, 38(1):475--511, May 2010.

\bibitem{Ortega2013}
Pedro~A. Ortega and Daniel~A. Braun.
\newblock Thermodynamics as a theory of decision-making with
  information-processing costs.
\newblock {\em Proceedings of the Royal Society of London A: Mathematical,
  Physical and Engineering Sciences}, 469(2153):20120683, 2013.

\bibitem{Ortega2016a}
Pedro~A. Ortega and A.~Stocker.
\newblock Human decision-making under limited time.
\newblock In {\em 30th Conference on Neural Information Processing Systems},
  2016.

\bibitem{Pecaric1992}
Josip~E. Pe{\v{c}}ari{\'{c}}, Frank Proschan, and Y.~L. Tong.
\newblock {\em Convex Functions, Partial Orderings, and Statistical
  Applications}.
\newblock Academic Press, 1992.

\bibitem{Pezzulo2013}
Giovanni Pezzulo, Francesco Rigoli, and Fabian Chersi.
\newblock The mixed instrumental controller: Using value of information to
  combine habitual choice and mental simulation.
\newblock {\em Frontiers in Psychology}, 4:92, 2013.

\bibitem{Pigou1912}
A.~C. Pigou.
\newblock {\em Wealth and Welfare}.
\newblock Macmillan New York, 1912.

\bibitem{Ratcliff1978}
Roger Ratcliff.
\newblock A theory of memory retrieval.
\newblock {\em Psychological Review}, 85(2):59--108, 1978.

\bibitem{Ratcliff2013}
Roger Ratcliff and Jeffrey~J. Starns.
\newblock Modeling confidence judgments, response times, and multiple choices
  in decision making: Recognition memory and motion discrimination.
\newblock {\em Psychological Review}, 120(3):697--719, 2013.

\bibitem{Renyi1961}
Alfred Renyi.
\newblock On measures of entropy and information.
\newblock In {\em Proceedings of the Fourth Berkeley Symposium on Mathematical
  Statistics and Probability, Volume 1: Contributions to the Theory of
  Statistics}, pages 547--561, Berkeley, Calif., 1961. University of California
  Press.

\bibitem{Rossignoli2004}
R.~Rossignoli and N.~Canosa.
\newblock Limit temperature for entanglement in generalized statistics.
\newblock {\em Physics Letters A}, 323(1):22 -- 28, 2004.

\bibitem{Ruch1976}
Ernst Ruch and Alden Mead.
\newblock The principle of increasing mixing character and some of its
  consequences.
\newblock {\em Theoretica chimica acta}, 41(2):95--117, Jun 1976.

\bibitem{Ruch1978}
Ernst Ruch, Rudolf Schranner, and Thomas~H. Seligman.
\newblock The mixing distance.
\newblock {\em The Journal of Chemical Physics}, 69(1):386--392, 1978.

\bibitem{Russel1995}
Stuart~J. Russell and Devika Subramanian.
\newblock Provably bounded-optimal agents.
\newblock {\em Journal of Artificial Intelligence Research}, 2(1):575--609,
  January 1995.

\bibitem{Schach2018}
Sonja Schach, Sebastian Gottwald, and Daniel~A. Braun.
\newblock Quantifying motor task performance by bounded rational decision
  theory.
\newblock {\em Frontiers in Neuroscience}, 12:932, 2018.

\bibitem{Schur1923}
Issai Schur.
\newblock \"{U}ber eine klasse von mittelbildungen mit anwendungen auf die
  determinanten-theorie.
\newblock {\em Sitzungsberichte der Berliner Mathematischen Gesellschaft},
  22:9--20, 1923.

\bibitem{Shadlen2006}
Michael Shadlen, T.~Hanks, Anne~K Churchland, R.~Kiani, and T.~Yang.
\newblock {\em Bayesian Brain: Probabilistic Approaches to Neural Coding},
  chapter The Speed and Accuracy of a Simple Perceptual Decision: A
  Mathematical Primer.
\newblock MIT Press, 2006.

\bibitem{Shannon1948}
C.~E. Shannon.
\newblock A mathematical theory of communication.
\newblock {\em The Bell System Technical Journal}, 27:379--423, 623--656, 1948.

\bibitem{Shannon1959}
C.~E. Shannon.
\newblock Coding theorems for a discrete source with a fidelity criterion.
\newblock {\em IRE International Convention Record}, 7:142--163, 1959.

\bibitem{Simon1955}
Herbert~A. Simon.
\newblock A behavioral model of rational choice.
\newblock {\em The Quarterly Journal of Economics}, 69(1):99--118, 1955.

\bibitem{Simon1982}
Herbert~A. Simon.
\newblock {\em Models of Bounded Rationality}.
\newblock MIT Press, 1982.

\bibitem{Sims2003}
Christopher~A. Sims.
\newblock Implications of rational inattention.
\newblock {\em Journal of Monetary Economics}, 50(3):665 -- 690, 2003.
\newblock Swiss National Bank/Study Center Gerzensee Conference on Monetary
  Policy under Incomplete Information.

\bibitem{Spiegler2011}
Rand Spiegler.
\newblock {\em Bounded Rationality and Industrial Organization}.
\newblock Oxford University Press: Oxford, 2011.

\bibitem{Still2009}
S.~Still.
\newblock Information-theoretic approach to interactive learning.
\newblock {\em EPL (Europhysics Letters)}, 85(2):28005, 2009.

\bibitem{Tishby2011}
Naftali Tishby and Daniel Polani.
\newblock {\em Information Theory of Decisions and Actions}, pages 601--636.
\newblock Springer New York, 2011.

\bibitem{Todorov2009}
Emanuel Todorov.
\newblock Efficient computation of optimal actions.
\newblock {\em Proceedings of the National Academy of Sciences},
  106(28):11478--11483, 2009.

\bibitem{Townsend1983}
James Townsend and F~Ashby.
\newblock {\em The Stochastic Modeling of Elementary Psychological Processes
  (Part 2)}.
\newblock Cambridge University Press, 01 1983.

\bibitem{Tsallis1988}
Constantino Tsallis.
\newblock Possible generalization of boltzmann-gibbs statistics.
\newblock {\em Journal of Statistical Physics}, 52(1):479--487, 1988.

\bibitem{Tverberg1958}
Helge Tverberg.
\newblock A new derivation of the information function.
\newblock {\em Mathematica Scandinavica}, 6:297--298, Dec. 1958.

\bibitem{Veinott1971}
Arthur~F. Veinott.
\newblock Least d-majorized network flows with inventory and statistical
  applications.
\newblock {\em Management Science}, 17(9):547--567, 1971.

\bibitem{Viejo2015}
Guillaume Viejo, Mehdi Khamassi, Andrea Brovelli, and Beno{\^\i}t Girard.
\newblock Modeling choice and reaction time during arbitrary visuomotor
  learning through the coordination of adaptive working memory and
  reinforcement learning.
\newblock {\em Frontiers in Behavioral Neuroscience}, 9:225, 2015.

\bibitem{Neumann1944}
John von Neumann and Oskar Morgenstern.
\newblock {\em Theory of Games and Economic Behavior}.
\newblock Princeton University Press: Princeton, NJ, USA, 1944.

\bibitem{Vul2014}
Edward Vul, Noah Goodman, Thomas~L. Griffiths, and Joshua~B. Tenenbaum.
\newblock One and done? optimal decisions from very few samples.
\newblock {\em Cognitive Science}, 38(4):599--637, 2014.

\bibitem{Wolpert2006}
David~H. Wolpert.
\newblock {\em Information Theory -- The Bridge Connecting Bounded Rational
  Game Theory and Statistical Physics}, pages 262--290.
\newblock Springer Berlin Heidelberg, 2006.

\end{thebibliography}

\end{document}